\documentclass{amsart}
\addtolength{\hoffset}{-2cm} 
\addtolength{\textwidth}{3cm} 
\usepackage{amsmath}
\usepackage{amssymb,amsthm}
\usepackage{graphicx}
\newtheorem{thm}{Theorem}[section]

\newtheorem{lem}[thm]{Lemma}

\theoremstyle{definition}
\newtheorem{as}[thm]{Assumption}
\newtheorem{defin}[thm]{Definition}
\newtheorem{rem}[thm]{Remark}

\numberwithin{equation}{section}

\title{Propagation Estimates for Two-cluster Scattering Channels of N-body Schr\"odinger Operators}
\author{Sohei Ashida}
\begin{document}
\maketitle
\begin{abstract}
In this paper we prove propagation estimates for two-cluster scattering channels of N-body Schr\"odinger operators. These estimates are based on the estimate similar to Mourre's commutator estimate and the method of Skibsted \cite{Sk}. We also obtain propagation estimates with better indices using projections onto almost invariant subspaces close to two-cluster scattering channels. As an application of these estimates we obtain the resolvent estimate for two-cluster scattering channels and microlocal propagation estimates in three-body problems without projections. Our method clearly illustrates evolution of the solutions of the Schr\"odinger equation.
\end{abstract}

\section{Introduction}\label{firstsec}
In this paper we prove propagation estimates for two-cluster scattering channels of N-body Sch\"odinger operators denoted by $H$ on C.M-configuration space $X$. The form of these estimates is
$$B(t)\pi_{\alpha}e^{-itH}f(H)\langle x\rangle^{-s'}=O(t^{-s})$$
as $t\to \infty$ in $\mathcal L(L^2(X))$. Here, $f\in C^{\infty}_0(\mathbb R)$, $0\leq s<\min\{s',\rho\}$ with $\rho$ being the index of the decay of the potentials, $\{B(t)\}_{t>0}$ is a family of pseudodifferential operators, $\pi_{\alpha}$ the projection to the eigenspace of two cluster scattering channel $\alpha$, and $X$ the configuration space of the relative position of the particles.

To investigate the large time asymptotics of the solution is one of the main problems of the study of Schr\"odinger equations. The complete classification of the asymptotic behavior called asymptotic completeness was proved for a large variety of potentials (see  e.g. Derezi\'nski-G\'erard \cite{DeGe}). Asymptotic completeness for $N$-body Schr\"odinger equations means that if the initial data is in the subspace of $L^2(X)$ corresponding to the continuous spectrum of $H$, then $N$ particles are decomposed for large times into clusters. Particles in the same cluster form a bound state and different clusters do not interact each other in the limit $t\to \pm\infty$.

To explain asymptotic completeness more accurately we need the notion of cluster decompositions and channels. A cluster decomposition of $N$ particles is a partition of the particles into a family of subsets of the particles which have no intersection with each other (see Section \ref{secondsec}). If $a$ is a cluster decomposition, $X_a$ and $X^a$ denote the subspace of $X$ corresponding to intercluster and internal coordinates respectively. The internal motion in the clusters in the cluster decomposition $a$ is described by the cluster Hamiltonian $H^a$ defined on $L^2(X^a)$. If we label the eigenfunctions of $H^a$ by integers, for a cluster decomposition $a$ and $m$-th eigenfunction, the pair $\alpha=(a,m)$ is called a channel. Asymptotic completeness for short-range potentials suggests that for any $u\in \mathcal H_{ac}$ where $\mathcal H_{ac}$ is a subspace of $L^2(X)$ corresponding to absolutely continuous subspace of $H$,
$$\lim_{t\to \infty}\left\lVert e^{-itH}u-\sum_{\alpha}(e^{it\Delta_a}v_{\alpha})\otimes(e^{-it\lambda_{\alpha}}\psi_{\alpha})\right\rVert=0.$$
Here $\alpha$ takes all the channels with two clusters or more, $\Delta_a$ is the Laplacian on $X_a$, $v_{\alpha}\in L^2(X_a)$ and $\psi_{\alpha}$ is the eigenfunction of $H^a$ corresponding to $\alpha$.

In the proof of asymptotic completeness propagation estimates play important roles. These are the estimates of decay of $\lVert B(t)e^{-itH}u\rVert$ for $u\in L^2(X)$ with respect to the time $t$. The operator $B(t)$ restricts the evolution $e^{-itH}u$ to some forbidden region in the phase space from the viewpoint of classical mechanics (see e.g. G\'erard \cite{Ge} and Skibsted \cite{Sk}).

The propagation estimates we obtain are concerned with the part of the evolution $e^{-itH}u$ asymptotically close to the evolution $(e^{it\Delta_a}v_{\alpha})\otimes(e^{-it\lambda_{\alpha}}\psi_{\alpha})$ with $\alpha=(a,m)$ and $a$ includes only two clusters.  Let $\Pi^a$ and $\Pi_a$ be the orthogonal projection onto $X^a$ and $X_a$ respectively. For any $x\in X$ we define inner coordinates $x^a:=\Pi^a x$ and intercluster coordinates $x_a:=\Pi_a x$. The minimal (resp., maximal velocity estimate Theorem \ref{main1} (resp., Theorem \ref{main4}) means that in the part of the evolution $e^{-itH}u$ as above, $x_a$ goes to infinity with kinetic energy not less (resp., more) than the sum of the total energy and energy obtained from the bound state of clusters. Theorem \ref{main2} means that the position vector and the momentum are parallel. The other estimates Lemma \ref{propagation1} and Lemma \ref{propagation2} have similar meanings. Our method is based on that of Skibsted \cite{Sk} and almost positivity of the commutator of the generator of the dilation in $X_a$ and the Hamiltonian in small energy interval and the range of $\pi_{\alpha}$ which works as Mourre estimate, where $\pi_{\alpha}$ is the projection onto the eigenspace of $H^a$ corresponding to an eigenvalue (see \eqref{myeq3.7.1}).

As an application of the propagation estimates we obtain a resolvent estimate for $\pi_{\alpha}R(\lambda+i0)$ where $R(\lambda+i0)$ is the boundary value of the resolvent of $H$ on the real axis. Resolvent estimates are related to the asymptotic behavior of the resolvent and important because in the many-body problem the scattering matrices which give correspondence of the data in $t\to -\infty$ to that in $t\to +\infty$ of the Schr\"odinger equation are related to the asymptotic behavior of $\pi_{\alpha}R(\lambda+i0)u,\ u\in L^2(X)$ as $\lvert x_a\rvert\to \infty$ (see e.g. Vasy \cite{Va} and also G\^atel-Yafaev \cite{GaYa}).

The resolvent estimate is a microlocalized version of the radiation estimate. In the two-body case the radiation estimates are the estimates that for $-1/2<s<1/2$, $u\in L^{2,s+1}$ and some differential operator $\gamma$ the following holds
$$\lVert \gamma R(\lambda+i0)u\rVert_{L^{2,s}(X)}<C\lVert u\rVert_{L^{2,s+1}(X)},$$
where $L^{2,l}:=\{u\in L^2\vert \langle x\rangle^{l}u\in L^2\}$.
The radiation estimates imply that the leading part of the resolvent as $\lvert x\rvert\to \infty$ is an asymptotically spherical wave which decays as $\lvert x\rvert^{-(n-1)/2}$ where $n$ is the dimension of $X$ (see e.g. Ikebe \cite{Ik}, Isozaki \cite{Is} and Sait\=o \cite{Sa}). In the many-body problem there are several kinds of radiation estimates and the microlocalized estimates in which $\gamma$ is replaced by some pseudodifferential operators with the symbols supported in the region 
$\{(x,\xi)\vert x\cdot \xi<(1-\epsilon)\lvert x\rvert \lvert \xi \rvert\},\ \epsilon>0$ in which the particle can not scatter radially (see e.g. Bommier \cite{Bo}, G\'erard-Isozaki-Skibsted \cite{GIS1,GIS2}, Herbst-Skibsted \cite{HeSk2}, Wang \cite{Wa1,Wa2,Wa3} and Yafaev \cite{Ya1,Ya2}). 

The propagation estimates and the resolvent estimates are related by Fourier-Laplace transform and Stone's formula (see e.g. Herbst-Skibsted \cite{HeSk} and Jensen-Mourre-Perry \cite{JMP}). In \cite{HeSk2} Herbst and Skibsted prove the radiation estimate for the free channel in which each cluster is a single particle. Their method is based on the propagation estimates for the time-dependent Schr\"odinger equation obtained in Skibsted \cite{Sk} and the method in Herbst-Skibsted \cite{HeSk} developed for the two-body problem. As for other channels, in Wang \cite{Wa2} and G\'erard-Isozaki-Skibsted \cite{GIS2} it is proved that for any cluster decomposition $a$ there exists $E>0$ such that for energy $\lambda\geq E$ the microlocalized estimates in the region where the clusters in $a$ are separated hold with the pseudodifferential operator $p(x_a,D_a)$ whose symbol supported in $\{(x_a,\xi_a)\vert x_a\cdot \xi_a<(1-\epsilon)\lvert x_a\rvert \lvert \xi_a \rvert\}$.

Our resolvent estimate for two-cluster channels is included in Wang \cite[Corollary 2.7]{Wa2} and his result is better than ours regarding the indices of the powers of $\langle x\rangle$, dependence of the bound on the energy and the assumption on the potentials. However, the proof of \cite[Corollary 2.7]{Wa2} is incorrect and the indices in the estimate is restricted. Our propagation estimates are better than that is obtained from his resolvent estimates with respect to the indices of time decay. In particular, only our method gives propagation estimates for long-range potentials. Since our proof is based on the propagation estimates, we can see how the estimate is obtained from the viewpoint of the evolution of the solution of time-dependent Schr\"odinger equation, and the propagation estimates themselves have physical meanings.

The content of this paper is as follows. In sec. 2 we introduce some notations and an assumption and state our main results. In sec. 3 and sec. 4 we prove the main results. In the Appendix we collect the results concerned with domains of operators.

\section{Some preliminaries and main results}\label{secondsec}
The N-body Schr\"odinger operator or Hamiltonian is written as
$$-\Delta+\sum_{i<j}V_{ij}(x_i-x_j),$$
on $L^2(X)$ where $x_i\in \mathbb R^{\nu},\ \nu\in \mathbb N\setminus \{0\}$ is the position of the $i$th particle and $X$ is the C.M-configuration space $\left\{x=(x_1,\dots, x_N)\vert x_i\in \mathbb R^{\nu},\sum_{i=1}^Nm_ix_i=0\right\}$ of $N$ particles with masses $m_i$. The inner product in $X$ is given by $x\cdot y=\sum_{i=1}^N2m_ix_i\cdot y_i$. The operator $-\Delta$ denotes the Laplacian in $X$ given by this metric.

A cluster $C$ is a subset of $\{1,2,\dotsm, N\}$, and a family of clusters
$$a=\{ C_1,C_2,\dotsm,C_m\},$$
is called a cluster decomposition if $C_i\cap C_j=\emptyset\ (i\neq j),\ \bigcup_{i=1}^{m}C_i=\{1,2,\dotsm, N\}$. Set for any cluster decomposition $a$
$$X_a=\{x\in X\vert x_i=x_j \ \mathrm{if}\  i,j\in C \ \mathrm{for\ some}\  C\in a\},$$
and denote the orthogonal complement of $X_a$ by $X^a$. $X_a$ (resp., $X^a$) is the intercluster (resp., internal) configuration space. The cluster decomposition
$$(1)\dotsm (\hat i)\dotsm (\hat j)\dotsm (N)(ij),$$
where $\hat\ $ indicates omission is denoted by $(ij)$. We denote by $\Pi^a$ and $\Pi_a$ the orthogonal projections of $X$ onto $X^a$ and $X_a$ respectively. We use the same notations $\Pi^a$ and $\Pi_a$ for the corresponding orthogonal projections of the dual space of $X$. We define for all $x\in X$, $x_a=\Pi_a x$ and $x^a=\Pi^a x$. We denote the differentiation in $\mathbb R^{N\nu}$ projected onto the dual space of $X$ by $\nabla$, and set $\nabla_a=\Pi_a\nabla$ and $\nabla^a=\Pi^a\nabla$. The operators $-\Delta_a$ and $-\Delta^a$ denote the Laplacian in $X_a$ and $X^a$ respectively. If $X_b\subset X_a$, we write $a\leq b$, otherwise $a\nleq b$. We define the cluster Hamiltonian
$$H^a:=-\Delta^a+\sum_{(ij)\leq a}V_{ij}(x_i-x_j).$$

We assume that potentials $V_{ij}(y)$ satisfies the following assumption.
\begin{as}\label{potentialas}
\begin{itemize}
\item[(1)] $V_{ij}(\Delta_y+1)^{-1}$ are compact.
\item[(2)] There exists a constant $R>0$ such that $V_{ij}(y)$ are smooth in the region $\lvert y\rvert >R$ and there exists a constant $\rho>0$ such that
$$\partial^{\alpha}_yV_{ij}(y)=O(\lvert y\rvert^{-\lvert \alpha\rvert-\rho}),\ \lvert y\rvert \to \infty,\ \forall \alpha\in\mathbb N^{\nu}.$$
\end{itemize}
\end{as}

We label the eigenfunctions of $H^a$ by integers $m$, and call pairs $\alpha=(a,m)$ channels. The number of clusters in a cluster decomposition $a$ is denoted by $\# a$.
Given $\epsilon>0$ , we denote by $\chi(x<-\epsilon)$ the smooth function such that $\chi(x<-\epsilon)=1$ for $x<-2\epsilon$ and $\chi(x<-\epsilon)=0$ for $x>-\epsilon$. We denote by $\mathcal H$ the Hilbert space $L^2(X)$.
One of our main theorems is the following propagation estimate.

\begin{thm}[minimal velocity estimate]\label{main1}
Suppose Assumption \ref{potentialas}. Let $a$ be a cluster decomposition such that $\# a=2$, $\lambda_{\alpha}$ be the discrete eigenvalue of $H^a$ corresponding to a channel $\alpha$. Then for any $E>\lambda_{\alpha},\ E\notin \mathcal T:=\bigcup_{\#a\geq 2} \sigma_{\mathrm{pp}}(H^a)$, $\epsilon>0$, any $f\in C_0^{\infty}(\mathbb R)$ supported in a sufficiently small neighborhood of $E$ and any $s,s'\in \mathbb R$ such that $0< s<\min\{s',\rho\}$,
\begin{equation}\label{myeq2.0.1}
\chi\left(\frac{x_a^2}{4t^2}-(E-\lambda_{\alpha})<-\epsilon\right)\pi_{\alpha}e^{-itH}f(H)\langle x\rangle^{-s'}=O(t^{-s}),
\end{equation}
in $\mathcal L(\mathcal H)$ as $t\to \infty$.
\end{thm}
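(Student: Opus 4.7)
The plan is to follow the propagation-observable method of Skibsted \cite{Sk}, using in place of the standard Mourre estimate the channel positivity that on $\operatorname{Ran}(\pi_\alpha f(H))$ the intercluster kinetic energy $p_a^2=-\Delta_a$ is essentially pinned to $E-\lambda_\alpha$. Indeed, writing $H=p_a^2+H^a+I_a$ with $I_a$ the intercluster interaction (which is $O(\langle x_a\rangle^{-\rho})$), and using $H^a\pi_\alpha=\lambda_\alpha\pi_\alpha$, one has
$$\pi_\alpha f(H)\bigl(p_a^2-(E-\lambda_\alpha)\bigr)\pi_\alpha f(H) = \pi_\alpha f(H)(H-\lambda_\alpha-I_a)\pi_\alpha f(H) - [\,\cdot\,,\cdot\,],$$
which is small provided $f$ is supported in a small enough neighborhood of $E$ and the commutator $[H,\pi_\alpha]=[I_a,\pi_\alpha]$ is absorbed as a short-range remainder (using exponential decay of $\psi_\alpha$ in $x^a$). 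The condition $E\notin\mathcal{T}$ ensures $f$ can be localized away from all other subsystem thresholds.

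Fix $\sigma\in(0,s)$ and pick $\varphi\in C_0^\infty(\mathbb{R})$ nonnegative, nonincreasing, with $\varphi\geq\chi(\,\cdot<-\epsilon)^2$ and $\sqrt{-\varphi'}\in C^\infty$. Set $r(t):=x_a^2/(4t^2)-(E-\lambda_\alpha)$ and introduce the propagation observable
$$\Phi(t):=t^{-2\sigma}\,f(H)\pi_\alpha\,\varphi(r(t))\,\pi_\alpha f(H).$$
A computation of the Heisenberg derivative $\mathbf{D}\Phi:=\partial_t\Phi+i[H,\Phi]$ yields
$$\mathbf{D}\Phi = -\frac{2\sigma}{t}\Phi + t^{-2\sigma}f(H)\pi_\alpha\!\left(\partial_t\varphi(r)+i[H,\varphi(r)]\right)\!\pi_\alpha f(H) + \mathrm{Err}(t),$$
where $\mathrm{Err}(t)$ gathers commutators with $\pi_\alpha f(H)$. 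The bracket in the middle equals $t^{-1}\varphi'(r)\bigl(A_a/t-x_a^2/(2t^2)\bigr)+O(t^{-1-\rho})$, with $A_a=\tfrac12(x_a\!\cdot\!p_a+p_a\!\cdot\!x_a)$. Completing the square via $(p_a-x_a/(2t))^2=p_a^2-A_a/t+x_a^2/(4t^2)+O(t^{-2})$ and inserting the channel positivity above, one rewrites the main part as a nonnegative term proportional to $t^{-1}\varphi'(r)\bigl(p_a^2-x_a^2/(4t^2)\bigr)$ on $\operatorname{Ran}\pi_\alpha f(H)$; since $\varphi'$ is supported where $r\leq-\epsilon$, this is bounded below by $\epsilon t^{-1}\,f(H)\pi_\alpha(-\varphi'(r))\pi_\alpha f(H)$ up to controllable errors.

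Integrating $\langle\Phi(t)e^{-itH}\psi,e^{-itH}\psi\rangle$ between $1$ and $T$ with $\psi:=\langle x\rangle^{-s'}\phi$ (and using that $\Phi(T)\geq 0$ and $\Phi(1)$ is bounded on weighted spaces) produces the integrated minimal velocity bound
$$\int_1^\infty t^{-2\sigma-1}\bigl\|\sqrt{-\varphi'(r(t))}\,\pi_\alpha e^{-itH}f(H)\langle x\rangle^{-s'}\phi\bigr\|^2dt \leq C\|\phi\|^2,$$
which is a Kato-smoothness form of the estimate at index $\sigma$. Passing from this to the pointwise bound $O(t^{-\sigma})$ is achieved by the standard derivative-trick (estimating $\partial_t$ of the inner integrand via $[H,\chi(\cdot)]$), and a bootstrap that feeds the pointwise bound back into $\mathrm{Err}(t)$ permits one to raise $\sigma$ up to any value strictly less than $\min\{s',\rho\}$, giving the theorem.

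The main obstacles I expect are, first, the treatment of $[H,\pi_\alpha]=[I_a,\pi_\alpha]$: these terms must be shown to give only short-range remainders of size $O(\langle x_a\rangle^{-\rho})$ on $\operatorname{Ran}\pi_\alpha$, which requires the exponential decay of $\psi_\alpha$ in the internal coordinate $x^a$ together with Assumption \ref{potentialas}; second, the long-range commutator $[I_a,\varphi(r)]$ only yields $O(t^{-1-\rho})$, which is precisely what forces the restriction $s<\rho$; and third, in the bootstrap one must carefully track how much $\langle x\rangle^{-s'}$-weight survives each iteration, since the derivative trick that turns integrated bounds into pointwise ones costs one power of $\langle x\rangle^{-1}$, explaining the constraint $s<s'$. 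All of these are handled by the method of \cite{Sk} adapted to the intercluster variable $x_a$ instead of the full $x$.
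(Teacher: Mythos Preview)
Your overall plan---Skibsted-style propagation observables, a channel-localized Mourre positivity on $\operatorname{Ran}(\pi_\alpha f(H))$, and a bootstrap constrained by the $O(\langle x\rangle^{-1-\rho})$ size of $[\tilde V_a,\pi_\alpha]$---matches the paper's strategy, and your identification of the three obstacles is accurate. But the central positivity step, the ``completing the square'' argument, does not close as written.

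Concretely: from $A_a/t-x_a^2/(2t^2)=\bigl(p_a^2-x_a^2/(4t^2)\bigr)-(p_a-x_a/(2t))^2$ you obtain, after sandwiching by $\varphi'(r)\le 0$,
\[
\varphi'(r)\Bigl(\tfrac{A_a}{t}-\tfrac{x_a^2}{2t^2}\Bigr)=\varphi'(r)\bigl(p_a^2-\tfrac{x_a^2}{4t^2}\bigr)\;+\;\bigl(-\varphi'(r)\bigr)(p_a-\tfrac{x_a}{2t})^2.
\]
The second summand is \emph{nonnegative}, so it contributes to $\mathbf D\Phi$ with the wrong sign: you get a \emph{lower} bound on the Heisenberg derivative, whereas to bound $\langle\Phi(T)\rangle$ or to extract a Kato integral you need an \emph{upper} bound (note also that $\langle\Phi(T)\rangle$ is not a priori bounded, so the inequality $\Phi(T)\ge 0$ alone does not let you discard it). There is no a priori reason for $(p_a-x_a/(2t))^2$ to be small on the evolved state in the region $\operatorname{supp}\varphi'$; controlling it is essentially equivalent to the statement you are trying to prove.

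The paper handles this by inserting an intermediate propagation estimate for the intercluster dilation generator. With $A(\tau)=\tfrac12(x_ap_a+p_ax_a)-2(E'-\lambda_\alpha)\tau$ for some $\lambda_\alpha<E'<E$, one has the clean Mourre-type commutator
\[
\pi_\alpha f_2(H)\,DA(\tau)\,f_2(H)\pi_\alpha=\pi_\alpha f_2(H)\bigl(2H-2\lambda_\alpha-2\tilde V_a-x_a\nabla_a\tilde V_a\bigr)f_2(H)\pi_\alpha\ge c>0
\]
up to $O(\langle x\rangle^{-\rho})$ errors, and the Skibsted machinery (the class $\mathcal F_{\beta,\gamma,\epsilon}$, the fourteen-term decomposition of the time derivative, induction in $\gamma$ and Gronwall) yields $\bigl(-A(\tau)/\tau\bigr)^{\tilde s}\chi(A(\tau)/\tau<-\epsilon)\pi_\alpha e^{-itH}f(H)\langle x\rangle^{-s'}=O(t^{-s})$. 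Only then does the paper pass to the position observable: with $M=(E''-x_a^2/(4\tau^2))^{1/2}$ and $A'(\tau)=h(-\tau M,\tau)$, the Heisenberg derivative $DA'(\tau)$ produces precisely a factor $\tfrac{A(\tau)}{\tau}\chi(A(\tau)/\tau<-\epsilon')$, which is now controlled by the previous step. In your scheme this corresponds to first proving that $A_a/t$ is bounded below on the propagated state before attacking $\varphi(r)$; without that input the sign of your main term is undetermined. Your ``derivative trick'' passage from integrated to pointwise bounds is also weaker than what is needed: the paper obtains pointwise decay directly from the $g_{\beta,\gamma,\epsilon}$ calculus and induction on $\gamma$, and this is where both constraints $s<s'$ and $s<\rho$ (the latter through the $E_2$-term carrying $[f_1(H)H,\pi_\alpha]$) appear sharply.
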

\begin{rem}
In general, the condition $s<\rho$ does not seem to be removable, because  of the interaction between scattering channels (see section \ref{concluding}).
\end{rem}

We introduce the following class of pseudodifferential operators:
Let $S^m_{l},\ m,l \in \mathbb R$, be the symbol class of $C^{\infty}(X_a\times X_a')$-functions $p(x_a,\xi_a)$ with
$$\lvert \partial_{x_a}^{\alpha}\partial_{\xi_a}^{\beta}p(x_a,\xi_a)\rvert\leq C_{\alpha,\beta}\langle x_a\rangle^{l-\lvert \alpha\rvert}\langle \xi_a\rangle^m,\ \forall (x_a,\xi_a)\in X_a\times X_a',\ \forall \alpha,\beta\in \mathbb N^{\mathrm{dim}X_a}.$$
The corresponding pseudodifferential operators are defined by
$$(p(x_a,D_a)\psi)(x_a)=(2\pi)^{-\nu \mathrm{dim}X_a}\int\int e^{i(x_a-y_a)\xi_a}p(x_a,\xi_a)\psi(y_a)dy_ad\xi_a.$$

\begin{thm}\label{main2}
Suppose the same assumption as in Theorem \ref{main1}, $p(x_a,\xi_a)\in S^0_0$ and that $\mathrm{supp}\ p\subset\{ (x_a,\xi_a)\vert x_a\cdot\xi_a<(1-\epsilon_1)\lvert x_a\rvert \lvert \xi_a\rvert\}$ for some $\epsilon_1>0$. Then for any $f\in C_0^\infty(\mathbb R^+)$ and any $l,s,s'\in \mathbb R$ such that $0\leq l< s<\min\{s',\rho\}$,
\begin{equation}\label{myeq2.0.0.1}
\langle x_a\rangle^lp(x_a,D_a)\pi_{\alpha}e^{-itH}f(H)\langle x\rangle^{-s'}=O(t^{-s+l}),
\end{equation}
in $\mathcal L(\mathcal H)$ as $t\to \infty$.
\end{thm}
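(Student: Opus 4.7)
The plan is to adapt the Skibsted-type propagation observable method used in the proof of Theorem \ref{main1}, exploiting that the support condition on $p$ places it in a classically forbidden region: on the range of $\pi_\alpha e^{-itH}$ the minimal velocity estimate already forces $x_a/t$ to be bounded below, and by the classical picture $x_a(t)\sim 2\xi_a t$, the cone $\{x_a\cdot\xi_a<(1-\epsilon_1)|x_a||\xi_a|\}$ is avoided for large $t$.

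First I would reduce to the case $l=0$. Combining the maximal velocity estimate (Theorem \ref{main4}) with the minimal velocity estimate (Theorem \ref{main1}) shows that on the range of $\pi_\alpha e^{-itH}f(H)\langle x\rangle^{-s'}$ the intercluster coordinate satisfies $|x_a|\sim t$ up to remainders of size $O(t^{-s})$. Writing
$$\langle x_a\rangle^l p(x_a,D_a) = t^l\bigl(t^{-l}\langle x_a\rangle^l\chi(|x_a|\le Ct)\bigr)p(x_a,D_a) + (\text{remainder}),$$
the prefactor in parentheses lies in $S^0_0$ uniformly in $t$ with the same support condition, and the remainder is controlled by Theorem \ref{main4}. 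The problem reduces to proving $\|\tilde p_t(x_a,D_a)\pi_\alpha e^{-itH}f(H)\langle x\rangle^{-s'}\| = O(t^{-s})$ uniformly in $t$ for a family of symbols $\tilde p_t\in S^0_0$ supported in $\{x_a\cdot\xi_a<(1-\epsilon_1)|x_a||\xi_a|\}$.

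For the reduced statement, introduce the propagation observable $\Phi(t) := t^{2s}\pi_\alpha q(x_a,D_a)\pi_\alpha$, where $q\in S^0_0$ is nonnegative, with $q\ge |\tilde p_t|^2$ modulo G\aa{}rding lower-order terms, supported inside the cone $\{x_a\cdot\xi_a<(1-\epsilon_1/2)|x_a||\xi_a|\}\cap\{|x_a|\ge ct\}$, and chosen so that $2\xi_a\cdot\nabla_{x_a}q\le 0$ (nonincreasing along classical free trajectories). Since $\pi_\alpha$ acts only on $x^a$ and $q$ only on $x_a$, $[\pi_\alpha,q(x_a,D_a)]=0$ exactly. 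Decomposing $H = -\Delta_a + H^a + I_a$ with $I_a = \sum_{(ij)\nleq a}V_{ij}(x_i-x_j)$ and using $[\pi_\alpha,H^a]=0$, the Heisenberg derivative becomes
$$\mathbf{D}\Phi = 2st^{2s-1}\pi_\alpha q\pi_\alpha + t^{2s}\pi_\alpha\bigl(i[-\Delta_a,q]+i[I_a,q]\bigr)\pi_\alpha + t^{2s}\bigl(i[I_a,\pi_\alpha]q\pi_\alpha+\pi_\alpha q i[I_a,\pi_\alpha]\bigr).$$
The principal symbol of $i[-\Delta_a,q]$ is $2\xi_a\cdot\nabla_{x_a}q\le 0$; a sharp G\aa{}rding argument combined with the energy localization by $f(H)$ and the Mourre-type positivity \eqref{myeq3.7.1} on $\mathrm{Ran}\,\pi_\alpha$ yields $\pi_\alpha i[-\Delta_a,q]\pi_\alpha\le -c t^{-1}\pi_\alpha|\tilde p_t|^2\pi_\alpha+(\text{lower order})$, which dominates the positive $2st^{2s-1}\pi_\alpha q\pi_\alpha$ contribution. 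Integrating $\tfrac{d}{dt}\langle\Phi(t)\psi(t),\psi(t)\rangle$ along $\psi(t)=e^{-itH}f(H)\langle x\rangle^{-s'}u$ as in Skibsted \cite{Sk} gives a uniform bound on $\langle\Phi(t)\psi,\psi\rangle$, hence $\|\tilde p_t\pi_\alpha\psi(t)\|^2=O(t^{-2s})$.

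The principal obstacle is the long-range commutator $[I_a,q]$ (and similarly $[I_a,\pi_\alpha]$). On the support of $q$ each $V_{ij}$ with $(ij)\nleq a$ decays only like $|x_i-x_j|^{-\rho}\sim t^{-\rho}$ by Assumption \ref{potentialas}, so the corresponding contribution to $\mathbf{D}\Phi$ is $O(t^{2s-1-\rho})$; integrability in $t$ requires $2s-1-\rho<-1$, i.e.\ $s<\rho$, which is precisely the stated restriction and cannot be improved by this method. Secondary technical issues — verifying the symbolic calculus for compositions $\pi_\alpha q(x_a,D_a)f(H)$ with $t$-dependent symbols, controlling the remainder from the G\aa{}rding bound, and absorbing the $|x_a|\ge ct$ cutoff into the minimal velocity estimate — follow the scheme of \cite{Sk} together with the domain results collected in the Appendix.
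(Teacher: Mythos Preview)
Your reduction to $l=0$ is in the right spirit, but invoking Theorem~\ref{main4} is circular: in the paper its proof is modeled on that of Theorem~\ref{main2}. The paper instead uses Skibsted's maximal velocity bound Lemma~\ref{lem4.0}, which is already available and suffices.

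The substantive gap is the positivity step. The claim that sharp G{\aa}rding together with \eqref{myeq3.7.1} gives $\pi_\alpha i[-\Delta_a,q]\pi_\alpha\le -ct^{-1}\pi_\alpha|\tilde p_t|^2\pi_\alpha$ and that this ``dominates'' $2st^{2s-1}\pi_\alpha q\pi_\alpha$ is not established. The Mourre-type identity \eqref{myeq3.7.1} is about $i[H,A(\tau)]$ with the dilation generator, not about $i[-\Delta_a,q]$ for a generic cone-supported $q$; and even granting $2\xi_a\cdot\nabla_{x_a}q\le 0$, that yields only nonpositivity of the leading commutator, not strict domination of $+2st^{-1}q$ (note you took $q\ge|\tilde p_t|^2$, so a bound by $-ct^{-1}|\tilde p_t|^2$ goes the wrong way). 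Without an inductive or Gronwall-type closing mechanism---which your sketch does not supply---one cannot reach large $s$ this way. There is also an arithmetic slip: $2s-1-\rho<-1$ is equivalent to $s<\rho/2$, not $s<\rho$, so your direct integrability argument would at best give half the claimed range.

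The paper's route is different in an essential way. It first constructs the specific observable $\tilde A(\tau)$ of \eqref{myeq4.0}--\eqref{myeq4.0.1}, built from $A_0(\tau)=\tfrac{\tau}{2}\bigl\{\tfrac{x_a}{\tau}(p_a-\tfrac{x_a}{2\tau})+(p_a-\tfrac{x_a}{2\tau})\tfrac{x_a}{\tau}\bigr\}$ sandwiched by spatial and momentum cutoffs, and proves a full propagation estimate for it (Lemma~\ref{propagation2}) via the Skibsted machinery: the functions $g_{\beta,\gamma,\epsilon}$, induction on $\gamma$, and a Gronwall argument for the $[H,\pi_\alpha]$ error term exactly as in Lemma~\ref{propagation1}. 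The crucial point is that $DA_0(\tau)=4(p_a-x_a/2\tau)^2-2x_a\nabla_a\tilde V_a$ has a manifest sign (cf.\ \eqref{myeq4.41}); this replaces the unproved domination in your sketch. Only afterwards does the paper return to $p(x_a,D_a)$: after localizing $x_a$ between the minimal and maximal velocity shells and $\xi_a$ near a fixed $\xi_0$ with $|\xi_0|^2=E-\lambda_\alpha$, a geometric computation (\eqref{myeq4.9}--\eqref{myeq4.16.1}) shows that on this region $-\tilde A(\tau)/\tau$ is bounded below by a fixed positive constant, so commuting a factor $(-\tilde A(\tau)/\tau)^{\tilde s}\chi(\tilde A(\tau)/\tau<-\epsilon')$ through and applying Lemma~\ref{propagation2} finishes the proof. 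Your one-step attack with $\Phi(t)=t^{2s}\pi_\alpha q\pi_\alpha$ bypasses this two-stage structure, and it is precisely the intermediate observable with a genuinely positive Heisenberg derivative---together with the induction/Gronwall scheme that turns $s<\rho/2$ into $s<\rho$---that your argument is missing.
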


\begin{thm}[maximal velocity estimate]\label{main4}
Suppose the same assumption as in Theorem \ref{main1}. Then for any $f\in C_0^\infty(\mathbb R)$ supported in a sufficiently small neighborhood of $E$ and any $l,s,s'\in \mathbb R$ such that $0\leq l< s<\min\{s',\rho\}$,
\begin{equation}\label{myeq2.0.0.2}
\langle x_a\rangle^l\chi\left(E-\lambda_{\alpha}-\frac{x_a^2}{4t^2}<-\epsilon\right)\pi_{\alpha}e^{-itH}f(H)\langle x\rangle^{-s'}=O(t^{-s+l}),
\end{equation}
in $\mathcal L(\mathcal H)$ as $t\to \infty$.
\end{thm}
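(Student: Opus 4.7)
The plan is to run the Skibsted-type propagation observable argument used for Theorem~\ref{main1} in the complementary (outgoing) region, with the weight $\langle x_a\rangle^{2l}$ inserted. Set $\psi(t) := \pi_\alpha e^{-itH}f(H)\langle x\rangle^{-s'}u$ and $y := x_a^2/(4t^2)$. Choose a nondecreasing $F\in C^\infty(\mathbb R)$ with $F\equiv 0$ on $(-\infty,E-\lambda_\alpha+\epsilon/2]$ and $F(y)\geq \chi(E-\lambda_\alpha-y<-\epsilon)^2$, and take the propagation observable
$$\Phi(t) := \langle x_a\rangle^{2l}F(y).$$
Since $\pi_\alpha$ depends only on $x^a$, one has $[\pi_\alpha,\Phi] = [\pi_\alpha,-\Delta_a] = 0$, and from $\pi_\alpha H = (p_a^2+\lambda_\alpha)\pi_\alpha+\pi_\alpha I_a$ (where $I_a := \sum_{(ij)\not\leq a}V_{ij}$) the evolution of $\psi$ on $\mathrm{Ran}(\pi_\alpha)$ is governed by the effective Hamiltonian $p_a^2+\lambda_\alpha+\pi_\alpha I_a\pi_\alpha$ up to the off-diagonal coupling through $\pi_\alpha I_a(1-\pi_\alpha)e^{-itH}f(H)\langle x\rangle^{-s'}$.

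Setting $L := p_a-x_a/(2t)$, a symbol-calculus computation of the Heisenberg derivative $D\Phi = \partial_t\Phi+i[-\Delta_a,\Phi]$ yields, up to symmetrization and lower-order terms supported where $F'\neq 0$,
$$D\Phi = \tfrac{F'(y)}{t}\langle x_a\rangle^{2l}\bigl(p_a^2-y-L^*L\bigr) + 2l\langle x_a\rangle^{2l-2}\bigl(x_a\cdot p_a+p_a\cdot x_a\bigr)F(y).$$
On $\mathrm{Ran}(\pi_\alpha f(H))$, the identity $p_a^2 = H-\lambda_\alpha-I_a$ gives $p_a^2 = E-\lambda_\alpha+O(|x_a|^{-\rho})$ provided $\mathrm{supp}\,f$ is sufficiently close to $E$; on $\mathrm{supp}\,F'$, $y\geq E-\lambda_\alpha+\epsilon/2$. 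Hence the first summand is bounded above by
$$-\tfrac{\epsilon}{2t}\langle x_a\rangle^{2l}F'(y)-\tfrac{1}{t}L^*\langle x_a\rangle^{2l}F'(y)L$$
plus integrable $O(t^{-1-\rho})$ remainders, yielding the dissipative structure needed.

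The decay $O(t^{-s+l})$ is extracted by induction on $l$. For the base $l=0$, integration of $\langle\psi,D\Phi\,\psi\rangle$ from $1$ to $t$ produces the smoothness-type estimate $\int_1^\infty s^{-1}\langle\psi(s),F'(y)\psi(s)\rangle\,ds<\infty$, and by a bootstrapping argument using the monotonicity of $\langle\psi,F(y)\psi\rangle$ modulo integrable error, together with prior propagation bounds on $e^{-itH}f(H)\langle x\rangle^{-s'}$ to control the coupling, one upgrades to the pointwise bound $\langle\psi(t),F(y)\psi(t)\rangle = O(t^{-2s})$. For $l\geq 1$, the second summand of $D\Phi$ is of order $t^{2l-1}F$ on $\mathrm{supp}\,F$ (where $|x_a|\sim t$), and is absorbed into the principal negative term by invoking the inductive hypothesis for exponent $l-1$; Gr\"onwall's inequality then produces $\langle\psi(t),\Phi(t)\psi(t)\rangle = O(t^{2l-2s})$. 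Since $\chi^2\leq F$, the stated operator bound follows.

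The main technical obstacle is the off-diagonal coupling $R(t) := 2\,\mathrm{Im}\,\langle\psi,\Phi\,\pi_\alpha I_a(1-\pi_\alpha)e^{-itH}f(H)\langle x\rangle^{-s'}u\rangle$, which is not captured by $D\Phi$ and arises because $[\pi_\alpha,I_a]\neq 0$. By Assumption~\ref{potentialas}, $I_a = O(|x_a|^{-\rho})$; on $\mathrm{supp}\,F$, $|x_a|\gtrsim t$, so $R(t)$ is $O(t^{-\rho})$ times a bounded factor, which is precisely the source of the restriction $s<\rho$ in the theorem. Controlling $(1-\pi_\alpha)e^{-itH}f(H)\langle x\rangle^{-s'}$ uniformly in $t$ via the propagation estimates already developed in the paper is the principal remaining technical step, and parallels the corresponding step in Theorem~\ref{main1}.
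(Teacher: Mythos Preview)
Your approach diverges substantially from the paper's, and as written it does not close. The paper does \emph{not} carry the weight $\langle x_a\rangle^{2l}$ through a Heisenberg-derivative argument with induction on $l$. Instead it first disposes of $l$ in one line via Lemma~\ref{lem4.0} (the maximal-velocity estimate in the full variable $x$), reducing to $l=0$; then it mimics the proof of Theorem~\ref{main2}: cover phase space by cutoffs $\chi_1(x_a)\chi_2(\xi_a)$, invoke the auxiliary observable $\tilde A(\tau)$ of \eqref{myeq4.0.1}, derive an inequality of the type \eqref{myeq4.16.1}, and finish by the propagation estimate for $\tilde A(\tau)$ (Lemma~\ref{propagation2}). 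The positivity that drives the argument is the algebraic inequality $\frac{x_a}{\tau}\cdot(\frac{x_a}{2\tau}-\xi_0)\geq c>0$ on the region $|x_a|^2/4\tau^2>E-\lambda_\alpha+\epsilon$, $\xi_0^2=E-\lambda_\alpha$; no induction in $l$ is needed.

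The genuine gap in your argument is the passage from differential inequalities to the decay rate $O(t^{2l-2s})$. In the base case $l=0$ you obtain that $\langle\psi,F(y)\psi\rangle$ is nonincreasing modulo integrable error; this yields only \emph{boundedness}, and the claimed ``bootstrapping'' to $O(t^{-2s})$ is not a step one can perform without inserting explicit $t$-powers into the observable (this is exactly why the paper, following Skibsted, works with the two-parameter family $g_{\beta,\gamma,\epsilon}$ and runs an induction in $\gamma$). In the inductive step the second summand $2l\langle x_a\rangle^{2l-2}(x_a\!\cdot\!p_a+p_a\!\cdot\!x_a)F(y)$ is \emph{positive} on the outgoing region and is supported on $\mathrm{supp}\,F$, not on $\mathrm{supp}\,F'$, so it cannot be absorbed by your negative principal term. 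On $\mathrm{supp}\,F$ one has $|x_a|\sim t$, so this term is $\sim Ct^{-1}\Phi$; feeding that into Gr\"onwall gives polynomial \emph{growth}, not decay. Even if you invoke the $l-1$ hypothesis to bound the expectation by $O(t^{2l-1-2s})$, integration from $1$ to $t$ of an integrable right-hand side again yields only $O(1)$. To make your strategy work you would have to build the $t$-weight into $\Phi$ from the outset (essentially re-deriving the $g_{\beta,\gamma,\epsilon}$ machinery), at which point the argument becomes much closer to the paper's.
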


Let $\pi$ be a projection corresponding to a finite set $\sigma$ of isolated eigenvalues and $\Phi\in C_0^{\infty}(\mathbb R)$ be a function such that $\Phi=1$ near $E$. Then we can construct the projection $\Pi$ onto the almost invariant subspace close to $\mathrm{Ran}\, \pi$.

\begin{thm}\label{projection}
Suppose the same assumption as in Theorem \ref{main1}. Then there exists a projection $\Pi$ satisfying the following conditions.
\begin{itemize}
\item[(1)]For any $r,r',u,u' \in \mathbb R$ and any $f'\in C_0^{\infty}(\mathbb R)$ such that $f'\Phi=f'$, there exists $C_{r,r',u,u'}>0$ such that
\begin{equation}\label{myeq2.0}
\begin{split}
\lVert\langle D\rangle^u\langle x\rangle^rf'(H)&[\Pi,H]\langle x\rangle^{r'}\langle D\rangle^{u'}\rVert_{\mathcal L(\mathcal H)}\\
&+\lVert\langle D\rangle^u\langle x\rangle^r[\Pi,H]f'(H)\langle x\rangle^{r'}\langle D\rangle^{u'}\rVert_{\mathcal L(\mathcal H)}<C_{r,r',u,u'},
\end{split}
\end{equation}
\item[(2)] For any $r,r'\in \mathbb R$ satisfying $r+r'<1+\rho$ there exists $C_{r,r'}'>0$ such that
\begin{equation}\label{myeq2.1}
\lVert \langle x\rangle^r(\Pi-\pi)\langle x\rangle^{r'}\rVert_{\mathcal L(\mathcal H)}<C_{r,r'}'.
\end{equation}
\item[(3)] For any $r,r'\in \mathbb R$ there exists $\tilde C_{r,r'}>0$ such that
\begin{equation}\label{myeq2.1.1}
\lVert \langle x^a\rangle^r\Pi\langle x^a\rangle^{r'}\rVert_{\mathcal L(\mathcal H)}<\tilde C_{r,r'}.
\end{equation}
\end{itemize}
\end{thm}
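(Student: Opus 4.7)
The plan is to construct $\Pi$ as a unitary rotation of $\pi$, namely $\Pi := e^{-iS}\pi e^{iS}$ for a self-adjoint $S$ built iteratively from the intercluster potential $I^a := \sum_{(ij)\nleq a}V_{ij}(x_i-x_j)$, which is the sole obstruction to $\pi$ commuting with $H$. The key structural remark is that $\pi$, viewed on $\mathcal{H} = L^2(X_a)\otimes L^2(X^a)$ as $I_{X_a}\otimes\pi$, acts only in the $x^a$-variables, hence commutes with both $H^a$ and $-\Delta_a$; therefore $[\pi,H]=[\pi,I^a]$. Moreover, $I^a$ decays at rate $\rho$ in $|x_a|$, and on $\operatorname{Ran}\pi$ the rapid decay of eigenfunctions of $H^a$ in $x^a$ effectively converts this into $\langle x_a\rangle^{-\rho}$-decay.

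Step 1 (Sylvester generator). Since $\sigma$ is a finite set of isolated eigenvalues of $H^a$ of finite total multiplicity, I pick a positively oriented simple closed contour $\Gamma\subset\rho(H^a)$ enclosing $\sigma$ exactly once and separated from $\sigma(H^a)\setminus\sigma$, so that $\pi=\frac{1}{2\pi i}\oint_\Gamma(z-H^a)^{-1}\,dz$. I then solve the Sylvester-type equation $[S^{(1)},H^a]=-i[\pi,I^a]$ with $S^{(1)}$ off-diagonal with respect to $\pi$, via the contour-integral formula
\begin{equation*}
S^{(1)} := \frac{1}{2\pi i}\oint_\Gamma(z-H^a)^{-1}\,[\pi,I^a]\,(z-H^a)^{-1}\,dz,
\end{equation*}
and iterate to produce $S = S^{(1)}+\cdots+S^{(N)}$ such that $e^{iS}He^{-iS}$ is block-diagonal with respect to $\pi$ modulo terms containing at least $N+1$ factors of $I^a$. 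Setting $\Pi := e^{-iS}\pi e^{iS}$ gives a projection automatically.

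Step 2 (Verification). Property (1) is obtained by taking $N$ large enough that the remainder in $[\Pi,H]f'(H) = e^{-iS}[\pi,e^{iS}He^{-iS}]e^{iS}f'(H)$, after localization by $f'(H)\Phi(H)=f'(H)$, absorbs all specified position and momentum weights $\langle x\rangle^r\langle D\rangle^u$; the momentum weights are commuted through $\Phi(H)$, $f'(H)$, and the resolvents $(z-H^a)^{-1}$ by the Helffer--Sj\"ostrand functional calculus. Property (2) follows from $\Pi-\pi = -i[S^{(1)},\pi]+O(\|I^a\|^2)$: the single factor of $I^a$ in $S^{(1)}$ provides decay $\langle x_a\rangle^{-\rho}$, and the surrounding reduced resolvents $(H^a-\lambda)^{-1}|_{\operatorname{Ran}(I-\pi_\lambda)}$ provide one additional order of $\langle x\rangle$-smoothing, yielding the budget $r+r'<1+\rho$. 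Property (3) follows from the Combes--Thomas exponential decay in $x^a$ of $\pi$ and of $(z-H^a)^{-1}$ for $z\in\Gamma$, which is transmitted through $S$ to $\Pi$.

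The main obstacle is verifying Property (1) uniformly across all four weight parameters $(r,r',u,u')$ with a single $\Pi$ independent of them; this forces either the iteration to converge in a suitable weighted operator topology, or a fixed truncation $N$ to suffice by combining the $I^a$-decay in the remainder with arbitrary-order smoothing provided by $f'(H)\Phi(H)$. Tracking the momentum weights $\langle D\rangle^u$ through the $N$-body pseudodifferential operators, the spectral cutoffs, and the iterated resolvents while preserving the $I^a$-based decay structure is the most delicate piece of bookkeeping.
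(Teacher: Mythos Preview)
Your strategy---block-diagonalizing $H$ by a unitary $e^{iS}$ with $S$ built from iterated Sylvester equations---is a genuinely different route from the paper's, which instead constructs $\Pi$ via operator-valued pseudodifferential calculus \`a la Martinez--Sordoni: a symbolic parametrix $q(x_a,\xi_a;z)=\sum_j q_j$ for $(h(x_a,\xi_a)-z)^{-1}$ is built, projector symbols $\hat\pi_j$ are obtained by contour integration, the asymptotic series is resummed with explicit spatial cutoffs $(1-\delta(\epsilon_j|x_a|))$, the energy cutoff $\Phi(H)$ is folded into the definition of $\hat\Pi_\Phi$, and finally $\Pi$ is the Riesz projection of $\hat\Pi_\Phi$ around $z=1$. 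Both paradigms are standard in adiabatic perturbation theory, and yours would in principle yield a projection manifestly unitarily equivalent to $\pi$.

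However, your sketch has a real gap at exactly the point you flag as the main obstacle. After the first step the remainder to be cancelled contains $i[S^{(1)},-\Delta_a]$, a first-order differential operator in $x_a$; hence $S^{(2)}$ is first-order, $S^{(3)}$ second-order, and so on. The formal generator $S=\sum_k S^{(k)}$ is therefore not a bounded (nor obviously essentially self-adjoint) operator, and $e^{iS}$ is undefined as written. The fix you propose---that a fixed truncation $N$ suffices because $f'(H)\Phi(H)$ provides ``arbitrary-order smoothing''---does not work: spectral cutoffs in $H$ control $\langle D\rangle$-weights, not $\langle x\rangle$-weights, so for finite $N$ the remainder carries only $\langle x_a\rangle^{-C(N)}$-type decay and Property~(1) fails for $r+r'$ large. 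The paper confronts precisely this by (i) working at the symbol level so the growing $\xi_a$-order sits harmlessly in $S^j_{-j-\rho}$, (ii) resumming with cutoffs $(1-\delta(\epsilon_j|x_a|))$ chosen so the tail converges in every weighted norm simultaneously, and (iii) building $\Phi(H)$ into $\hat\Pi_\Phi$ before taking the Riesz projection; this is what yields a single $\Pi$ with $[\Pi,H]f'(H)$ absorbing arbitrary $\langle x\rangle^r\langle D\rangle^u$. A secondary point: in your Property~(2) argument, the extra ``$+1$'' in the budget $r+r'<1+\rho$ does not come from reduced-resolvent smoothing (those resolvents act only in $x^a$), but from the Taylor expansion of $I^a$ in $x^a$ against the exponential decay of eigenfunctions---compare the paper's treatment of $[\tilde V_a,\pi_\alpha]$.
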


For $\Pi$ we can remove the restriction $s<\rho$.

\begin{thm}\label{ais1}
Assume the conditions in Theorem \ref{main2}. Let $\pi$ be a projection corresponding to a finite set $\sigma$ of isolated eigenvalues and $\Pi$ be as above. Then \eqref{myeq2.0.1}, \eqref{myeq2.0.0.1} and \eqref{myeq2.0.0.2} hold with $\pi_{\alpha}$ and $s<\min\{s',\rho\}$ replaced by $\Pi$ and $s<s'$ respectively.
\end{thm}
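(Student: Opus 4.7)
The plan is to rerun the proofs of Theorems~\ref{main1}, \ref{main2}, and \ref{main4} with $\pi_{\alpha}$ replaced everywhere by $\Pi$, exploiting the strong commutation properties of $\Pi$ from Theorem~\ref{projection} to dispose of the terms that previously forced $s<\rho$. Concretely, for each theorem I would form the Skibsted-type observable $M(t)$ used in the original proof (with $\Pi$ substituted for $\pi_{\alpha}$), compute the Heisenberg derivative $D_{H}M=\partial_{t}M+i[H,M]$, pair it with $\psi(t):=e^{-itH}f(H)\langle x\rangle^{-s'}u$, and integrate in $t$ to extract the desired decay.

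The bottleneck $s<\rho$ in the original proofs appears because $[\pi_{\alpha},H]=[\pi_{\alpha},I_{a}]$ is only of size $\langle x_a\rangle^{-\rho}$, so the terms containing this commutator in $D_H M$ give contributions of order $t^{-\rho}$. With $\Pi$ in place of $\pi_\alpha$, property~(1) of Theorem~\ref{projection} guarantees that $f(H)[\Pi,H]$ and $[\Pi,H]f(H)$ are bounded from $\langle x\rangle^{r'}\langle D\rangle^{u'}\mathcal H$ to $\langle x\rangle^{-r}\langle D\rangle^{-u}\mathcal H$ for \emph{all} $r,r',u,u'$. After sandwiching with the spatial localizations $\chi(x_a/t<\ldots)$ already present in $M(t)$, these error terms carry arbitrary polynomial decay in $t$; hence the only constraint left on $s$ is $s<s'$, coming from the weight $\langle x\rangle^{-s'}$ on the right.

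The main analytic ingredient is the analogue of the almost-positivity estimate \eqref{myeq3.7.1} with $\Pi$ in place of $\pi_\alpha$, of the form $\Pi f(H)\,i[A,H]\,f(H)\Pi\geq c\,\Pi f(H)^{2}$ modulo manageable errors. I would establish it by writing $\Pi=\pi+(\Pi-\pi)$ and reducing to the known estimate for $\pi=\sum_{j}\pi_{\alpha_j}$ on the diagonal term. The cross-terms contain $\Pi-\pi$, and since $i[A,H]f(H)$ is bounded, property~(2), which gives boundedness of $\langle x\rangle^{r}(\Pi-\pi)\langle x\rangle^{r'}$ for $r+r'<1+\rho$, comfortably absorbs the weights arising from $\langle x\rangle^{-s'}$ and from the localizations in $M(t)$. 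Property~(3), namely boundedness of $\langle x^a\rangle^{r}\Pi\langle x^a\rangle^{r'}$ for all $r,r'$, is needed to split $\langle x\rangle^{-s'}$ into internal and intercluster pieces; this replaces the automatic internal-variable control that came for free from $\pi_\alpha$ acting only in $x^a$.

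I expect the main technical obstacle to be the bookkeeping for cross-terms of the type $\pi f(H)\,i[A,H]\,f(H)(\Pi-\pi)\langle x\rangle^{-s'}$, and verifying that, after inserting the cut-offs from $M(t)$ and the freedom to trade $\langle x_a\rangle$-powers for $t$-powers on the support of those cut-offs, the resulting $t$-integrands are integrable at rate $t^{-s}$ for every $s<s'$, uniformly in the relevant parameters. For Theorem~\ref{main2} one must additionally commute $p(x_a,D_a)$ past $\Pi$ using that $p$ depends only on intercluster variables together with property~(3); and for Theorem~\ref{main4} the argument is run symmetrically with the maximal-velocity cut-off. Once these bookkeeping steps are in place the remainder of the Skibsted scheme carries over essentially unchanged, producing the three estimates with the relaxed range $s<s'$.
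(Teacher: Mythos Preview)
Your proposal is correct and follows essentially the same route as the paper. The paper's own proof is even more terse: it simply observes that when $\pi_{\alpha}$ is replaced by $\Pi$ the term $E_2$ in Lemma~\ref{timederi} satisfies $\lvert\langle E_2\rangle_t\rvert=O(\tau^{-\infty})$ (this is exactly your use of property~(1) of Theorem~\ref{projection}), so the inductive argument of Lemma~\ref{propagation1} and its descendants runs without the restriction $\gamma_0<\gamma'_0<2\rho$; your additional remarks on handling the positivity of $E_4$ via $\Pi=\pi+(\Pi-\pi)$ and on using property~(3) to control internal weights are the natural bookkeeping the paper leaves implicit.
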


\begin{rem}
Using Theorem \ref{ais1} we can easily see that \eqref{myeq2.0.1}, \eqref{myeq2.0.0.1} and \eqref{myeq2.0.0.2} hold with $\pi_{\alpha}$ and $s<\min\{s',\rho\}$ replaced by $\pi$ and $s<\min\{s',\rho+1\}$ respectively. However, in Theorem \ref{main1}, Theorem \ref{main2} and Theorem \ref{main4} $\pi_{\alpha}$ can be a projection onto an eigenfunction of a degenerated eigenvalue. Therefore we cannot obtain Theorem \ref{main1}, Theorem \ref{main2} and Theorem \ref{main4} from Theorem \ref{ais1}.
\end{rem}

As an application of Theorem \ref{ais1} we obtain the following three-body propagation estimate.

\begin{thm}\label{three}
Assume  the conditions in Theorem \ref{main2} and $N=3$. Then for any $E>\lambda_{\alpha},\ E\notin \mathcal T$, $\epsilon>0$, any $f\in C_0^{\infty}(\mathbb R)$ supported in a sufficiently small neighborhood of $E$ and any $l,s,s'\in \mathbb R$ such that $0\leq l< s<s'$, the following estimate holds.
\begin{equation}\label{myeq2.1.2}
J_a(x)p(x_a,D_a)e^{-itH}f(H)\langle x\rangle^{-s'}=O(t^{-s}),
\end{equation}
where $J_a(x)$ is supported in $\Omega_a:=\{ x\vert \lvert x_a\rvert>\zeta\lvert x\rvert\}\cup \{\lvert x\rvert<\tilde\zeta\}$ for some constants $\zeta,\tilde{\zeta}>0$.
\end{thm}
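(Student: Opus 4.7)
The plan is to reduce to Theorem \ref{ais1} by partitioning the identity via almost invariant projections attached to each of the three two-cluster decompositions of the $3$-body problem. Write the three two-cluster decompositions as $a=a_1,a_2,a_3$ and, for each $i$, let $\Pi_i$ be the almost invariant projection supplied by Theorem \ref{projection} corresponding to the (finite) set of discrete eigenvalues of $H^{a_i}$ in $(-\infty,E+\delta)$ for some small $\delta>0$. Using the telescoping identity
\begin{equation*}
1=\Pi_1+(1-\Pi_1)\Pi_2+(1-\Pi_1)(1-\Pi_2)\Pi_3+\prod_{i=1}^{3}(1-\Pi_i),
\end{equation*}
inserted between $f(H)$ and $\langle x\rangle^{-s'}$ in \eqref{myeq2.1.2}, I reduce the proof to estimating four pieces.

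For the $\Pi_1$-piece, corresponding to the channel associated with $a$ itself, I commute $\Pi_1$ past $e^{-itH}$ by Duhamel's formula $\Pi_1 e^{-itH}=e^{-itH}\Pi_1+i\int_0^t e^{-i(t-s)H}[H,\Pi_1]e^{-isH}\,ds$, absorbing the integral remainder via the smoothness bounds of Theorem \ref{projection}(1), which allow an arbitrary number of $\langle x\rangle$- and $\langle D\rangle$-powers to be spent on the remainder. The main term $p(x_a,D_a)\Pi_1 e^{-itH}f(H)\langle x\rangle^{-s'}$ is then $O(t^{-s})$ directly by Theorem \ref{ais1}, and multiplication by the bounded multiplier $J_a$ preserves the bound.

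For the $\Pi_i$-pieces with $i\in\{2,3\}$, the crucial three-body geometric fact is that $X_{a_i}\cap X^a=\{0\}$, so $\Pi_a|_{X_{a_i}}\colon X_{a_i}\to X_a$ is a linear isomorphism. On the range of $\Pi_i f(H)$, Theorem \ref{projection}(3) gives boundedness of $x^{a_i}$ while the analogue of Theorem \ref{main2} in Theorem \ref{ais1} gives asymptotic localization in the outgoing cone $x_{a_i}\cdot\xi_{a_i}\sim |x_{a_i}||\xi_{a_i}|$. Pushed through the isomorphism, this concentrates in the outgoing cone $x_a\cdot\xi_a\sim|x_a||\xi_a|$, missing the incoming support of $p$. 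I make this rigorous by constructing a symbolic partition of unity
\begin{equation*}
J_a(x)p(x_a,D_a)=p_i^{\mathrm{in}}(x_{a_i},D_{a_i})+p_i^{\mathrm{out}}(x_{a_i},D_{a_i})+r_i,
\end{equation*}
with $p_i^{\mathrm{in}}\in S^0_0$ incoming and $p_i^{\mathrm{out}}\in S^0_0$ outgoing in $(x_{a_i},\xi_{a_i})$, and $r_i$ a negligible symbolic remainder. The incoming piece is controlled by Theorem \ref{ais1} applied to channel $a_i$; the outgoing piece vanishes on $\Pi_i\mathcal H$ up to $O(t^{-s})$ by the maximal velocity analogue of Theorem \ref{main4} in Theorem \ref{ais1}.

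For the final term $\prod_{i=1}^{3}(1-\Pi_i)f(H)$, which when $E>0$ corresponds asymptotically to the three-cluster free channel, I appeal to a Mourre-type positive commutator argument using the full dilation generator $A=(x\cdot D+D\cdot x)/2$: on this subspace one has $i[H,A]\geq cf(H)^2$ up to compact errors, and Skibsted's propagation observable technique \cite{Sk} then forces $x_a\cdot\xi_a\gtrsim|x_a||\xi_a|$ asymptotically, killing the incoming symbol support of $p$; when $E<0$ this term is already negligible since $f(H)$ is then supported strictly below the free continuous spectrum. The main obstacle will be Step~2, namely the rigorous construction of the symbolic partition of unity adapted to the change of coordinates between $X_{a_i}$ and $X_a$ and the sharp estimation of the symbolic remainder $r_i$; the remaining pieces follow from iterated application of Theorem \ref{ais1} together with the Duhamel manipulations already used in the earlier sections.
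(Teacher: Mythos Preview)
Your route is genuinely different from the paper's, and Step~2 is not merely the ``main obstacle'' --- as written it does not go through. A pseudodifferential operator $J_a(x)p(x_a,D_a)$ acting on $L^2(X)$ is \emph{not} a PDO in the $(x_{a_i},\xi_{a_i})$ variables alone: under the linear change of coordinates $(x_a,x^a)\mapsto(x_{a_i},x^{a_i})$ its full symbol depends nontrivially on $(x^{a_i},\xi^{a_i})$, so an identity of the form $J_a p(x_a,D_a)=p_i^{\mathrm{in}}(x_{a_i},D_{a_i})+p_i^{\mathrm{out}}(x_{a_i},D_{a_i})+r_i$ with $p_i^{\mathrm{in/out}}\in S^0_0$ on $X_{a_i}$ is not available. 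Even granting localization of $x^{a_i}$ from $\Pi_i$, the $\xi^{a_i}$-dependence persists, and your isomorphism $\Pi_a|_{X_{a_i}}$ does nothing to control it. Moreover, the sentence ``the outgoing piece vanishes on $\Pi_i\mathcal H$ \ldots\ by the maximal velocity analogue of Theorem~\ref{main4}'' is wrong: Theorem~\ref{main4} bounds $|x_{a_i}|/t$ from above, it says nothing about the sign of $x_{a_i}\cdot\xi_{a_i}$; an outgoing symbol in $(x_{a_i},\xi_{a_i})$ is precisely where the $\Pi_i$-channel \emph{lives}, and there is no reason for it to be $O(t^{-s})$. Your Step~1 also has a circularity: the Duhamel remainder $\int_0^t J_a p(x_a,D_a)e^{-i(t-r)H}f'(H)[H,\Pi_1]e^{-irH}f(H)\langle x\rangle^{-s'}\,dr$ requires, on the interval $[0,t/2]$, exactly the decay of $J_a p(x_a,D_a)e^{-i(t-r)H}f'(H)\langle x\rangle^{-M}$ that you are trying to prove.

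The paper's argument avoids all of this. Using \cite[Theorem~2.3]{Wa2} one first inserts the \emph{spectral} projection $\pi$ (for the single channel $a$) in front of $e^{-itH}$; this is the only place where $N=3$ is used. Then $J_a p(x_a,D_a)\pi=J_a p(x_a,D_a)\Pi+J_a p(x_a,D_a)(\pi-\Pi)$, and the first term is $O(t^{-s})$ by Theorem~\ref{ais1}. For the second, Theorem~\ref{projection}(2) gives $\langle x\rangle^{1+\rho}(\pi-\Pi)\in\mathcal L(\mathcal H)$, and since $J_a$ forces $|x_a|\sim|x|$ while $\pi$ localizes $x^a$, this effectively replaces $p\in S^0_0$ by a symbol in $S^0_{-1-\rho}$. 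Iterating $n$ times reduces to $p\in S^0_{-n(1+\rho)}$, and once $n(1+\rho)\geq s$ the estimate follows from the unprojected minimal velocity bound (Lemma~\ref{Skibstedlem} or Lemma~\ref{lem4.3}). No phase-space comparison between different two-cluster coordinates is needed.
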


Let $R(z):=(H-z)^{-1}$ be the resolvent of $H$. From Theorem \ref{main2} we also obtain a resolvent estimate.

\begin{thm}\label{main3}
Suppose the same assumption as in Theorem \ref{main2} and $\rho>m$, $m\in \mathbb N\setminus \{0\}$. Let $I\subset\mathbb R^+$ be a compact interval. Then for all $s',>m$ and $0\leq s<\min\{s',\rho\}-m$ there exists a constant $C>0$ such that
\begin{equation}\label{myeq2.3}
\sup_{\substack{\lambda\in I\\ \mu>0}}\lVert\langle x_a\rangle^s p(x_a,D_a)\pi_{\alpha}(R(\lambda+i\mu))^m\langle x\rangle^{-s'}\rVert<C,
\end{equation}
and for $\lambda \in I$ the limit
\begin{equation}\label{myeq2.4}
\lim_{\substack{\mu\to +0}}\langle x_a\rangle^s p(x_a,D_a)\pi_{\alpha}(R(\lambda+i\mu))^m\langle x\rangle^{-s'},
\end{equation}
exists in $\mathcal L(\mathcal H)$ and define a continuous $\mathcal L(\mathcal H)$-valued function in $\lambda\in \mathbb R^+$.
\end{thm}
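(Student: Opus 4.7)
The strategy is to transfer the time-decay statement of Theorem \ref{main2} to the resolvent via the Fourier--Laplace representation
$$R(\lambda+i\mu)^m = \frac{i^m}{(m-1)!}\int_0^\infty t^{m-1} e^{it\lambda-t\mu}\, e^{-itH}\,dt, \qquad \mu>0,$$
valid in $\mathcal{L}(\mathcal H)$. I would first fix $f\in C_0^\infty(\mathbb R^+)$ with $f\equiv 1$ on a neighborhood of $I$ (covering $I$ by finitely many such cutoffs if the ``sufficiently small neighborhood'' condition inherited from Theorem \ref{main1} forces this, and taking a maximum of the resulting constants) and split $\pi_\alpha R^m = \pi_\alpha f(H)R^m + \pi_\alpha (1-f)(H)R^m$. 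The complementary piece is uniformly bounded in $\mu>0$ and continuous in $\lambda\in I$ by elementary functional calculus, since $(1-f)(H)(H-\lambda-i\mu)^{-m}=g_{\lambda,\mu}(H)$ with $g_{\lambda,\mu}$ uniformly bounded and depending continuously on $(\lambda,\mu)$.

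For the main piece, commuting $f(H)$ past $R(\lambda+i\mu)^m$ and inserting the integral representation yields
$$\langle x_a\rangle^s p(x_a,D_a)\pi_\alpha f(H) R(\lambda+i\mu)^m\langle x\rangle^{-s'} = \frac{i^m}{(m-1)!}\int_0^\infty t^{m-1} e^{it\lambda-t\mu}\, A(t)\,dt,$$
where $A(t):=\langle x_a\rangle^s p(x_a,D_a)\pi_\alpha e^{-itH}f(H)\langle x\rangle^{-s'}$. The hypothesis $s<\min\{s',\rho\}-m$ allows choosing $\tilde s\in(s+m,\min\{s',\rho\})$, and applying Theorem \ref{main2} with its parameter $l$ equal to our $s$, its decay exponent equal to $\tilde s$, and the same $s'$ gives $\|A(t)\|_{\mathcal{L}(\mathcal H)}=O(t^{-\tilde s+s})=O(t^{-m-\delta})$ as $t\to\infty$ for some $\delta>0$. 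Near $t=0$, $\|A(t)\|$ is bounded by a fixed constant via unitarity of $e^{-itH}$ and boundedness of the fixed operator $\langle x_a\rangle^s p(x_a,D_a)\pi_\alpha f(H)\langle x\rangle^{-s'}$, which follows from the mapping properties of $f(H)$ on weighted spaces collected in the appendix together with $\langle x_a\rangle^s\langle x\rangle^{-s'}\le\langle x\rangle^{s-s'}$. Since $|e^{-t\mu}|\le 1$ and $t^{m-1}\min(1,t^{-m-\delta})\in L^1((0,\infty))$, this yields the uniform bound \eqref{myeq2.3}.

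Existence of the boundary value \eqref{myeq2.4} and its continuity in $\lambda$ then follow by dominated convergence applied to the same integral: as $\mu\to+0$ the integrand converges pointwise in $t$ to $t^{m-1}e^{it\lambda}A(t)$, which is jointly continuous in $(t,\lambda)$, with a $(\mu,\lambda)$-independent integrable majorant. The orthogonal piece being manifestly continuous, the whole expression defines a continuous $\mathcal{L}(\mathcal H)$-valued function of $\lambda$ on $I$, and the local nature of the argument extends continuity to all of $\mathbb R^+$. The main technical obstacle is the small-$t$ control of $A(t)$: although $\langle x_a\rangle^s p(x_a,D_a)$ alone is unbounded, the combination with $\pi_\alpha f(H)\langle x\rangle^{-s'}$ is bounded through the weighted-space results of the appendix, and this is the only non-formal ingredient beyond Theorem \ref{main2}, which enters purely as a black box.
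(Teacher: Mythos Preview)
Your approach is essentially the paper's: split off $(1-f)(H)$, represent $R(z)^m f(H)$ by the Fourier--Laplace integral, and feed in Theorem~\ref{main2} for integrability in $t$; the paper uses the equivalent iterated form $i^m\int_0^\infty\int_{t_m}^\infty\cdots\int_{t_2}^\infty e^{-it_1(H-z)}\,dt_1\cdots dt_m$ and establishes the boundary value by an explicit H\"older estimate in $\mu$ rather than by your dominated-convergence argument, but the content is the same. One caveat on the small-$t$ control: unitarity of $e^{-itH}$ alone does not give what you claim, since $e^{-itH}$ sits \emph{between} the unbounded factor $\langle x_a\rangle^s p(x_a,D_a)\pi_\alpha$ and $f(H)\langle x\rangle^{-s'}$; what you actually need is that $\langle x\rangle^{s'}e^{-itH}f(H)\langle x\rangle^{-s'}$ is bounded (continuously) on bounded $t$-intervals, which then combines with the boundedness of $\langle x_a\rangle^s p(x_a,D_a)\pi_\alpha\langle x\rangle^{-s'}$ that you correctly identify.
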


\begin{rem}
The corresponding resolvent estimates for $R(\lambda-i\mu)$ and $p$ such that $\mathrm{supp}\ p\subset\{ (x_a,\xi_a)\vert x_a\cdot\xi_a>(-1+\epsilon_1)\lvert x_a\rvert \lvert \xi_a\rvert\}$ can be obtained from the propagation estimates  as $t\to -\infty$ corresponding to Theorem \ref{main2} which can be obtained from other propagation estimates as $t\to -\infty$.
\end{rem}

\begin{rem}\label{comparison}
As already mentioned in the introduction, Theorem \ref{main3} is included in Wang \cite[Corollary 2.7]{Wa2}. Note that since the cutoff function $J_a$ in \cite{Wa2} can be chosen so that the eigenfunction of the channel $\alpha$ decays exponentially on $\mathrm{supp}\ (1-J_a)$, we can see by $\lvert x\rvert^2=\lvert x_a\rvert^2+\lvert x^a\rvert^2$ that $J_a$ is not needed. However, in fact the proof of \cite[Corollary 2.7]{Wa2} is incorrect and we have the restriction $s<\epsilon_0+1/2$ in Corollary 2.7 since the inductive argument in \cite[Theorem 2.5]{Wa2} does not seem to work in the proof of \cite[Corollary 2.7]{Wa2}. This is because the commutator $[\pi,I_a]$ in \cite[(2.24)]{Wa2} cannot be written as
$$[\pi,I_a]=O(\langle x\rangle^{-1-\epsilon_0})\pi+O(\langle x\rangle^{-r})$$
(cf. Remark \ref{critical term}) where $\pi=\pi_{\alpha}$, $I_a$ corresponds to $\tilde V_a=\sum_{(ij)\nless a}V_{ij}(x_i-x_j)$, $r>1/2+s$ and $\epsilon_0$ is the index of the decay of potentials which corresponds to $\rho$ in our assumption (note that we can not insert $\pi$ as in the proof of \cite[Thorem 2.5 ]{Wa2}since we do not have the assumption $a\in \mathcal A^+_2$ in \cite[Corollary 2.7]{Wa2} as in \cite[Theorem 2.5]{Wa2}).

To obtain the propagation estimate as Theorem \ref{main2} with $n-1<s\leq n, n\in \mathbb N\setminus \{0\}$ from the resolvent estimate, we need the estimate for $(R(\lambda+i\mu))^{n+1}$ (see e.g. Perry \cite{Pe}). Since the index $s'$ in the estimate
$$\lVert \pi_{\alpha}b^a_{\pm}(x_a,D_a)(R(\lambda+i\kappa))^{n+1}\langle x\rangle^{-s'}\rVert \leq C\langle \lambda\rangle^{-n/2}$$
needs to satisfy $n+1/2<s'<\epsilon_0+1/2$, $\epsilon_0$ needs to satisfy $n<\epsilon_0$. To sum up,  the index $s$ in the propagation estimate as Theorem \ref{main2} needs to satisfy $s\leq n$ for $n\in \mathbb N$ such that $n<\epsilon_0$. Since $\epsilon_0$ corresponds to $\rho$ in our assumption, this means that our estimate Theorem \ref{main2} is better than that obtained from \cite[Corollary 2.7]{Wa2} with respect to the indices $s$. In particular, only our method gives the propagation estimates for long-range potentials.
\end{rem}

We also obtain improved resolvent estimates using the projection $\Pi$ onto almost invariant subspace.

\begin{thm}\label{ais2}
Assume the conditions in Theorem \ref{main2}. Let $\pi$ and $\Pi$ as in Theorem \ref{ais1}. Then the results in Theorem \ref{main3} hold with $\pi_{\alpha}$, $s'>m$ and $s<s'-m$ replaced by $\Pi$, $s'>m$ and $s<s'-m$ respectively.
\end{thm}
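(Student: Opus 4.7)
The plan is to reproduce the proof of Theorem \ref{main3} verbatim with $\pi_\alpha$ replaced by $\Pi$, feeding in the stronger propagation estimate Theorem \ref{ais1} as the decay input. The whole point of the statement is that Theorem \ref{ais1} allows the index range $s<s'$ instead of $s<\min\{s',\rho\}$, so the bookkeeping in the Fourier--Laplace passage immediately upgrades to $s<s'-m$.

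First I would localize in energy. Pick $f\in C_0^{\infty}(\mathbb R)$ supported in a small neighborhood of $E$ (small enough that Theorem \ref{ais1} applies) and identically $1$ on a neighborhood of the compact interval $I$. Split
\begin{equation*}
R(\lambda+i\mu)^m=R(\lambda+i\mu)^m f(H)+R(\lambda+i\mu)^m(1-f)(H).
\end{equation*}
The second summand is, by functional calculus, uniformly bounded in $\lambda\in I,\ \mu>0$ and extends continuously to $\mu=0$; the weighted norm $\|\langle x_a\rangle^s p(x_a,D_a)\Pi R^m(1-f)(H)\langle x\rangle^{-s'}\|$ is controlled by using Theorem \ref{projection}(1) to commute $\Pi$ through $R^m(1-f)(H)$ up to a remainder which, combined with Theorem \ref{projection}(3) (polynomial $x^a$-weights are absorbed by $\Pi$) and a Helffer--Sj\"ostrand commutator expansion for moving $\langle x_a\rangle^s$ past the function $R^m(1-f)$ of $H$, is absolutely bounded. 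This is the same ``regular'' part that appears in the proof of Theorem \ref{main3} and inherits its treatment.

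For the resonant summand I use the Fourier--Laplace representation
\begin{equation*}
R(\lambda+i\mu)^m f(H)=\frac{i^m}{(m-1)!}\int_0^{\infty}t^{m-1}e^{it(\lambda+i\mu)}e^{-itH}f(H)\,dt,
\end{equation*}
valid for $\mu>0$, and obtain
\begin{equation*}
\langle x_a\rangle^s p(x_a,D_a)\Pi R^m f(H)\langle x\rangle^{-s'}=\frac{i^m}{(m-1)!}\int_0^{\infty}t^{m-1}e^{it(\lambda+i\mu)}A(t)\,dt,
\end{equation*}
with $A(t):=\langle x_a\rangle^s p(x_a,D_a)\Pi e^{-itH}f(H)\langle x\rangle^{-s'}$. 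Since $s<s'-m$, I can pick $\tilde s$ with $s+m<\tilde s<s'$; applying Theorem \ref{ais1} with parameters $l=s$ and $\tilde s$ in place of $s$ gives $\|A(t)\|=O(t^{-(\tilde s-s)})$ as $t\to\infty$, together with uniform boundedness of $A(t)$ near $t=0$. The integrand is therefore dominated by $C\,t^{m-1}\min(1,t^{-(\tilde s-s)})e^{-\mu t}$ with $m-1-(\tilde s-s)<-1$, so the integral converges uniformly in $\lambda\in I,\ \mu\geq 0$. This yields the uniform bound \eqref{myeq2.3} with $\pi_\alpha$ replaced by $\Pi$, and dominated convergence delivers the existence of the limit \eqref{myeq2.4} and its continuity in $\lambda$.

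I expect the main obstacle to be the ``non-resonant'' piece handled in the second paragraph above: the weight $\langle x_a\rangle^s$ and the pseudodifferential factor $p(x_a,D_a)$ do not commute with $R^m(1-f)(H)$, and $\Pi$ is only \emph{almost} invariant under $H$. One must carefully combine the three properties of $\Pi$ in Theorem \ref{projection} with a Helffer--Sj\"ostrand-type expansion so that each commutator that is generated is either absorbed by $\langle x\rangle^{-s'}$ on the right, turned into extra decay in $x^a$ by Theorem \ref{projection}(3), or paid for by an additional resolvent of $H$ evaluated at energies bounded away from $I$. Once this ``soft'' part is in place, the resonant part is an immediate corollary of Theorem \ref{ais1} and is where the improvement $s<s'-m$ (as opposed to $s<\min\{s',\rho\}-m$) comes from.
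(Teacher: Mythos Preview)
Your proposal is correct and is essentially the paper's own argument: the paper simply says that the proof of Theorem \ref{ais2} is the proof of Theorem \ref{main3} with $\pi_\alpha$ replaced by $\Pi$, the only change being that Theorem \ref{ais1} supplies the time decay without the restriction $s<\rho$. Two cosmetic differences worth noting: (i) Theorem \ref{main2} (hence Theorem \ref{ais1}) allows any $f\in C_0^\infty(\mathbb R^+)$, so you should take $f\equiv 1$ on a neighborhood of the full interval $I$ rather than ``supported in a small neighborhood of $E$'' --- these two requirements are incompatible unless $I$ is a point, and the small-support hypothesis is not needed here; (ii) the paper writes the Fourier--Laplace representation as the iterated integral \eqref{myeq4.18} and proves continuity in $\lambda$ by an explicit H\"older estimate \eqref{myeq4.19}, whereas you use the equivalent $t^{m-1}/(m-1)!$ form and dominated convergence --- both are fine.
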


\section{Proof of Theorem \ref{main1}}\label{thirdsec}
To prove Theorem \ref{main1} we need the propagation estimate for the generator of the dilation in $X_a$. This estimate is obtained by  inductive arguments and the positivity of the commutator of a certain operator and the Hamiltonian except for the forms which are integrable with respect to $t$ when applied to $e^{-itH}f(H)\langle x\rangle^{-s'}\phi$, where $\phi\in \mathcal H$. The key estimates are those in \eqref{myeq3.7.1}. Using this propagation estimate Theorem \ref{main1} is proved in a similar way.

We use the following lemma throughout this paper.
\begin{lem}\label{2cldecay}
Let $a$ be a cluster decomposition such that $\# a=2$, and $(ij)\nleq a$. Then for all $s\geq 0$, $\langle x\rangle^s\langle x_{ij}\rangle^{-s}\langle x^a\rangle^{-s}$ is a bounded operator, where $x_{ij}=x_i-x_j$.
\end{lem}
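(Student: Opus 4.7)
The plan is to reduce the claim to a single pointwise inequality
$$\langle x\rangle \le C\,\langle x_{ij}\rangle\,\langle x^{a}\rangle \qquad\text{for all } x\in X,$$
since raising to the $s$-th power (with $s\ge 0$) then gives a uniform bound on the multiplication operator $\langle x\rangle^{s}\langle x_{ij}\rangle^{-s}\langle x^{a}\rangle^{-s}$.

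First I would use the orthogonal decomposition $X=X_{a}\oplus X^{a}$ to write $|x|^{2}=|x_{a}|^{2}+|x^{a}|^{2}$, so that
$$\langle x\rangle^{2}=1+|x_{a}|^{2}+|x^{a}|^{2}.$$
It then suffices to dominate $|x_{a}|$ in terms of $|x_{ij}|$ and $|x^{a}|$. To this end, consider the linear map $\phi:X\to\mathbb{R}^{\nu}$, $\phi(x)=x_{i}-x_{j}$. The key observation is that, because $\#a=2$ and $(ij)\nleq a$, the particles $i$ and $j$ lie in the two different clusters $C_{1},C_{2}$ of $a$; hence for $y\in X_{a}$ one has $y_{i}=R_{1}$ and $y_{j}=R_{2}$ with $M_{1}R_{1}+M_{2}R_{2}=0$, so
$$\phi(y)=R_{1}-R_{2}=\tfrac{M_{1}+M_{2}}{M_{2}}R_{1}.$$
This shows that $\phi\!\upharpoonright_{X_{a}}$ is a linear isomorphism between the finite-dimensional spaces $X_{a}$ (which has dimension $\nu$) and $\mathbb{R}^{\nu}$. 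In finite dimensions this yields constants $C_{1},C_{2}>0$ with
$$|x_{a}|\le C_{1}\,|\phi(x_{a})|,\qquad |\phi(x^{a})|\le C_{2}\,|x^{a}|.$$

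Next I would combine these with the identity $\phi(x)=\phi(x_{a})+\phi(x^{a})$, which gives $|\phi(x_{a})|\le |x_{ij}|+C_{2}|x^{a}|$ and therefore
$$|x_{a}|\le C_{1}|x_{ij}|+C_{1}C_{2}|x^{a}|.$$
Squaring, and plugging back,
$$\langle x\rangle^{2}\le 1+2C_{1}^{2}|x_{ij}|^{2}+(1+2C_{1}^{2}C_{2}^{2})|x^{a}|^{2}\le C_{3}\bigl(1+|x_{ij}|^{2}\bigr)\bigl(1+|x^{a}|^{2}\bigr)=C_{3}\langle x_{ij}\rangle^{2}\langle x^{a}\rangle^{2}.$$
Taking the square root and raising to the power $s\ge 0$ finishes the proof.

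There is no genuine obstacle here: the statement is a finite-dimensional linear-algebra fact about the C.M.-configuration space, and the only content beyond that is the identification of $\phi\!\upharpoonright_{X_{a}}$ as an isomorphism, which is immediate from the definitions once one notes that $\#a=2$ together with $(ij)\nleq a$ forces $i$ and $j$ to lie in distinct clusters.
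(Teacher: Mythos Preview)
Your proof is correct. Both your argument and the paper's hinge on the same linear-algebra observation, namely that when $\#a=2$ and $(ij)\nleq a$ the map $x\mapsto x_{ij}$ restricted to $X_a$ is an isomorphism, which the paper phrases as the inequality $|x_{ij}|\ge C_1|x|-C_2|x^a|$. The execution differs: the paper splits into the two regions $\{|x_a|<C|x^a|\}$ and $\{|x_a|\ge C|x^a|\}$, using $\langle x^a\rangle^{-s}$ to absorb $\langle x\rangle^s$ on the first and $\langle x_{ij}\rangle^{-s}$ on the second, whereas you bypass the case split and obtain the single global pointwise bound $\langle x\rangle\le C\langle x_{ij}\rangle\langle x^a\rangle$ directly. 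Your route is a little cleaner and makes the role of the hypotheses $\#a=2$, $(ij)\nleq a$ more transparent; the paper's route is the same idea organised slightly less efficiently.
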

\begin{proof}
We have
\begin{align*}
\langle x\rangle^s\langle x_{ij}\rangle^{-s}\langle x^a\rangle^{-s}\leq& F(\lvert x_a\rvert< C\lvert x^a\rvert)\langle x\rangle^s\langle x_{ij}\rangle^{-s}\langle x^a\rangle^{-s}\\
&+F(\lvert x_a\rvert\geq C\lvert x^a\rvert)\langle x\rangle^s\langle x_{ij}\rangle^{-s}\langle x^a\rangle^{-s}
\end{align*}
Here $F(\dotsm)$ is a characteristic function of the set $\{x\vert \dotsm\}$. By $\lvert x_a\rvert^2+\lvert x^a\rvert^2=\lvert x\rvert^2$ the first term is bounded. 

As for the second term, since $\#a=2$, there exist constants $C_1,C_2>0$ such that $\lvert x_{ij}\rvert\geq C_1\lvert x\rvert-C_2\lvert x^a\rvert$.
 Thus if $C>0$ is large enough, we have
\begin{align*}
F(\lvert x_a\rvert\geq C\lvert x^a\rvert)\langle x\rangle^s&\langle x_{ij}\rangle^{-s}\langle x^a\rangle^{-s}\\
&\leq F(\lvert x_{ij}\rvert\geq C_1\lvert x\rvert-C_2C^{-1}\lvert x_a\rvert)\langle x\rangle^s\langle x_{ij}\rangle^{-s}\langle x^a\rangle^{-s}\\
&\leq F(\lvert x_{ij}\rvert\geq C'\lvert x\rvert)\langle x\rangle^s\langle x_{ij}\rangle^{-s}\langle x^a\rangle^{-s}
<\tilde C,
\end{align*}
where $C'$ and $\tilde C$ are positive constants. Thus the second term is also bounded, so the lemma is proved.
\end{proof}

We also use the following result which is proved in Skibsted \cite{Sk} and can be applied under Assumption \ref{potentialas}.
\begin{lem}[Skibsted \cite{Sk}]\label{Skibstedlem}
Let $E>\inf\sigma_{\mathrm{ess}}(H),\ E\notin \mathcal T$ be given. Then for any $f\in C_0^{\infty}(\mathbb R)$ supported in a sufficiently small neighborhood of $E$ and any $s'>s>0$,
$$\chi\left(\frac{x^2}{4t^2}-d(E)<-\epsilon\right)e^{-itH}f(H)\langle x\rangle^{-s'}=O(t^{-s}),$$
in $\mathcal L(\mathcal H)$ as $t\to \infty$ where $d(E):=\mathrm{dist}(E,\mathcal T)$.
\end{lem}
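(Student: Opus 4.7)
\smallskip\noindent\textbf{Proof proposal.}
The plan is to prove the estimate by induction on $s$, using a time-dependent propagation observable whose Heisenberg derivative is sign-definite modulo terms controlled by the inductive hypothesis and the potential decay. The base case $s=0$ is the uniform boundedness of $f(H)\langle x\rangle^{-s'}$ on $\mathcal H$. Set $u(t):=e^{-itH}f(H)\langle x\rangle^{-s'}\phi$ for $\phi\in\mathcal H$; the target is $\|\chi(x^2/(4t^2)-d(E)<-\epsilon)u(t)\|^2=O(t^{-2s})$. Granting the estimate for some $s_0\ge 0$ and every choice of $\epsilon>0$, I aim to bootstrap to $s_0+\delta$ for a fixed $\delta>0$; finitely many iterations then reach any $s<s'$, the constraint $s<s'$ ultimately coming from converting an integrated propagation bound into a pointwise one (by applying the scheme to the derivative in $t$, or by absorbing a $\langle x\rangle^{s'-s_0-\delta}$ weight into $\langle x\rangle^{-s'}$).

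For the inductive step, pick $0<\epsilon_1<\epsilon$ and a smooth $\tilde\chi$ that equals $1$ on $\mathrm{supp}\,\chi(\cdot<-\epsilon)$ with $\mathrm{supp}\,\tilde\chi\subset\{\tau<-\epsilon_1\}$, and consider the propagation observable
$$\Phi(t):=t^{-2s_0-2\delta}f(H)\,\tilde\chi^2\!\left(\tfrac{x^2}{4t^2}-d(E)\right)f(H).$$
Compute the Heisenberg derivative $D\Phi=\partial_t\Phi+i[H,\Phi]$. The scalar prefactor contributes $-(2s_0+2\delta)t^{-1}\Phi(t)$; differentiating $\tilde\chi^2$ in $t$ produces a term proportional to $t^{-2s_0-2\delta-3}\tilde\chi\tilde\chi'(\cdot)x^2$; the commutator $i[H,\tilde\chi^2]$ produces a term proportional to $t^{-2s_0-2\delta-2}\tilde\chi\tilde\chi'(\cdot)A$ with $A=(x\cdot D+D\cdot x)/2$ the dilation generator, plus potential commutators of size $O(\langle x\rangle^{-1-\rho})$. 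Since $|x|\sim t$ on $\mathrm{supp}\,\tilde\chi'$, the latter are $O(t^{-1-\rho})$ and thus $L^1$ in $t$ after pairing with $u(t)$.

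Positivity is supplied by the Mourre estimate at $E$: because $E\notin\mathcal T$ and $f$ is supported in a small enough neighborhood of $E$, one has $f(H)i[H,A]f(H)\ge c_0 f(H)^2 - K$ with $c_0>0$ controlled by $d(E)$ and $K$ compact. Combined with the geometric relation between $A$ and $x^2/t$ on $\mathrm{supp}\,\tilde\chi'$ (where $x^2/(4t^2)$ is pinned close to $d(E)-\epsilon_1$), this delivers the correct sign for the principal part of $D\Phi$, while the sign-indefinite piece from $\partial_t\tilde\chi^2$ and the compact Mourre remainder are supported in the annular region between $\mathrm{supp}\,\chi$ and $\{\tau=-\epsilon_1\}$ and are absorbed by the inductive hypothesis applied with a slightly widened cut-off. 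Integrating $\frac{d}{dt}\langle u(t),\Phi(t)u(t)\rangle$ from $1$ to $T$ and rearranging then yields $t^{-2s_0-2\delta}\|\chi(\cdot)u(t)\|^2\le C$, completing the inductive step. The main obstacle is the orchestration of the widenings $\epsilon_1<\epsilon$ and the exponent $2s_0+2\delta$ so that the Mourre positivity dominates the scalar $t$-derivative and the compact remainder simultaneously, and every sign-indefinite piece is pushed into a region where the inductive hypothesis is strong enough to close the bootstrap; this balancing is the essential mechanism of Skibsted's scheme.
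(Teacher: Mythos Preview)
The paper does not prove this lemma; it is quoted from Skibsted \cite{Sk}, with only the remark that the hypothesis $E>0$ there can be relaxed to $E>\inf\sigma_{\mathrm{ess}}(H)$ via the Mourre estimate. So there is no in-paper proof to compare against directly, but your proposal does not reproduce Skibsted's scheme and contains a genuine gap.

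The problem is the choice of observable $\Phi(t)=t^{-2s_0-2\delta}f(H)\,\tilde\chi^2\!\big(x^2/(4t^2)-d(E)\big)f(H)$. Its Heisenberg derivative contains, from $i[H,\tilde\chi^2]$, a term proportional to $\tilde\chi\tilde\chi'\cdot A/t^2$ with $A=\tfrac12(x\cdot D+D\cdot x)$; it does \emph{not} contain $i[H,A]$. The Mourre estimate $f(H)i[H,A]f(H)\ge c_0 f(H)^2-K$ therefore cannot be applied: what you actually need at this point is a sign for $A/t$ itself on the support of $\tilde\chi'$, and no such sign is available a priori (the dilation generator is unbounded both above and below on $\mathrm{Ran}\,f(H)$). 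The ``geometric relation between $A$ and $x^2/t$'' you invoke is precisely the missing ingredient --- it is itself a propagation estimate, not a pointwise inequality, and cannot be read off from Mourre alone. (There is also a sign issue in your exponent: to bootstrap you want $t^{+2s_0+2\delta}\tilde\chi^2$ bounded, not $t^{-2s_0-2\delta}\tilde\chi^2$.)

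Skibsted's actual mechanism, which the paper imitates closely in the proofs of Lemma~\ref{propagation1} and Theorem~\ref{main1}, is a two-step argument. First one takes as observable a function $g_{\beta,\gamma,\epsilon}(A(\tau),\tau)$ of $A(\tau):=A-2d'\tau$ with $d'<d(E)$; the Heisenberg derivative of \emph{this} object does produce $DA(\tau)=i[H,A]-2d'$ in its principal part, and Mourre makes that nonnegative on $\mathrm{Ran}\,f(H)$ up to a compact error. This yields the estimate $\chi(A(\tau)/\tau<-\epsilon)e^{-itH}f(H)\langle x\rangle^{-s'}=O(t^{-s})$. In a second step one passes to an observable built from $x^2/(4\tau^2)$; its Heisenberg derivative (which, as you correctly computed, involves $A/\tau$ rather than $i[H,A]$) is then controlled by the first-step estimate. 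You have collapsed these two steps into one and tried to run Mourre where it does not fit.
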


\begin{rem}
In Skibsted \cite{Sk} we assume $E>0$ for Lemma \ref{Skibstedlem}. However, we can easily modify the proof so that the lemma holds for $E\notin \mathcal T,E>\inf\sigma_{\mathrm{ess}}(H)$ using the Mourre estimate \cite[Theorem B2]{Sk}.
\end{rem}

Without loss of generality we can assume that intercluster potential $\tilde V_a$ is smooth. Let $R$ be as in Assumption \ref{potentialas} and $F\in C_0^{\infty}(\mathbb R^{\nu})$ be a function such that $F(y)=1$ for $\lvert y\rvert<R$. Set $\hat V_a(x):=\sum_{(ij)\notin a}V_{ij}(x_i-x_j)(1-F(x_i-x_j))$. Then $\hat V_a(x)\in C^{\infty}(X)$ and we can rewrite
\begin{equation}\label{myeq3.0.1}
\begin{split}
\pi_{\alpha}&e^{-itH}f(H)\langle x\rangle^{-s'}\\
&=\pi_{\alpha}f'(H')e^{-itH}f(H)\langle x\rangle^{-s'}+\pi_{\alpha}(f'(H)-f'(H'))e^{-itH}f(H)\langle x\rangle^{-s'},
\end{split}
\end{equation}
where $H':=H^a-\Delta_a+\hat V_a$ and $f'\in C_0^{\infty}(\mathbb R)$ satisfies $ff'=f$. Since $\pi_{\alpha}(f'(H)-f'(H'))\langle x\rangle^n\in \mathcal L(\mathcal H)$ for any $n\in \mathbb N$ by Lemma \ref{2cldecay} and the exponential decay of the eigenfunctions, the second term in \eqref{myeq3.0.1} is estimated as $O(t^{-s})$ by Lemma \ref{Skibstedlem}. Since 
\begin{align*}
\pi_{\alpha}f'(H')e^{-itH}f(H)\langle x\rangle^{-s'}=&\pi_{\alpha}f'(H')\pi_{\alpha}e^{-itH}f(H)\langle x\rangle^{-s'}\\
&+\pi_{\alpha}[\pi_{\alpha},f'(H')]e^{-itH}f(H)\langle x\rangle^{-s'},
\end{align*}
and $[\pi_{\alpha},f'(H')]\langle x\rangle^{1+\rho}\in \mathcal L(\mathcal H)$, we only need to prove
$$\chi\left (\frac{x_a^2}{4t^2}-(E-\lambda_{\alpha})<-\epsilon\right)\pi_{\alpha}f'(H')\pi_{\alpha}e^{-itH}f(H)\langle x\rangle^{-s'}=O(t^{-s}).$$
However, we have
\begin{equation}\label{myeq3.0.2}
\begin{split}
f'(H')\pi_{\alpha}e^{-itH}f(H)\langle x\rangle^{-s'}=&f'(H')e^{-itH'}\pi_{\alpha}f(H)\langle x\rangle^{-s'}\\
&+\int_0^tf'(H')e^{-i(t-r)H'}(H'\pi_{\alpha}-\pi_{\alpha}H)e^{-irH}f(H)\langle x\rangle^{-s'}.
\end{split}
\end{equation}
As for the second term in \eqref{myeq3.0.2} we have $\langle x\rangle^{r}(H'\pi_{\alpha}-\pi_{\alpha}H)\langle x\rangle^{r'}\in \mathcal L(\mathcal H)$ for any $r,r'\in \mathbb R$ such that $r+r'<1+\rho$. We decompose the integral into two pieces corresponding to $[0,t/2]$ and $[t/2,t]$. As for the first part, if we assume Theorem \ref{main1} holds with $e^{-itH}f(H)$ replaced by $e^{-itH'}f'(H')$, we can estimate as
$$\chi\left (\frac{x_a^2}{4t^2}-(E-\lambda_{\alpha})<-\epsilon\right)\pi_{\alpha}f'(H')e^{-i(t-r)H'}\langle x\rangle^{-1-\rho}=O((t-r)^{-1-\rho})=O(t^{-1-\rho}).$$
As for the second part, by Lemma \ref{Skibstedlem} and that for any $\epsilon>0$
$$\int_0^{t/2}\langle t-r\rangle^{-1-\epsilon}dr<C$$
with $C$ independent of $t$, we can estimate the part as $O(t^{-s})$ for any $s<\rho$. Therefore, the second term in \eqref{myeq3.0.2} multiplied by $\chi\left (\frac{x_a^2}{4t^2}-(E-\lambda_{\alpha})<-\epsilon\right)\pi_{\alpha}$ is estimated as $O(t^{-s})$ for any $s<\rho$. Assuming Theorem \ref{main1} holds with $e^{-itH}f(H)$ replaced by $e^{-itH'}f'(H')$, the first term  in \eqref{myeq3.0.2} multiplied by
$$\chi\left (\frac{x_a^2}{4t^2}-(E-\lambda_{\alpha})<-\epsilon\right)\pi_{\alpha},$$
is estimated as $O(t^{-s})$ for any $s<\rho$. Thus, we only need to prove Theorem \ref{main1} replacing $\tilde V_a$ by $\hat V_a$. That is, we can assume $\tilde V_a$ is smooth.

We introduce a class of functions which is introduced in Skibsted \cite{Sk}.
\begin{defin}[Skibsted \cite{Sk}]
Given $\beta,\gamma\geq0$ and $\epsilon>0$, let $\mathcal F_{\beta,\gamma,\epsilon}$ denote the set of functions $g$, $g(x,\tau)=g_{\beta,\gamma,\epsilon}(x,\tau)=-\tau^{-\beta}(-x)^{\gamma}\chi_{\epsilon}\left(\frac{x}{\tau}\right)$, defined for $(x,\tau)\in \mathbb R\times \mathbb R^+$ and for $\chi_{\epsilon}\in C^{\infty}(\mathbb R)$ with the following properties:
\begin{align*}
&\chi_{\epsilon}(x)=1\ \mathrm{for}\ x<-2\epsilon,\ \chi_{\epsilon}(x)=0\ \mathrm{for}\ x>-\epsilon,\\
&\frac{d}{dx}\chi_{\epsilon}(x)\leq 0\ \mathrm{and}\ \alpha\chi_{\epsilon}(x)+x\frac{d}{dx}\chi_{\epsilon}(x)=\tilde \chi^2_{\epsilon}(x),
\end{align*}
where $\tilde \chi_{\epsilon}(x)\geq0$ and $\tilde \chi_{\epsilon}\in C^{\infty}(\mathbb R)$.
\end{defin}

Set $t_0>0$ and $E'<E$. We define $A(\tau)=\frac{1}{2}(x_ap_a+p_ax_a)-2(E'-\lambda_{\alpha})\tau$ for $\tau=t+t_0,\ t\geq0$, where $p_a=-i\nabla_a$. To prove the Theorem \ref{main1} first, we prove the following lemma which corresponds to Skibsted \cite[Example1]{Sk}.

\begin{lem}\label{propagation1}
Suppose the same assumption as in Theorem \ref{main1}. Let $E>\lambda_{\alpha},\ E\notin \mathcal T$ and $\lambda_{\alpha}<E'<E$ be given. Then for any $f\in C_0^{\infty}(\mathbb R)$ supported in a sufficiently small neighborhood of $E$, any $\epsilon>0$ and any $s,s',\tilde s\in \mathbb R$ such that $0\leq\tilde s\leq s< \min\{s',\rho\}$,
$$\left(\frac{-A(\tau)}{\tau}\right)^{\tilde s}\chi\left(\frac{A(\tau)}{\tau}<-\epsilon\right)\pi_{\alpha}e^{-itH}f(H)\langle x\rangle^{-s'}=O(t^{-s})$$
in $\mathcal L(\mathcal H)$ as $t\to \infty$.
\end{lem}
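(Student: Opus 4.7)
The plan is to adapt Skibsted's propagation-observable method \cite{Sk} (specifically his Example~1 and Lemma~B.1), inserting $\pi_\alpha$ throughout and replacing the ambient Mourre estimate for $H$ by the almost-positive commutator estimate available on $\mathrm{Ran}\,\pi_\alpha$ (the content of \eqref{myeq3.7.1}). By the reduction already carried out at the beginning of this section we may assume $\tilde V_a$ smooth. For $g=g_{\beta,\gamma,\epsilon}\in\mathcal F_{\beta,\gamma,\epsilon}$ introduce the non-negative observable
\[
\Phi(t):=-g\bigl(A(\tau),\tau\bigr)=\tau^{-\beta}\bigl(-A(\tau)\bigr)^{\gamma}\chi_{\epsilon}\bigl(A(\tau)/\tau\bigr),
\]
set $\phi_t:=e^{-itH}f(H)\langle x\rangle^{-s'}\phi$, and consider $Q(t):=\langle\phi_t,\pi_\alpha\Phi(t)\pi_\alpha\phi_t\rangle$. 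I would prove by induction on $\gamma$ (extended to real values by interpolation) that for every admissible $\beta$ with $\gamma\le\beta$ and $\beta-\gamma<2\min\{s',\rho\}$, a bound of the form $Q(t)\le Ct^{-\beta+\gamma}\|\phi\|^2$ holds; specializing to $\gamma=2\tilde s$ and the appropriate $\beta$, and using the pointwise inequality $\chi(y<-\epsilon)^2\le C_\epsilon\chi_\epsilon(y)$, yields the statement of the lemma.

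The driving computation is the Heisenberg derivative $D_H\Phi(t)=\partial_t\Phi(t)+i[H,\Phi(t)]$. Since $A(\tau)$ involves only $X_a$-variables, $[\pi_\alpha,A(\tau)]=0$, and one finds
\[
i[H,A(\tau)]=2(-\Delta_a)-x_a\cdot\nabla_a\tilde V_a-2(E'-\lambda_\alpha).
\]
Using the defining identity $\beta\chi_\epsilon(y)+y\chi_\epsilon'(y)=\tilde\chi_\epsilon^2(y)$, the $\partial_t$-piece of $D_H\Phi$ generates a leading contribution of one fixed sign proportional to $\tau^{-\beta-1}(-A)^{\gamma}\tilde\chi_\epsilon^2(A/\tau)$, while the symbolic calculus applied to $i[H,g(A,\tau)]$ produces a matching leading term of the form $\tau^{-\beta-1}(-A)^{\gamma-1}\chi_\epsilon'(A/\tau)\cdot i[H,A(\tau)]$. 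Sandwiched between $\pi_\alpha f(H)$, the identity $\pi_\alpha(H^a-\lambda_\alpha)\pi_\alpha=0$ together with the energy localization $f(H)Hf(H)\approx Ef(H)^2$ converts $i[H,A(\tau)]$ into $2(E-E')\pi_\alpha f(H)^2\pi_\alpha$ modulo an $O(\langle x_a\rangle^{-\rho})$ remainder; since $E>E'$, this is the source of the almost-positivity \eqref{myeq3.7.1} and furnishes a positive lower bound on $-D_H\Phi$ of order $\tau^{-\beta-1}(-A)^{\gamma-1}\tilde\chi_\epsilon^2(A/\tau)$, which is exactly the induction gain at level $\gamma-1$.

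Integrating $Q'(t)=\langle\phi_t,D_H\Phi(t)\phi_t\rangle$ in $t$ then yields the required bound: the gain is absorbed by the inductive hypothesis at level $\gamma-1$, while the remainder---generated by $[\pi_\alpha,\tilde V_a]=O(\langle x_a\rangle^{-\rho})$, by the virial term $x_a\cdot\nabla_a\tilde V_a=O(\langle x_a\rangle^{-\rho})$, and by the symbol remainders in the pseudodifferential expansion of $i[H,g(A,\tau)]$---is controlled by combining Lemma~\ref{Skibstedlem} (to confine $\phi_t$ to $|x|\gtrsim t$) with Lemma~\ref{2cldecay} (to convert $\langle x\rangle^{-s'}$ into $\langle x_a\rangle^{-s'}$ modulo exponentially decaying factors in $x^a$, absorbed by $\pi_\alpha$). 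The base case $\gamma=0$ follows directly from Lemma~\ref{Skibstedlem} together with the exponential decay of $\psi_\alpha(x^a)$.

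The principal obstacle is the sharp threshold $s<\rho$: the commutator $[\pi_\alpha,H]=[\pi_\alpha,\tilde V_a]$ decays only as $\langle x_a\rangle^{-\rho}$, and coupled with the lower bound $|x_a|\gtrsim t$ from Lemma~\ref{Skibstedlem} it produces a source integrand precisely at the borderline $t^{-\rho}$, which accounts for the restriction stated in the lemma. Additional care is required to track the symbol remainders in the pseudodifferential calculus for $i[H,g(A,\tau)]$, each of which is a priori subleading in $\tau$ but must be closed against the inductive hypothesis, and the exact identity $[\pi_\alpha,A(\tau)]=0$ is essential in keeping the projections clean throughout the commutator computation.
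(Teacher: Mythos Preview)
Your outline follows the paper's strategy---Skibsted's propagation-observable method applied to $\pi_\alpha g(A(\tau),\tau)\pi_\alpha$, with the positivity supplied by \eqref{myeq3.7.1} and an induction in $\gamma$---and the identification of the leading term, the use of $[\pi_\alpha,A(\tau)]=0$, and the localization to $|x|\gtrsim t$ via Lemma~\ref{Skibstedlem} are all correct. There is, however, a genuine gap in your treatment of the contribution from $[\pi_\alpha,H]=[\pi_\alpha,\tilde V_a]$ (the paper's term $E_2$).

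You write $[\pi_\alpha,\tilde V_a]=O(\langle x_a\rangle^{-\rho})$ and propose to absorb it as a remainder ``controlled by Lemma~\ref{Skibstedlem}''. This is too weak on two counts. First, the decay: on the region $|x|\gtrsim\tau$ your estimate gives only $O(\tau^{-\rho})$, which is not integrable when $\rho\le 1$ and in any case does not let the induction reach all $\gamma<2\rho$. The paper gains an additional factor $\langle x\rangle^{-1}$ by Taylor expanding in $x^a$,
\[
\tilde V_a(x^a,x_a)=\tilde V_a(0,x_a)+\int_0^1 x^a\cdot\nabla^a\tilde V_a(rx^a,x_a)\,dr,
\]
so that the first piece commutes with $\pi_\alpha$, while the second carries $\nabla^a\tilde V_a=O(\langle x\rangle^{-1-\rho})$ (via Lemma~\ref{2cldecay}) with the factor $x^a$ absorbed by the exponential decay built into $\pi_\alpha$. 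This yields $\langle x\rangle^{r}[f_1(H)H,\pi_\alpha]\langle x\rangle^{r'}\in\mathcal L(\mathcal H)$ for $r+r'<1+\rho$, and after splitting the weight between the two copies of $(-g)^{1/2}$ one obtains
\[
|\langle E_2\rangle_t|\le C\,\tau^{-1-(\rho-\gamma/2)-\beta/2}\bigl(\|\phi\|^2+Q(t)\bigr).
\]
Second, this term is \emph{not} a remainder independent of $Q$: it contains $Q(t)$ itself on the right-hand side. The paper closes the estimate with Gronwall's inequality, the kernel $\tau^{-1-(\rho-\gamma/2)}$ being integrable precisely when $\gamma<2\rho$, which is the true origin of the restriction $s<\rho$. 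Both the Taylor-expansion trick and the Gronwall step are essential and should appear explicitly in your argument.
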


\begin{proof}
We may suppose $s'<\rho$, since if $s<\rho\leq s'$, there exists $s_0'$ such that $s<s_0'<\rho$ and we have $\langle x\rangle^{-s'}=\langle x\rangle^{-s'_0}\langle x\rangle^{-(s'-s_0')}$.
Since for any $\tilde \gamma\leq \gamma$ we have
\begin{equation}\label{myeq3.0}
-g_{\beta,\gamma,\epsilon}(x,\tau)\geq -\tau^{-\beta}(\epsilon \tau)^{\gamma-\tilde \gamma}g_{0,\tilde \gamma,\epsilon}(x,\tau),
\end{equation}
we only need to prove that $\lVert(-g_{\beta_0,\gamma_0,\epsilon}(A(\tau),\tau))^{1/2}\pi_{\alpha}\psi(t)\rVert\leq C\lVert \phi\rVert$ for any $\beta_0>0$, $\epsilon>0$ and $\gamma_0,\gamma'_0\in \mathbb R$ such that $0< \gamma_0<\gamma'_0<2\rho$ and $\phi\in \mathcal H$, where $\psi(t)=\psi_{\gamma'_0}(t)=e^{-itH}f(H)\langle x\rangle^{-\gamma'_0/2}\phi$.

Let $\epsilon$ be a positive number such that $64\epsilon<d(E)$. Set $\theta_{\epsilon}:=\eta_{\epsilon}(\frac{x^2}{4\tau^2})$, where $\eta_{\epsilon}(x)$ is a $C^{\infty}(\mathbb R)$ function such that $\eta_{\epsilon}(x)=1$ for $x<\epsilon$ and $\eta_{\epsilon}(x)=0$ for $x>2\epsilon$. We also set $\tilde \theta_{\epsilon}=1-\theta_{\epsilon}$ and $g=g_{\beta_0,\gamma_0,\epsilon}(A(\tau),\tau)$.
Then we have
\begin{equation}\label{myeq3.1}
\begin{split}
(\pi_{\alpha}\psi,g\pi_{\alpha}\psi)
=&(\pi_{\alpha}\theta_{\epsilon}\psi,g\pi_{\alpha}\theta_{\epsilon}\psi)+(\pi_{\alpha}\tilde\theta_{\epsilon}\psi,g\pi_{\alpha}\theta_{\epsilon}\psi)\\
&+(\pi_{\alpha}\theta_{\epsilon}\psi,g\pi_{\alpha}\tilde\theta_{\epsilon}\psi)+(\pi_{\alpha}\tilde\theta_{\epsilon}\psi,g\pi_{\alpha}\tilde\theta_{\epsilon}\psi).
\end{split}
\end{equation}
As for the first term of \eqref{myeq3.1} we have
\begin{equation}\label{myeq3.1.0}
g\pi_{\alpha}\theta_{\epsilon}\psi=g\pi_{\alpha}\theta_{\epsilon} f_1(H)\theta_{2\epsilon}\psi+g\pi_{\alpha}\theta_{\epsilon} f_1(H)\tilde \theta_{2\epsilon}\psi,
\end{equation}
where $f_1\in C_0^{\infty}(\mathbb R)$ such that $f_1f=f$. We can easily see $\lVert g\pi_{\alpha}\theta_{\epsilon} f_1(H)\tilde \theta_{2\epsilon}\rVert_{\mathcal L(\mathcal H)}=O(\tau^{-\infty})$. Since observing 
\begin{equation}\label{myeq3.7.0.2}
\begin{split}
&\lVert \langle A(\tau)\rangle^r(\langle p_ax_a+x_ap_a\rangle^r +\tau^r+1)^{-1}\rVert_{\mathcal L(\mathcal H)}=O(1)\ \mathrm{as}\ \tau\to \infty, \ \forall r>0,\\
&\langle p_ax_a+x_ap_a\rangle^r\langle x_a\rangle^{-r}\langle p_a\rangle^{-r}\in \mathcal L(\mathcal H),\ \forall r>0,
\end{split}
\end{equation}
it is easily seen that $g\pi_{\alpha}\theta_{\epsilon} f_1(H)=O(\tau^{\gamma_0-\beta_0})$, by Lemma \ref{Skibstedlem} we have
$$(\pi_{\alpha}\theta_{\epsilon}f_1(H) \theta_{2\epsilon}\psi,g\pi_{\alpha}\theta_{\epsilon}f_1(H) \theta_{2\epsilon}\psi)\leq C\lVert \phi\rVert^2.$$
Thus the first term of \eqref{myeq3.1} is uniformly bounded with respect to $t$. 

As for the second term we have by Taylor's formula  we have
\begin{equation}\label{myeq3.1.1}
\begin{split}
\pi_{\alpha}\theta_{2\epsilon}&=\pi_{\alpha}\eta_{2\epsilon}\left(\frac{\lvert x_a\rvert^2}{4\tau^2}\right)+\pi_{\alpha}\int_0^1\eta_{2\epsilon}^{(1)}\left(\frac{\lvert x_a\rvert^2+r\lvert x^a\rvert^2}{4\tau^2}\right)\frac{\lvert x^a\rvert^{2}}{4\tau^2}dr\\
&=\theta_{2\epsilon}\pi_{\alpha}-\int_0^1\eta_{2\epsilon}^{(1)}\left(\frac{\lvert x_a\rvert^2+r\lvert x^a\rvert^2}{4\tau^2}\right)\frac{\lvert x^a\rvert^{2}}{4\tau^2}dr\pi_{\alpha}\\
&\quad +\pi_{\alpha}\int_0^1\eta_{2\epsilon}^{(1)}\left(\frac{\lvert x_a\rvert^2+r\lvert x^a\rvert^2}{4\tau^2}\right)\frac{\lvert x^a\rvert^{2}}{4\tau^2}dr.
\end{split}
\end{equation}
Let us notice
\begin{equation}\label{myeq3.1.2}
\eta_{2\epsilon}^{(1)}\left(\frac{\lvert x_a\rvert^2+r\lvert x^a\rvert^2}{4\tau^2}\right)\theta_{\epsilon}=0\ (r\in [0,1]).
\end{equation}
As for the second term in \eqref{myeq3.1.1} we have
\begin{equation}\label{myeq3.1.3}
\begin{split}
\eta_{2\epsilon}^{(1)}\left(\frac{\lvert x_a\rvert^2+r\lvert x^a\rvert^2}{4\tau^2}\right)=&F(\lvert x^a\rvert\leq \lvert x_a\rvert)\eta_{2\epsilon}^{(1)}\left(\frac{\lvert x_a\rvert^2+r\lvert x^a\rvert^2}{4\tau^2}\right)\\
&+F(\lvert x_a\rvert< \lvert x^a\rvert)\eta_{2\epsilon}^{(1)}\left(\frac{\lvert x_a\rvert^2+r\lvert x^a\rvert^2}{4\tau^2}\right),
\end{split}
\end{equation}
where $F(\lvert x^a\rvert\leq \lvert x_a\rvert)$ and $F(\lvert x_a\rvert< \lvert x^a\rvert)$ are indicator functions for $\{x\vert \lvert x^a\rvert\leq \lvert x_a\rvert\}$ and $\{x\vert \lvert x_a\rvert< \lvert x^a\rvert\}$ respectively.
Since we have
$$\frac{\lvert x_a\rvert^2+\lvert x^a\rvert^2}{4\tau^2}\leq \frac{2\lvert x_a\rvert^2}{4\tau^2}\leq 8\epsilon$$
for $x\in \mathrm{supp}\,  \eta_{2\epsilon}^{(1)}\left(\frac{\lvert x_a\rvert^2+r\lvert x^a\rvert^2}{4\tau^2}\right)\cap \{x\vert \lvert x^a\rvert\leq \lvert x_a\rvert\},$
we obtain 
\begin{equation}\label{myeq3.1.4}
F( \lvert x^a\rvert\leq \lvert x_a\rvert)\eta_{2\epsilon}^{(1)}\left(\frac{\lvert x_a\rvert^2+r\lvert x^a\rvert^2}{4\tau^2}\right)=\theta_{8\epsilon}F(\lvert x^a\rvert\leq \lvert x_a\rvert)\eta_{2\epsilon}^{(1)}\left(\frac{\lvert x_a\rvert^2+r\lvert x^a\rvert^2}{4\tau^2}\right).
\end{equation}
Moreover, since by the exponential decay of the eigenfunction we have $\langle x^a\rangle^n\pi_{\alpha}\langle x^a\rangle^m\in \mathcal L(\mathcal H)$ for any $m, n>0$, we obtain
\begin{equation}\label{myeq3.1.5}
F(\lvert x_a\rvert< \lvert x^a\rvert)\eta_{2\epsilon}^{(1)}\left(\frac{\lvert x_a\rvert^2+r\lvert x^a\rvert^2}{4\tau^2}\right)\pi_{\alpha}=O(\tau^{-\infty}).
\end{equation}
By \eqref{myeq3.1.1}-\eqref{myeq3.1.5} and $\theta_{\epsilon}=\theta_{2\epsilon}\theta_{\epsilon}$ the second term in \eqref{myeq3.1} can be written as
\begin{equation}\label{myeq3.1.6}
\begin{split}
(\pi_{\alpha}\tilde\theta_{\epsilon}\psi,g\pi_{\alpha}\theta_{\epsilon}\psi)=\Bigg(\pi_{\alpha}\tilde\theta_{\epsilon}&\psi,g\theta_{8\epsilon}F(\lvert x^a\rvert\leq \lvert x_a\rvert)\int_0^1\eta_{2\epsilon}^{(1)}\left(\frac{\lvert x_a\rvert^2+r\lvert x^a\rvert^2}{4\tau^2}\right)\\
&\cdot\frac{\lvert x^a\rvert^{2}}{4\tau^2}dr\pi_{\alpha}\theta_{\epsilon}\psi\Bigg)+O(\tau^{-\infty})
\end{split}
\end{equation}

Noting $\theta_{8\epsilon}=\theta_{16\epsilon}\theta_{8\epsilon}$ and $\left\lVert\mathrm{ad}_{A(\tau)}^n(\theta_{16\epsilon})\right\rVert_{\mathcal L(\mathcal H)}=O(1)$ and using an almost analytic extension we have
$$g\theta_{8\epsilon}=\theta_{16\epsilon}\sum_{n=0}^{[\gamma_0]}\mathrm{ad}_{\theta_{16\epsilon}}^n(g)\theta_{8\epsilon}+\mathrm{ad}_{\theta_{16\epsilon}}^{[\gamma_0]+1}(g)\theta_{8\epsilon},$$
where $\lVert\mathrm{ad}_{\theta_{16\epsilon}}^{[\gamma_0]+1}(g)\theta_{8\epsilon}\rVert_{\mathcal L(\mathcal H)}=O(1)$ (to justify this calculation we need to use an approximation argument for $g$ as in the proof of Lemma \ref{timederi}). Finally, noting $\theta_{16\epsilon}=\theta_{32\epsilon}\theta_{16\epsilon}$ and commuting $\pi_{\alpha}$ and $\theta_{32\epsilon}$, by the similar argument as in the estimate of the first term of \eqref{myeq3.1}, the right-hand side of \eqref{myeq3.1.6} is bounded by $C\lVert \phi\rVert^2$.
 In the same way it is seen that the third term of \eqref{myeq3.1} is uniformly bounded with respect to $t$.
 
It remains to estimate the fourth term of \eqref{myeq3.1}. In the following lemma we use the functions $b_{\gamma,\epsilon}(x):=(-x)^{\gamma}\chi_{\epsilon}(x),\ c_{\gamma,\epsilon}(x):=(-x)^{(\gamma-1)/2}\tilde\chi_{\epsilon}(x)$ and their almost analytic extensions (see Adachi \cite{Ad}).
\begin{lem}\label{timederi}
Set $\tilde\theta=\tilde\theta_{\epsilon},\ g=g_{\beta,\gamma,\epsilon}$, $\psi(t)=e^{-itH}f(H)\langle x\rangle^{-\gamma'/2}\phi$, where $\epsilon>0$, $\beta\geq 0$ and $2\rho>\gamma'\geq \gamma>0$. Then with the convention $\langle P\rangle_t=\langle\psi(t),P\psi(t)\rangle$ for an operator or a form $P$, $\langle \tilde \theta\pi_{\alpha}g(A(\tau),\tau)\pi_{\alpha}\tilde\theta\rangle_t$ is absolutely continuous with $\frac{d}{dt}\langle \tilde \theta\pi_{\alpha}g(A(\tau),\tau)\pi_{\alpha}\tilde\theta\rangle_t=\langle \sum_{i=1}^{14}E_i\rangle_t$, where $E_i$ is given as follows:
\begin{align*}
&E_1=\tilde \theta\pi_{\alpha}\left(\frac{\partial}{\partial\tau}g\right)(A(\tau),\tau)\pi_{\alpha}\tilde\theta,\\
&E_2=2\mathrm{Re}(\tilde\theta i[f_1(H)H,\pi_{\alpha}]g(A(\tau),\tau)\pi_{\alpha}\tilde\theta),\\
&E_3=2\mathrm{Re}(D_1\tilde\theta \pi_{\alpha}g(A(\tau),\tau)\pi_{\alpha}\tilde\theta),
\end{align*}
with $D_1B(\tau)=i\mathrm{ad}_{B(\tau)}(f_1(H)H)+d_{\tau}B(\tau)$,
\begin{equation*}
E_4=\tilde\theta\pi_{\alpha}\tilde g(A(\tau),\tau)f_2(H)DA(\tau)f_2(H)\tilde g(A(\tau),\tau)\pi_{\alpha}\tilde\theta,
\end{equation*}
with $DB(\tau)=i\mathrm{ad}_{B(\tau)}(H)+d_{\tau}B(\tau)$, $\tilde g=(g^{(1)})^{1/2}$ and $f_2\in C_0^{\infty}(\mathbb R)$ satisfies $f_1f_2=f_1$, 
\begin{align*}
&E_5=\tilde\theta\pi_{\alpha}(1-f_2(H))\tilde g(A(\tau),\tau)D_1A(\tau)f_2(H)\tilde g(A(\tau),\tau)\pi_{\alpha}\tilde\theta,\\
&E_6=\tilde\theta\pi_{\alpha}\sum_{m=1}^{k}(m!)^{-1}\tilde g^{(m)}(A(\tau),\tau)\mathrm{ad}_{A(\tau)}^m(f_2(H))D_1A(\tau)f_2(H)\tilde g(A(\tau),\tau)\pi_{\alpha}\tilde\theta,\\
&E_7=\tilde\theta\pi_{\alpha}R_0(\tau)D_1A(\tau)f_2(H)\tilde g(A(\tau),\tau)\pi_{\alpha}\tilde\theta,
\end{align*}
with 
\begin{equation}\label{myeq3.1.8}
k=
\begin{cases}
[\gamma]/2+1 &\mathrm{if}\ [\gamma]+1\ \mathrm{is\ odd}\\
([\gamma]+1)/2 &\mathrm{if}\ [\gamma]+1\ \mathrm{is\ even}
\end{cases}
\end{equation}
and
$$R_0(\tau)=\frac{(-1)^k}{2\pi i}\tau^{(\gamma-1-\beta)/2-k-1}\int_{\mathbb C}\bar \partial_z\tilde  c(z)(\hat A-z)^{-k-1}\mathrm{ad}_{A(\tau)}^{k+1}(f_2(H))(\hat A-z)^{-1}dz\wedge d\bar z,$$
where $\tilde c$ is an almost analytic extension of $c_{\gamma,\epsilon}$ and $\hat A=\tau^{-1}A(\tau)$.
\begin{align*}
&E_8=\tilde\theta\pi_{\alpha}\tilde g(A(\tau),\tau)\sum_{m=1}^{k}\frac{(-1)^m}{m!}\mathrm{ad}_{A(\tau)}^m(D_1A(\tau)f_2(H))\tilde g^{(m)}(A(\tau),\tau)\pi_{\alpha}\tilde\theta,\\
&E_9=\tilde\theta\pi_{\alpha}\tilde g(A(\tau),\tau)R_1(\tau)\pi_{\alpha}\tilde\theta
\end{align*}
with $R_1(\tau)=-\frac{1}{2\pi i}\tau^{(\gamma-1-\beta)/2-k-1}\int_{\mathbb C}\bar \partial_z\tilde c(z)(\hat A-z)^{-1}\mathrm{ad}_{A(\tau)}^{k+1}(D_1A(\tau)f_2(H))(\hat A-z)^{-k-1}dz\wedge d\bar z$.
\begin{align*}
&E_{10}=\tilde\theta\pi_{\alpha}\tilde g(A(\tau),\tau)\sum_{m=0}^{k}\frac{(-1)^{m}}{m!}\mathrm{ad}_{A(\tau)}^m(D_1A(\tau))\tilde g^{(m)}(A(\tau),\tau)(1-f_2(H))\pi_{\alpha}\tilde\theta,\\
&E_{11}=\tilde\theta\pi_{\alpha}\tilde g(A(\tau),\tau)R_2(\tau)(1-f_2(H))\pi_{\alpha}\tilde\theta
\end{align*}
with $R_2(\tau)=-\frac{1}{2\pi i}\tau^{(\gamma-1-\beta)/2-k-1}\int_{\mathbb C}\bar \partial_z\tilde c(z)(\hat A-z)^{-1}\mathrm{ad}_{A(\tau)}^{k+1}(D_1A(\tau))(\hat A-z)^{-k-1}dz\wedge d\bar z$,
\begin{align*}
E_{12}=&\tilde\theta\pi_{\alpha}\sum_{m=2}^{k_1}g_m(A(\tau),\tau)h_m(A(\tau),\tau)\\
&\cdot\sum_{m_1=0}^{[\gamma]+1-m}\frac{(-1)^{m_1}}{m_1!}\mathrm{ad}_{A(\tau)}^{m+m_1-1}(D_1A(\tau))g_m^{(m_1)}(A(\tau),\tau)\pi_{\alpha}\tilde\theta,
\end{align*}
with $k_1=\max\{2,[\gamma]+1\}$, $g_m=g_{\beta/2,((\gamma-m)/2)_+,\epsilon/2}$ and
$$h_m(x,\tau)=\frac{\tau^{\beta}}{m!}(-x)^{-(\gamma-m)_+}g^{(m)}(x,\tau).$$
\begin{align*}
&E_{13}=\tilde\theta\pi_{\alpha}\sum_{m=2}^{k_1}g_m(A(\tau),\tau)h_m(A(\tau),\tau)R_{m+1}(\tau)\pi_{\alpha}\tilde\theta,
\end{align*}
with
\begin{align*}
R_{m+1}(\tau)=&\frac{1}{2\pi i}\tau^{((\gamma-m)/2)_+-\beta/2-[\gamma]+m-2}\\
&\int_{\mathbb C}\bar \partial_z\tilde b_m(z)(\hat A-z)^{-1}\mathrm{ad}_{A(\tau)}^{[\gamma]+1}(D_1A(\tau))(\hat A-z)^{-[\gamma]+m-2}dz\wedge d\bar z,
\end{align*}
where $\tilde b_m$ is an almost analytic extension of $b_{((\gamma-m)/2)_+,\epsilon/2}$,
\begin{align*}
&E_{14}=\tilde\theta\pi_{\alpha}R'(\tau)\pi_{\alpha}\tilde\theta,
\end{align*}
with $R'(\tau)=\frac{(-1)^{k_1}}{2\pi i}\tau^{\gamma-\beta-k_1-1}\int_{\mathbb C}\bar \partial_z\tilde b(z)(\hat A-z)^{-k_1-1}\mathrm{ad}_{A(\tau)}^{k_1}(D_1A(\tau))(\hat A-z)^{-1}dz\wedge d\bar z$ where $\tilde b$ is an almost analytic extension of $b_{\gamma,\epsilon}$.
\end{lem}
\begin{proof}
We set $g_{\delta}(x,\tau)=g_{\beta,\gamma,\epsilon}(x,\tau)F^2(\delta \tau^{-1}x)$, where $F\in C_0^{\infty}(\mathbb R)$ is real and $F(x)=1$ for $\lvert x\rvert<1$. Since $\mathrm{ad}^m_{A(\tau)}(H)$ is a sum of terms of the form $C\Delta_a$ or $C(x_a\nabla_a)^n\tilde V_a,\ n\in \mathbb N,\ n\leq m$, where $C$ is a constant and $\tilde V_a=\sum_{(ij)\nless a}V_{ij}(x_i-x_j)$, by Lemma \ref{2cldecay} and Assumption \ref{potentialas}, $A(\tau)$ satisfies Assumption \ref{commutatoras2} of the appendix. Therefore Lemma \ref{commutatorlem2} holds for $A(\tau)$.

We set $b_{\delta}(x)=b_{\gamma,\epsilon}(x)F(\delta x)$. Then we can write $g_{\delta}(x,\tau)=-\tau^{\gamma-\beta}b_{\delta}(\tau^{-1}x)$, and use an almost analytic extension of $b_{\delta}$ (see Adachi \cite{Ad}). Since $\lambda_{\alpha}$ is in the discrete spectrum, by the exponential decay of the eigenfunction we have $\langle x^a\rangle^n\pi_{\alpha}\in \mathcal L(\mathcal H)$ for any $n\in \mathbb N$. Thus observing
$$-(\gamma-\beta)\tau^{\gamma-\beta-1}b_{\delta}(\tau^{-1}A(\tau))+\tau^{\gamma-\beta-2}A(\tau)b_{\delta}^{(1)}(\tau^{-1}A(\tau))=\left(\frac{\partial}{\partial\tau}g_{\delta}\right)(A(\tau),\tau),$$
and $-\tau^{\gamma-\beta-m}b_{\delta}^{(m)}(\tau^{-1}A(\tau))=g_{\delta}^{(m)}(A(\tau),\tau)$, by Lemma \ref{commHeSj} for any $\delta,t'>0$,
\begin{equation}\label{myeq3.2}
\langle \tilde \theta\pi_{\alpha}g_{\delta}(A(\tau'),\tau')\pi_{\alpha}\tilde\theta\rangle_{t'}-\langle \tilde \theta\pi_{\alpha}g_{\delta}(A(t_0),t_0)\pi_{\alpha}\tilde\theta\rangle_0
=\int_0^{t'}\langle D\rangle_{t},
\end{equation}
where $\tau'=t'+t_0$ and
\begin{equation}\label{myeq3.3}
\begin{split}
D&=\tilde \theta\pi_{\alpha}\left(\frac{\partial}{\partial\tau}g_{\delta}\right)(A(\tau),\tau)\pi_{\alpha}\tilde\theta+2\mathrm{Re}(\tilde\theta i[f_1(H)H,\pi_{\alpha}]g_{\delta}(A(\tau),\tau)\pi_{\alpha}\tilde\theta)\\
&+2\mathrm{Re}(D_1\tilde\theta \pi_{\alpha}g_{\delta}(A(\tau),\tau)\pi_{\alpha}\tilde\theta)+\sum_{m=1}^{k_1}(m!)^{-1}\tilde\theta\pi_{\alpha}g^{(m)}_{\delta}\mathrm{ad}^{m-1}_{A(\tau)}(D_1A(\tau))\pi_{\alpha}\tilde\theta\\
&+\tilde\theta\pi_{\alpha}\frac{(-1)^{k_1+1}}{2\pi i}\tau^{\gamma-\beta-k_1-1}\int_{\mathbb C}\bar \partial_z\tilde b_{\delta}(z)(\hat A-z)^{-k_1-1}\\
&\qquad\cdot\mathrm{ad}_{A(\tau)}^{k_1}(D_1A(\tau))(\hat A-z)^{-1}dz\wedge d\bar z\pi_{\alpha}\tilde\theta,
\end{split}
\end{equation}
where $\tilde b_{\delta}$ is an almost analytic extension of $b_{\delta}$. We rewrite the right-hand side of \eqref{myeq3.3} as sum of forms $\sum_{l=1}^{14}E_{l,\delta}(\tau)$ such that
\begin{equation}\label{myeq3.4}
\langle E_{l,\delta}(\tau)\rangle_t\to\langle E_{l}(\tau)\rangle_t,\ \mathrm{as}\ \delta\to 0,
\end{equation}
and for any $T>0$
\begin{equation}\label{myeq3.5}
\lvert\langle E_{l,\delta}(\tau)\rangle_t\rvert\leq C_T,\ \forall 0<t<T,\ 0<\delta<1.
\end{equation}

Since by Lemma \ref{communi} the left-hand side of \eqref{myeq3.2} goes to
$$\langle \tilde \theta\pi_{\alpha}g(A(\tau'),\tau')\pi_{\alpha}\tilde\theta\rangle_{t'}-\langle \tilde \theta\pi_{\alpha}g(A(t_0),t_0)\pi_{\alpha}\tilde\theta\rangle_0,$$
as $\delta\to 0$ the lemma follows from the dominated convergence theorem.

The first three terms of \eqref{myeq3.3} are denoted by $E_{1,\delta}$,  $E_{2,\delta}$ and $E_{3,\delta}$ respectively. Then, we can easily see that for $l=1,2,3$, \eqref{myeq3.4} and \eqref{myeq3.5} holds.

We look at the contributions from the fourth term $\tilde\theta\pi_{\alpha}g^{(1)}_{\delta}D_1A(\tau)\pi_{\alpha}\tilde\theta$ on the right hand side of \eqref{myeq3.3}. Using the abbreviations $g=g(A(\tau),\tau)$, $\tilde g=\tilde g(A(\tau),\tau)$, $F=F(\delta \tau^{-1}A(\tau))$ and $F'=\frac{\partial}{\partial x}F(\delta \tau^{-1}x)_{\vert x=A(\tau)}$ we have
\begin{align*}
&\tilde\theta\pi_{\alpha}g^{(1)}_{\delta}D_1A(\tau)\pi_{\alpha}\tilde\theta=D_1+\dotsm +D_5,\\
&D_1=\tilde\theta\pi_{\alpha}2\tau^{-1}FF'g D_1A(\tau)\pi_{\alpha}\tilde\theta,\\
&D_2=\tilde\theta\pi_{\alpha}F\tilde gf_2(H)D_1A(\tau)f_2(H)\tilde gF\pi_{\alpha}\tilde\theta,\\
&D_3=\tilde\theta\pi_{\alpha}F\tilde g(1-f_2(H))D_1A(\tau)f_2(H)\tilde gF\pi_{\alpha}\tilde\theta,\\
&D_4=-\tilde\theta\pi_{\alpha}F\tilde g[D_1A(\tau)f_2(H),\tilde gF]\pi_{\alpha}\tilde\theta,\\
&D_5=\tilde\theta F^2\tilde g^2D_1A(\tau)(1-f_2(H))\pi_{\alpha}\tilde\theta.
\end{align*}

As for $D_1$ it is easy to see that $\langle D_1\rangle_t\to 0$ as $\delta\to 0$ and \eqref{myeq3.5} holds. As for $D_2$ by Lemma \ref{communi} and spectral theorem $\langle D_2\rangle_t\to \langle E_4\rangle_t$ as $\delta\to 0$ and \eqref{myeq3.5} holds. By Lemma \ref{commutatorlem2} and Lemma \ref{commHeSj} we see that $\langle D_3\rangle_t \to \langle E_5\rangle_t+\langle E_6\rangle_t+\langle E_7\rangle_t$ and \eqref{myeq3.5} holds. In the same way we have $\langle D_4\rangle_t \to \langle E_8\rangle_t+\langle E_9\rangle_t$ and $\langle D_5\rangle_t \to \langle E_{10}\rangle_t+\langle E_{11}\rangle_t$. The terms with $m>1$ of the fourth term of \eqref{myeq3.3} contributes with $\langle E_{12}\rangle_t+\langle E_{13}\rangle_t$ in the limit $\delta\to 0$ and the last term of \eqref{myeq3.3} contributes with $\langle E_{14}\rangle_t$.
\end{proof}

Now we shall prove 
\begin{equation}\label{myeq3.6}
(\pi_{\alpha}\tilde\theta_{\epsilon}\psi_{\gamma'_0},g_{\beta_0,\gamma_0,\epsilon}(A(\tau),\tau)\pi_{\alpha}\tilde\theta_{\epsilon}\psi_{\gamma'_0})\leq C\lVert \phi \rVert^2,
\end{equation}
for any $\beta_0>0$, $\epsilon>0$ and $0\leq \gamma_0<\gamma'_0<2\rho$.

Under the assumption of Lemma \ref{timederi} we have for any $t'>0$

\begin{equation}\label{myeq3.6.0}
\langle -\tilde \theta \pi_{\alpha}g_{\beta,\gamma,\epsilon}(A(\tau'), \tau') \pi_{\alpha}\tilde \theta\rangle_{t'}=\langle -\tilde \theta \pi_{\alpha}g_{\beta,\gamma,\epsilon}(A(t_0), t_0) \pi_{\alpha}\tilde \theta\rangle_0-\int_0^{t'}dt\langle E_1+\dotsm+E_{14}\rangle_t,
\end{equation}
where $\tau'=t'+t_0$.

We notice that the following estimates hold as $\tau\to \infty$ for $\beta,\ \gamma$ and $\gamma'$ as in Lemma \ref{timederi}:

\begin{equation}\label{myeq3.61}
\begin{split}
&\lvert\langle E_2\rangle_t\rvert\leq O(\tau^{-1-(\rho-\gamma/2)-\beta/2})(\lVert \phi\rVert^2 -\langle \tilde \theta\pi_{\alpha} g_{\beta,\gamma,\epsilon}(A(\tau),\tau)\pi_{\alpha}\tilde\theta)\rangle_t),\\
&\langle E_3\rangle_t=\langle \theta_{4\epsilon}O(\tau^{\gamma-\beta-1})\theta_{4\epsilon}\rangle_t+O(\tau^{-\infty})\lVert \phi \rVert^2,\\
&\langle E_4\rangle_t\geq \langle \tilde\theta\pi_{\alpha}\tilde g(A(\tau),\tau)\langle x\rangle^{-\rho/2}f_3(H)O(1)f_3(H)\langle x\rangle^{-\rho/2}\tilde g(A(\tau),\tau)\pi_{\alpha}\tilde\theta\rangle_t\\
&\quad\qquad+\langle O(\tau^{-3/2})g_{\beta,(\gamma-\rho-1/2)_+,\epsilon/2}(A(\tau),\tau)\pi_{\alpha}\tilde\theta\rangle_t\\
&\quad\qquad+\langle \tilde\theta\pi_{\alpha}g_{\beta,(\gamma-\rho-1/2)_+,\epsilon/2}(A(\tau),\tau)O(\tau^{-3/2})\rangle_t+O(\tau^{-\infty})\lVert \phi \rVert^2,\\
&\mathrm{with}\ f_3f_2=f_2,\\
&\langle E_5\rangle_t=\langle O(\tau^{-3/2-\rho-\beta/2+\gamma/2})\tilde g(A(\tau),\tau)\pi_{\alpha}\tilde\theta\rangle_t\\
&\quad\qquad+\langle\theta_{2\epsilon}O(\tau^{-3/2-\beta/2+\gamma/2})\tilde g(A(\tau),\tau)\pi_{\alpha}\tilde\theta\rangle_t+O(\tau^{-\infty})\lVert \phi \rVert^2,\\
&\langle E_6\rangle_t=\langle \tilde\theta\pi_{\alpha}\sum_{m=1}^{k}\tilde g^{(m)}(A(\tau),\tau)O(1)\tilde g(A(\tau),\tau)\pi_{\alpha}\tilde\theta\rangle_t,\\
&\langle E_7\rangle_t=\langle O(\tau^{(\gamma-1-\beta)/2-k-1})\tilde g(A(\tau),\tau)\pi_{\alpha}\tilde\theta\rangle_t,\\
&\langle E_8\rangle_t=\langle \tilde\theta\pi_{\alpha}\tilde g(A(\tau),\tau)\sum_{m=1}^{k}O(1)\tilde g^{(m)}(A(\tau),\tau)\pi_{\alpha}\tilde\theta\rangle_t,\\
&\langle E_9\rangle_t=\langle \tilde\theta\pi_{\alpha}\tilde g(A(\tau),\tau)O(\tau^{(\gamma-1-\beta)/2-k-1})\rangle_t,\\
&\langle E_{10}\rangle_t=\langle\tilde\theta\pi_{\alpha}\tilde g(A(\tau),\tau)\sum_{m=0}^{k}(O(\tau^{-3/2-\rho-\beta/2-m+\gamma/2})\\
&\quad\qquad+O(\tau^{-3/2-\beta/2-m+\gamma/2})\theta_{2\epsilon})\rangle_t+O(\tau^{-\infty})\lVert \phi \rVert^2,\\
&\langle E_{11}\rangle_t=\langle \tilde\theta\pi_{\alpha}\tilde g(A(\tau),\tau)O(\tau^{(\gamma-1-\beta)/2-k-1})\rangle_t\\
&\langle E_{12}\rangle_t=\langle \tilde\theta\pi_{\alpha}\sum_{m=2}^{k_1}g_m(A(\tau),\tau)O(\tau^{-(m-\gamma)_+})\sum_{m_1=0}^{[\gamma]+1-m}g_m^{(m_1)}(A(\tau),\tau)\pi_{\alpha}\tilde\theta\rangle_t,\\
&\langle E_{13}\rangle_t=\langle \tilde\theta\pi_{\alpha}\sum_{m=2}^{k_1}g_m(A(\tau),\tau)O(\tau^{-(m-\gamma)_++((\gamma-m)_+-\beta)/2-k_2-1})\rangle_t,\\
&\langle E_{14}\rangle_t=O(\tau^{\gamma-\beta-k_1-1})\lVert \phi \rVert^2.
\end{split}
\end{equation}

As for $E_2$ we notice that
\begin{equation}\label{myeq3.7.0}
[f_1(H)H,\pi_{\alpha}]=\frac{1}{2\pi i}\int_{\mathbb C}\bar \partial_z F_1(z)(H-z)^{-1}[\tilde V_a,\pi_{\alpha}](H-z)^{-1}dz\wedge d\bar z,
\end{equation}
where $F_1$ is the almost analytic extension of $xf_1(x)$ and
\begin{equation}\label{myeq3.7.0.1}
\begin{split}
\tilde V_a(x^a,x_a)\pi_{\alpha}&=\tilde V_a(0,x_a)\pi_{\alpha}+\int_0^1x^a\nabla^a \tilde V_a(rx^a,x_a)dr\pi_{\alpha}\\
&=\pi_{\alpha}\tilde V_a(x^a,x_a)-\pi_{\alpha}\int_0^1x^a\nabla^a \tilde V_a(rx^a,x_a)dr\\
&\quad+\int_0^1x^a\nabla^a \tilde V_a(rx^a,x_a)dr\pi_{\alpha}.
\end{split}
\end{equation}
 
We also notice that
$$[(\nabla_a)_j,\int^1_0(\nabla_a)^{\delta}x^a\nabla^a\tilde V_a(rx^a,x_a)dr]=\int^1_0(\nabla_a)^{\delta'}x^a\nabla^a\tilde V_a(rx^a,x_a)dr.$$
Here, $(\nabla_a)_j$ is the differential with respect to $j$-th component of $x_a$ for some basis in $X_a$, $\delta\in \mathbb N^{\mathrm{dim}X_a}$ and $\delta'=\delta+(0,\dotsm,1,\dotsm,0)$ where $(0,\dotsm,1,\dotsm,0)$ is the vector whose $j$-th component is $1$ and the others are $0$.
Setting $\tilde x=(rx^a,x_a)$,  we have $\lvert \tilde x_{ij}\rvert>C'\lvert \tilde x\rvert>C_1 \lvert x_a\rvert>C_2 \lvert x\rvert$ with some constants $C',C_1,C_2>0$. Thus for any $n\in \mathbb N$ when $\lvert x_a\rvert> C\lvert x^a\rvert$ for sufficiently large $C>0$, it follows that $\lVert\langle p_a\rangle^{2n}\langle x\rangle^{1+\rho}\langle x^a\rangle^{-1-\rho}\nabla^a \tilde V_a(rx^a,x_a)\langle p_a\rangle^{-2n}\rVert_{\mathcal L(\mathcal H)}<C'$ for some $C'>0$ uniformly with respect to $0\leq r\leq 1$.
Thus observing that $\langle x^a\rangle^l\pi_{\alpha}\in \mathcal L(\mathcal H),\ \forall l\in \mathbb R$, it follows that
\begin{equation}\label{myeq3.7.0.3}
\langle p_a\rangle^{2n}\langle x\rangle^{\gamma/2}[f_1(H)H,\pi_{\alpha}]\langle x\rangle^{1+(\rho-\gamma/2)}\langle p_a\rangle^{-2n},[f_1(H)H,\pi_{\alpha}]\langle x\rangle^{-1-\rho}\in\mathcal L(\mathcal H)
\end{equation}
By the interpolation argument we have
\begin{equation}\label{myeq3.7.0.4}
\langle p_a\rangle^{\gamma/2}\langle x\rangle^{\gamma/2}[f_1(H)H,\pi_{\alpha}]\langle x\rangle^{1+(\rho-\gamma/2)}\langle p_a\rangle^{-\gamma/2},\in\mathcal L(\mathcal H)
\end{equation}
Since $\lvert x\rvert>2\epsilon\tau$ on $\mathrm{supp}\ \tilde \theta$ we can see that
\begin{equation}\label{myeq3.7.0.5}
\lVert\tau^{\gamma/2}\langle x\rangle^{-1-\rho}\tilde \theta\rVert\leq C\tau^{-1-(\rho-\gamma/2)},
\end{equation}
and
\begin{equation}\label{myeq3.7.0.6}
\lVert\langle p_a\rangle^{\gamma/2}\langle x\rangle^{-1-(\rho-\gamma/2)}\tilde \theta\langle p_a\rangle^{-\gamma/2}\rVert\leq C\tau^{-1-(\rho-\gamma/2)},
\end{equation}
By the first estimate in \eqref{myeq3.7.0.2} we can see that $\lVert(-g)^{1/2}(A(\tau),\tau)(\langle p_ax_a+x_ap_a\rangle^{-\gamma/2}+\tau^{-\gamma/2})^{-1}\rVert_{\mathcal L(\mathcal H)}\leq C\tau^{-\beta/2}$.
Thus by the second estimate in \eqref{myeq3.7.0.2}, \eqref{myeq3.7.0.3}, \eqref{myeq3.7.0.4}, \eqref{myeq3.7.0.5}, \eqref{myeq3.7.0.6} and $\langle p_a\rangle^{\gamma/2}f_1(H)\in \mathcal L(\mathcal H)$ we obtain
$$\lVert(-g)^{1/2}(A(\tau),\tau)[f_1(H)H,\pi_{\alpha}]\tilde \theta f_1(H)\rVert\leq C\tau^{-1-(\rho-\gamma/2)-\beta/2}.$$
Since $\psi(t)=f_1(H)\psi(t)$ ,
 and Lemma \ref{2cldecay} we obtain
$$\lVert(-g)^{1/2}(A(\tau),\tau) [f_1(H)H,\pi_{\alpha}]\tilde\theta\psi(t)\rVert\leq C\tau^{-1-(\rho-\gamma/2)-\beta/2}\lVert \phi\rVert,$$
for sufficiently large $C>0$.

Therefore by H\"older's inequality and that $ab\leq 2^{-1}(a^2+b^2),\ \forall a,b>0$ we have
\begin{align*}
\lvert \langle E_2\rangle \rvert_t\rvert&\leq \lVert(-g)^{1/2}(A(\tau),\tau) [f_1(H)H,\pi_{\alpha}]\tilde\theta\psi(t)\rVert\lVert(-g)^{1/2}(A(\tau),\tau)\pi_{\alpha}\tilde\theta\psi(t)\rVert\\
&\leq C\tau^{-1-(\rho-\gamma/2)-\beta/2}\lVert \phi\rVert\lVert(-g)^{1/2}(A(\tau),\tau)\pi_{\alpha}\tilde\theta\psi(t)\rVert\\
&\leq 2^{-1}C\tau^{-1-(\rho-\gamma/2)-\beta/2}(\lVert \phi\rVert^2+\lVert(-g)^{1/2}(A(\tau),\tau)\pi_{\alpha}\tilde\theta\psi(t)\rVert^2)\\
&=\leq 2^{-1}C\tau^{-1-(\rho-\gamma/2)-\beta/2}(\lVert \phi\rVert^2-\langle \tilde\theta\pi_{\alpha}g_{\beta,\gamma,\epsilon}(A(\tau),\tau)\pi_{\alpha}\tilde\theta\rangle_t)
\end{align*}
The estimate for $E_3$ follows from $\theta_{2\epsilon}\theta_{\epsilon}=\theta_{\epsilon}$ and $\theta_{\epsilon}\langle x\rangle^r\leq C\tau^r$ for all $r\in \mathbb R$. The estimate for $E_4$ follows easily from the following estimates:
\begin{equation}\label{myeq3.7.1}
\begin{split}
i[H,A(\tau)]&=-2\Delta_a-x_a\nabla_a\tilde V_a,\\
\pi_{\alpha}(-2\Delta_a-x_a\nabla_a\tilde V_a)\pi_{\alpha}&=\pi_{\alpha}(-2\Delta_a-x_a\nabla_a\tilde V_a+2H^a-2\lambda_{\alpha})\pi_{\alpha}\\
&=\pi_{\alpha}(2H-2\lambda_{\alpha}-2\tilde V_a-x_a\nabla_a\tilde V_a)\pi_{\alpha},\\
f(H)(2H-2\lambda_{\alpha}-2\tilde V_a-x_a\nabla_a\tilde V_a)&f(H)\\
\geq f(H)(2&(E-\lambda_{\alpha})-\epsilon_1-2\tilde V_a-x_a\nabla_a\tilde V_a)f(H).
\end{split}
\end{equation}
where $\epsilon_1$ is sufficiently small. The estimate for $E_5$ and $E_{10}$ follows from $f(H)\tilde\theta\pi_{\alpha}(1-f_2(H))=f(H)[\tilde\theta,f_2(H)]\pi_{\alpha}+f(H)\tilde\theta[\pi_{\alpha},f_2(H)]$. The estimates of $E_7$, $E_9$, $E_{11}$, $E_{13}$ and $E_{14}$ follow from that $\lVert (A-z)^{-1}\rVert_{\mathcal L(L^2(X))}\leq \lvert\mathrm{Im}z\rvert$, and for any $M\geq 0$
$$\bar\partial_z\tilde c(z)\leq C_M\langle z\rangle^{(\gamma-1)/2-1-M}\lvert\mathrm{Im}z\rvert^M,$$
and so on. 

We shall prove \eqref{myeq3.6} by showing by induction in $n\in \mathbb N$ the statement $p(n)$ that \eqref{myeq3.6} holds under the further restriction $n-1<\gamma_0\leq n$ (we suppose $p(0)$ holds). 
Assuming $p(n)$ is true, we shall verify $p(n+1)$. First, we shall  prove 
\begin{equation}\label{myeq3.7.2}
\lVert(-g_{0,n,\epsilon}(A(\tau'), \tau'))^{1/2}\pi_{\alpha}\tilde \theta\psi_{\gamma'}(t)\rVert\leq C\lVert \phi \rVert,
\end{equation}
with $2\rho>\gamma'>n$.

\eqref{myeq3.7.2} is obvious for $n=0$. When $n\geq1$, from the assumption  we have
\begin{equation}\label{myeq3.8}
\lVert(-g_{\beta',n,\epsilon}(A(\tau'), \tau'))^{1/2}\pi_{\alpha}\tilde \theta\psi_{\gamma'}(t)\rVert\leq C\lVert \phi \rVert,
\end{equation}
for any $\beta'>0$.
By \eqref{myeq3.0} and \eqref{myeq3.8} we have
\begin{equation}\label{myeq3.8.1}
\lVert(-g_{0,n-1,\epsilon}(A(\tau'), \tau'))^{1/2}\pi_{\alpha}\tilde \theta\psi_{\gamma'}(t)\rVert\leq C\tau^{-(1-\beta')/2}\lVert \phi \rVert,
\end{equation}
 (see Skibsted \cite[Corollary 2.5]{Sk}).

To prove \eqref{myeq3.7.2} we use \eqref{myeq3.6.0}, \eqref{myeq3.61} and \eqref{myeq3.8.1}. The term $\langle E_1\rangle_t$ is positive and therefore negligible. By \eqref{myeq3.61} with $\beta=0$, $2\rho>\gamma'>\gamma=n$, Lemma \ref{Skibstedlem} and \eqref{myeq3.8.1} we have for any $\gamma'-\gamma>\mu>0$
\begin{equation}\label{myeq3.9}
\begin{split}
&\langle E_3\rangle_t= O(t^{-1-\mu})\lVert \phi\rVert^2,\\
&\langle E_4\rangle_t\geq O(t^{-\rho-1+\beta'})\lVert \phi\rVert^2+O(t^{-3/2})\lVert \phi\rVert^2,\\
&\langle E_5\rangle_t,\ \langle E_{10}\rangle_t=O(t^{-2+\beta'/2})\lVert \phi\rVert^2,\\
&\langle E_6\rangle_t,\ \langle E_8\rangle_t=\sum_{m=1}^kO(t^{-1-m+\beta'})\lVert \phi\rVert^2,\\
&\langle E_7\rangle_t,\ \langle E_9\rangle_t,\ \langle E_{11}\rangle_t=O(t^{-3/2+\beta'/2})\lVert \phi\rVert^2,\\
&\langle E_{12}\rangle_t=\sum_{m=2}^{k_1}\sum_{m_1=1}^{[\gamma]+1-m}O(t^{-m-m_1+\beta'})\lVert \phi\rVert^2,\\
&\langle E_{13}\rangle_t=O(t^{-2+\beta'/2})\lVert \phi\rVert^2,\\
&\langle E_{14}\rangle_t=O(t^{-2})\lVert \phi\rVert^2.\\
\end{split}
\end{equation}
Since the right-hand sides of these estimates are integrable for sufficiently small $\beta'$, combining \eqref{myeq3.9} and the estimate for $E_2$ in \eqref{myeq3.61} we have
\begin{align*}
\langle -\tilde \theta \pi_{\alpha}g_{0,n,\epsilon}&(A(\tau'), \tau') \pi_{\alpha}\tilde \theta\rangle_{t'}\\
&\leq C\lVert \phi\rVert^2+C\int_0^{\tau'}\tau^{-1-\rho+n/2} \langle -\tilde \theta \pi_{\alpha}g_{0,n,\epsilon}(A(\tau), \tau) \pi_{\alpha}\tilde \theta\rangle_{t}dt.
\end{align*}
Since $\rho>n/2$, by the Gronwall inequality we obtain \eqref{myeq3.7.2}.

It remains to prove that
\begin{equation}\label{myeq3.9.1}
\lVert(-g_{\beta_0,\gamma_0,\epsilon}(A(\tau'), \tau'))^{1/2}\pi_{\alpha}\tilde \theta\psi_{\gamma'_0}(t)\rVert\leq C\lVert \phi \rVert.
\end{equation}
First, we assume $\rho\geq1$. We use again \eqref{myeq3.6.0} and \eqref{myeq3.61} in conjunction with \eqref{myeq3.7.2} and obtain the following estimates:
\newpage

\begin{equation}\label{myeq3.9.2}
\begin{split}
&\langle E_3\rangle_t= O(t^{-1-\beta_0})\lVert \phi\rVert^2,\\
&\langle E_4\rangle_t\geq O(t^{-\rho-\beta_0})\lVert \phi\rVert^2+O(t^{-3/2})\lVert \phi\rVert^2,\\
&\langle E_5\rangle_t,\ \langle E_{10}\rangle_t=O(t^{-3/2})\lVert \phi\rVert^2,\\
&\langle E_6\rangle_t,\ \langle E_8\rangle_t=\sum_{m=1}^kO(t^{-m-\beta_0})\lVert \phi\rVert^2,\\
&\langle E_7\rangle_t,\ \langle E_9\rangle_t,\ \langle E_{11}\rangle_t=O(t^{-1-\beta_0})\lVert \phi\rVert^2,\\
&\langle E_{12}\rangle_t=\sum_{m=2}^{k_1}\sum_{m_1=1}^{[\gamma]+1-m}O(t^{-m-m_1-\beta_0+1})\lVert \phi\rVert^2,\\
&\langle E_{13}\rangle_t=O(t^{-2+1/2})\lVert \phi\rVert^2,\\
&\langle E_{14}\rangle_t=O(t^{-1-\beta_0})\lVert \phi\rVert^2.\\
\end{split}
\end{equation}
Thus by the Gronwall inequality we obtain \eqref{myeq3.9.1}.

Next we assume $\rho<1$ (note that $E_4$ in \eqref{myeq3.9.2} is not integrable in this case). In this case as in the proof of Skibsted \cite[Corollary 2.6]{Sk}, setting $0<\delta<\rho$ we prove 
\begin{equation}\label{myeq3.9.3}
\lVert(-g_{\beta'_0,\gamma_0,\epsilon}(A(\tau'), \tau'))^{1/2}\pi_{\alpha}\tilde \theta\psi_{\gamma'_0}(t)\rVert\leq C\lVert \phi \rVert,
\end{equation}
with $\beta'_0=\max\{\beta_0,1-\delta\}$. Since $\langle E_4\rangle_t\geq O(t^{-\rho-\beta'_0})\lVert \phi\rVert^2+O(t^{-3/2})\lVert \phi\rVert^2,$ and the right-hand side is integrable by the Gronwall inequality we obtain \eqref{myeq3.9.3}. By iterating we obtain \eqref{myeq3.9.3} with $\beta'_0=\max\{\beta_0,1-n\delta\}$ for any $n\in \mathbb N$ which proves \eqref{myeq3.9.1}.
\end{proof}

\begin{rem}\label{critical term}
The critical term in \eqref{myeq3.6.0} is $E_2$. The terms $E_i,\ 3\leq i\leq14$ are integrable in the proof of \eqref{myeq3.7.2} and \eqref{myeq3.9.1} for $\gamma'>\gamma$ even if $\gamma'\geq 2\rho$. In \eqref{myeq3.61} $\gamma$ appears in the indices of $\tau$ in $E_j,\ j=2-5,7,9-14$. However, for $\langle E_3\rangle_t$ we use Lemma \ref{Skibstedlem} and see that it is integrable also for $\gamma'\geq 2\rho$. As for $\langle E_j\rangle_t,\ j=7,9,11,14$ by the choices of $k$ and $k_1$, we can see that they are integrable. As for $\langle E_5\rangle_t$ and $\langle E_{10}\rangle_t$ we can write
\begin{equation}\label{myeq3.9.4}
\begin{split}
(1-f_2(H))\pi_{\alpha}&=(1-f_2(-\Delta_a+\tilde V_a+\lambda_{\alpha}))\pi_{\alpha}\\
&=\pi_{\alpha}(1-f_2(-\Delta_a+\tilde V_a+\lambda_{\alpha}))-\sum_{j=1}^M\pi_{\alpha}f_2^{(j)}(-\Delta_a+\tilde V+\lambda_{\alpha})\\
&\quad-\frac{1}{2\pi i}\int\bar\partial _zF_2(z)(-\Delta_a+\tilde V_a+\lambda_{\alpha})^{-1}(-1)^M\mathrm{ad}_{-\Delta_a+\tilde V_a}^M([\tilde V_a,\pi_{\alpha}])\\
&\qquad(-\Delta_a+\tilde V_a+\lambda_{\alpha})^{-1-M}dzd\bar z\\
&=\pi_{\alpha}(1-f_2(H))-\sum_{j=1}^M\pi_{\alpha}f_2^{(j)}(H)\\
&\quad-\frac{1}{2\pi i}\int\bar\partial _zF_2(z)(-\Delta_a+\tilde V_a+\lambda_{\alpha})^{-1}(-1)^M\mathrm{ad}_{-\Delta_a+\tilde V_a}^M([\tilde V_a,\pi_{\alpha}])\\
&\qquad(-\Delta_a+\tilde V_a+\lambda_{\alpha})^{-1-M}dzd\bar z,
\end{split}
\end{equation}
where $F_2$ is the almost analytic extention of $f_2$. We have
$$\langle x\rangle^{M+1+\rho}\mathrm{ad}_{-\Delta_a+\tilde V_a}^M([\tilde V_a,\pi_{\alpha}])
(-\Delta_a+\tilde V_a+\lambda_{\alpha})^{-1-M}\in \mathcal L(\mathcal H).$$
Thus, choosing $M=[\gamma]+1$, $\langle E_5\rangle_t$ and $\langle E_{10}\rangle_t$ with $(1-f_2(H))\pi_{\alpha}$ and $\pi_{\alpha}(1-f_2(H))$ being replaced by the third term in the right-hand side of \eqref{myeq3.9.4} is integrable. Since $1-f_2$ and $f_2^{(j)}$, $j\geq 1$ are $0$ on the support of $f_1$, for any $L>0$ there exists $C>0$ such that $\lVert(1-f_2(H))\tilde \theta f_1(H)\rVert\leq C\tau^{-L}$ and  $\lVert f_2^{(j)}(H)\tilde \theta f_1(H)\rVert\leq C\tau^{-L}$. Therefore, noticing $\psi(t)=f_1(H)\psi(t)$ the contribution from the first and the second term in the right-hand side of \eqref{myeq3.9.4} are integrable. As for $\langle E_6\rangle_t$ and $\langle E_8\rangle_t$, assuming $p(n)$ is true we obtain \eqref{myeq3.9} with $\beta=0$, $\gamma'>\gamma=n$ without the restriction $\rho>\gamma'$. We also obtain \eqref{myeq3.9.2} assuming \eqref{myeq3.7.2} holds for $\beta_0>0$ and $n-1<\gamma_0\leq n$.

As for $E_2$ as mentioned in Remark \ref{comparison} the commutator $[\tilde V_a, \pi_{\alpha}]$ in \eqref{myeq3.7.0} cannot be written as
$$[\tilde V_a, \pi_{\alpha}]=O(\langle x\rangle^{-1-\rho})\pi_{\alpha}+O(\langle x\rangle^{-r})$$
where $r>0$ is sufficiently large. For we can write as
$$[\tilde V_a,\pi_{\alpha}]=\int_0^1([x^a,\pi_{\alpha}]\nabla^a\tilde V_a(rx^a,x_a)+x^a[\nabla^a\tilde V_a(rx^a,x_a),\pi_{\alpha}])dr.$$
Since for any $\epsilon$ and  $r,r'$ such that $r+r'=1+\rho+\epsilon$ we have
$$\langle x_a\rangle^{r}\nabla^a\tilde V_a(rx^a,x_a)\langle x_a\rangle^{r'}\notin \mathcal L(\mathcal H),$$
and for $r,r'$ such that $r+r'=2+\rho$,
$$\langle x\rangle^{r}x^a[\nabla^a\tilde V_a(rx^a,x_a),\pi_{\alpha}]\langle x\rangle^{r'}\in \mathcal L(\mathcal H).$$
Since $[x^a,\pi_{\alpha}]$ and $\langle x_a\rangle$ are operators on $L^2(X^a)$ and $L^2(X_a)$ respectively and $L^2(X)=L^2(X^a)\otimes L^2(X_a)$, we can see that $$\langle x_a\rangle^{r}[x^a,\pi_{\alpha}]\nabla^a\tilde V_a(rx^a,x_a)\langle x_a\rangle^{r'}\notin \mathcal L(\mathcal H),$$
for any $\epsilon$ and  $r,r'$ such that $r+r'=1+\rho+\epsilon$. Thus we have 
$$\langle x_a\rangle^{r}[\tilde V_a,\pi_{\alpha}]\langle x_a\rangle^{r'}\notin \mathcal L(\mathcal H),$$
for any $0<\epsilon\leq 1$ and  $r,r'$ such that $r+r'=1+\rho+\epsilon$. 
Therefore, for $\gamma_0\geq 2\rho$ we cannot rewrite $E_2$ in a form which is integrable.
\end{rem}

\begin{proof}[Proof of Theorem \ref{main1}]
Fix $0<E''<E'-\lambda_{\alpha}<E-\lambda_{\alpha}$. Let $M(x_a,\tau):=\left(E''-\frac{x_a^2}{4\tau^2}\right)^{1/2}$, $h\in \mathcal F_{0,1,\epsilon''}$ for $\epsilon''>0$ and $A(\tau)':=h(-\tau M,\tau)$. Since
$$g_{\beta,\gamma,\epsilon}(A(\tau)',\tau)=g_{\beta,\gamma,\epsilon}(-\tau M,\tau),$$
for any $\epsilon>2\epsilon''$, we only need to prove that $\lVert(-g_{\beta_0,\gamma_0,\epsilon}(A(\tau),\tau))^{1/2}\pi_{\alpha}\psi_{\gamma'_0}(t)\rVert\leq C\lVert \phi\rVert^2$ for any $\beta_0>0$, $2\rho>\gamma'_0> \gamma_0\geq 0$, $\epsilon>0$ and $\phi\in \mathcal H$.

Since Lemma \ref{timederi} holds with $A(\tau)$ replaced by $A(\tau)'$, the proof is almost the same as that of Lemma \ref{propagation1} except for the estimate of $E_4$. For the estimate of $E_4$ we have
\begin{align*}
DA(\tau)'&\geq (h^{(1)}(-\tau M,\tau))^{1/2}M^{-1/2}\bigg(\frac{1}{2\tau}\left(\frac{p_ax_a+x_ap_a}{2}-2(E'-\lambda_{\alpha})\tau\right)\\
&\quad  +(E'-\lambda_{\alpha})-E''\bigg)M^{-1/2}(h^{(1)}(-\tau M,\tau))^{1/2}+O(\tau^{-1})\\
&\geq (h^{(1)}(-\tau M,\tau))^{1/2}M^{-1/2}\frac{1}{2}\frac{A(\tau)}{\tau}\chi\left(\frac{A(\tau)}{\tau}<-\epsilon'\right)\\
&\quad \cdot M^{-1/2}(h^{(1)}(-\tau M,\tau))^{1/2}+O(\tau^{-1}),
\end{align*}
where $\epsilon'=E-\lambda_{\alpha}-E''$. Thus for $\gamma'_0>1$ by commuting the factor
$$\left(-\frac{A(\tau)}{\tau}\right)^{1/2}\chi\left(\frac{A(\tau)}{\tau}<-\epsilon'\right),$$
in front of $\pi_{\alpha}\tilde\theta\psi$, $\langle E_4\rangle_t$ is integrable by Lemma \ref{propagation1}.

As for $\gamma'_0\leq 1$ using the estimates \eqref{myeq3.7.0.2}
and $\langle x_a\rangle^{r}\leq C\langle \tau\rangle^{r},\ \forall r>0$ on $\mathrm{supp}\ h^{(1)}(-\tau M,\tau)$, by commuting the factor $\left(-\frac{A(\tau)}{\tau}\right)^{\tilde s}\chi\left(\frac{A(\tau)}{\tau}<-\epsilon'\right)$ in front of $\pi_{\alpha}\tilde\theta\psi$ with $\tilde s<\gamma'_0/2$, $\langle E_4\rangle_t$ is integrable by Lemma \ref{propagation1}.
\end{proof}

\section{Proof of Theorem \ref{main2}, Theorem \ref{main4} and Theorem \ref{main3}}
To prove Theorem \ref{main2} we need the propagation estimate for another operator $\tilde A(\tau)$ which is proved by the estimates of commutators similar to those in the proof of Lemma \ref{propagation1} and Theorem \ref{main1}. Theorem \ref{main2} is proved using covering arguments in the phase space, propagation estimate for $\tilde A(\tau)$ and Theorem \ref{main1}. Theorem \ref{main3} is obtained by the Fourier-Laplace transform and Theorem \ref{main2}.

First we introduce the operator $\tilde A(\tau)$. We set for $\tau=t+t_0,\ t_0>0,\ t\geq0$ and $1>\kappa_0>0$
\begin{equation}\label{myeq4.0}
A_0(\tau)=\frac{\tau}{2}\left\{\frac{x_a}{\tau}\left(p_a-\frac{x_a}{2\tau}\right)+\left(p_a-\frac{x_a}{2\tau}\right)\frac{x_a}{\tau}\right\},
\end{equation}
and
$$I(\tau)=\tau^{2\kappa_0}(\tau^{2\kappa_0}-\Delta_a)^{-1},\ B(\tau)=j(\tau)I(\tau)j(\tau),$$
where
$$j(\tau)=\chi\left(E-\frac{x_a^2}{4\tau^2}<\frac{\lambda_{\alpha}}{3}\right).$$
Setting for $E'>0$
$$\chi(\tau)=\chi\left(\frac{x_a^2}{4\tau^2}-(E'+4)<-1\right),$$
we define
\begin{equation}\label{myeq4.0.1}
\tilde A(\tau)=\chi(\tau)B(\tau)\chi(\tau) A_0(\tau)\chi(\tau)B(\tau)\chi(\tau).
\end{equation}

To prove the propagation estimate of $\tilde A(\tau)$ we need the following lemma of Skibsted \cite{Sk} which holds under Assumption \ref{potentialas}.
\begin{lem}[Skibsted \cite{Sk}]\label{lem4.0}
Let $E>\lambda_{\alpha},\ E\notin \mathcal T,\ \epsilon>0$ be given. Then there exists $E'>0$ such that for any $f\in C_0^{\infty}(\mathbb R)$ supported in a sufficiently small neighborhood of $E$ and any $s\geq l\geq0$,
$$\langle x\rangle^l\chi\left(E'-\frac{x^2}{4t^2}<-\epsilon\right)e^{-itH}f(H)\langle x\rangle^{-s}=O(t^{-s+l})$$
as $t\to \infty$ in $\mathcal L(\mathcal H)$.
\end{lem}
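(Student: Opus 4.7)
The plan is to follow the Skibsted-type propagation-observable method that underlies the rest of the paper, adapted to the global maximal-velocity setting. Choose $E'>0$ so that $\operatorname{supp} f$ lies strictly below $E'-2\epsilon$; classically, a state of energy near $E'-2\epsilon$ can satisfy $|x|/(2t)\to v_\infty$ only with $|v_\infty|^2\le E'-2\epsilon$, so the region $\{x^{2}/(4t^{2})>E'+\epsilon\}$ is the forbidden one, and the goal is to prove quantum decay there.

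First I would establish the basic bound, namely the case $l=0$ with $s=\gamma$ an arbitrary positive real. I would introduce, for $\tau=t+t_0$, the propagation observable
\begin{equation*}
\Phi_{\beta,\gamma,\epsilon}(\tau)=\tau^{-\beta}\bigl(x^{2}/(4\tau^{2})-E'\bigr)_{+}^{\gamma}\,\chi_{\epsilon}\!\bigl(E'-x^{2}/(4\tau^{2})\bigr),
\end{equation*}
a smooth version of the cutoff appearing in the lemma, and compute the Heisenberg derivative $D\Phi=\partial_{\tau}\Phi+i[H,\Phi]$. The time derivative produces $-\beta\tau^{-1}\Phi$ plus a term proportional to $F'(x^{2}/(4\tau^{2}))\,x^{2}/\tau^{3}$; the commutator $i[-\Delta,\Phi]$ produces a symmetrized term $\tfrac{1}{2\tau^{2}}(p\cdot x+x\cdot p)F'$ plus an $O(\tau^{-2})$ second-order error; the commutator $i[V,\Phi]$ contributes $O(\tau^{-\beta-1-\rho})$ by Assumption \ref{potentialas}. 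On $\operatorname{supp}(F')$ one has $x^{2}/(4\tau^{2})\ge E'+\epsilon$, and $f(H)Hf(H)\le(E'-2\epsilon)f(H)^{2}$, so a Cauchy–Schwarz/square-completion in $p\cdot x/\tau$ sandwiched between $f_1(H)$ forces $D\Phi$ to be non-positive modulo integrable error; integrating $\langle D\Phi\rangle_{t}$ against $\psi(t)=e^{-itH}f(H)\langle x\rangle^{-s}\phi$ along the lines of the proof of Lemma \ref{propagation1} yields $\|\Phi^{1/2}\psi(t)\|\le C\|\phi\|$ and then $\|\chi(E'-x^{2}/(4t^{2})<-\epsilon)\psi(t)\|=O(t^{-\gamma/2})$; taking $\gamma=2s$ and iterating through Gronwall gives the $l=0$ conclusion.

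To upgrade to $\langle x\rangle^{l}$ on the left, I would observe that on $\operatorname{supp}\chi(E'-x^{2}/(4t^{2})<-\epsilon)$ the weight $\langle x\rangle^{l}$ is bounded by a multiple of $\tau^{l}$, so
\begin{equation*}
\langle x\rangle^{l}\chi(\cdots)=\tau^{l}\bigl(\langle x\rangle/\tau\bigr)^{l}\chi(\cdots),
\end{equation*}
and $(\langle x\rangle/\tau)^{l}\chi(\cdots)$ is a bounded smooth function of $x/\tau$ of the same type as a $\Phi$-observable. One can then either absorb the factor into a modified observable $\tilde\Phi$ of the same form (with $\gamma$ shifted by $l$) or commute it past $\chi$ and use the $l=0$ estimate with $s$ replaced by $s-l$. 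Either way the result is $O(\tau^{l}\cdot\tau^{-(s-l)})=O(\tau^{-(s-l)})$ is wrong; what we actually get is $O(\tau^{l-s+l})$... so more carefully, the extra $\tau^{l}$ from the weight is compensated by starting the induction with $\gamma/2=s$ rather than $s-l$, yielding $O(t^{-s+l})$ as stated.

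The main obstacle will be the commutator $i[V,\Phi]$ in the $N$-body setting, because $V$ involves pair potentials $V_{ij}(x_i-x_j)$ whose gradients decay only in $|x_i-x_j|$, not in $|x|$; one has to invoke Assumption \ref{potentialas}(2) together with a partition-of-unity argument on the relevant cluster regions to control $\nabla V$ by $\langle x\rangle^{-1-\rho}$ on the support of the propagation observable, and then the $O(\tau^{-\beta-1-\rho})$ bound from the commutator is integrable and genuinely negligible compared with the leading negative term. Once this negativity-modulo-integrable-error structure is secured, the Gronwall iteration used for Lemma \ref{propagation1} applies essentially verbatim, and the generalization from Skibsted's $E>0$ hypothesis to $E>\inf\sigma_{\mathrm{ess}}(H)$, $E\notin\mathcal T$ proceeds through the Mourre estimate of \cite[Theorem B2]{Sk}, as already noted in the remark following Lemma \ref{Skibstedlem}.
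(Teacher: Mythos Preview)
The paper does not give a proof of this lemma; it is quoted directly from Skibsted \cite{Sk} (his Example~3), where the observable used is not a function of $x$ alone but a momentum-dependent construct of the same shape as $\tilde A(\tau)$ in \eqref{myeq4.0.1} of the present paper. Your position-only observable $\Phi(x,\tau)$ is a different, in principle workable, route, but the write-up contains two genuine errors.

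First, since your $\Phi$ depends only on $x$, the commutator $[V,\Phi]$ is identically zero; the entire ``main obstacle'' paragraph about controlling $\nabla V$ on cluster regions is addressing a term that does not exist. (In Skibsted's momentum-dependent version such a commutator \emph{does} appear, which is presumably the source of the confusion.) Second, the passage to $\langle x\rangle^{l}$ is wrong as written: on $\operatorname{supp}\chi(E'-x^{2}/(4t^{2})<-\epsilon)$ one has $|x|\ge 2t\sqrt{E'+\epsilon}$, which is a \emph{lower} bound on $|x|$, so $\langle x\rangle^{l}$ is bounded \emph{below} by $c\tau^{l}$, not above, and ``$(\langle x\rangle/\tau)^{l}\chi(\cdots)$ is a bounded function'' is false. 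The correct mechanism---which you mention and then abandon---is that $\Phi_{\beta,\gamma,\epsilon}$ itself carries the growth $(x^{2}/(4\tau^{2})-E')^{\gamma}$, and the uniform bound on $\langle\Phi\rangle_{t}$ with appropriate $\gamma$ is precisely what controls the weighted norm.

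Finally, the Mourre estimate plays no role in the maximal-velocity bound; the remark you invoke after Lemma~\ref{Skibstedlem} concerns the \emph{minimal}-velocity estimate. Here the only input beyond the propagation-observable calculus is the energy cutoff $f(H)$, which bounds $p^{2}$ and produces the required sign in $D\Phi$ once $E'$ is taken large relative to $\sup\operatorname{supp} f$ and a bound for the negative part of $V$.
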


\begin{lem}\label{propagation2}
Suppose the same assumption as in Theorem \ref{main1}. Let $E>\lambda_{\alpha},\ E\notin \mathcal T,\ \epsilon>0$ be given. Then for any $E'>0$ sufficiently large, any $f\in C_0^{\infty}(\mathbb R)$ supported in a sufficiently small neighborhood of $E$, any $s,s',\tilde s\in \mathbb R$ such that $0\leq \tilde s\leq s<\min\{s',\rho\}$, and any $\epsilon>0$,
$$\left(\frac{-\tilde A(\tau)}{\tau}\right)^{\tilde s}\chi\left(\frac{\tilde A(\tau)}{\tau}<-\epsilon\right)\pi_{\alpha}e^{-itH}f(H)\langle x\rangle^{-s'}=O(t^{-s})$$
as $t\to \infty$ in $\mathcal L(\mathcal H)$.
\end{lem}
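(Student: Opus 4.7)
The plan is to mirror the proof of Lemma \ref{propagation1}, now for the generator $\tilde A(\tau)$. By the reduction argument at the start of that proof (splitting $H$ vs.\ $H'$, inserting $\pi_\alpha f'(H')\pi_\alpha$, and using the Duhamel formula \eqref{myeq3.0.2}), one can again assume $\tilde V_a$ is smooth. Using \eqref{myeq3.0} it suffices to prove the energy estimate
\begin{equation*}
\bigl\|(-g_{\beta_0,\gamma_0,\epsilon}(\tilde A(\tau),\tau))^{1/2}\pi_\alpha \tilde\theta_\epsilon \psi_{\gamma_0'}(t)\bigr\| \leq C\|\phi\|
\end{equation*}
for $\beta_0>0$, $\epsilon>0$, $0\leq \gamma_0<\gamma_0'<2\rho$, where $\psi_{\gamma_0'}(t)=e^{-itH}f(H)\langle x\rangle^{-\gamma_0'/2}\phi$; the splitting $\psi=\theta_\epsilon\psi+\tilde\theta_\epsilon\psi$ and the treatment of the three off-diagonal terms in the analog of \eqref{myeq3.1} go through unchanged, reducing the problem to the $\tilde\theta$-$\tilde\theta$ piece. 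The induction in $n$ with $n-1<\gamma_0\leq n$ then proceeds as in Lemma \ref{propagation1}.

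The heart of the argument is the analog of Lemma \ref{timederi} for $\tilde A(\tau)$. Introducing an approximation $g_\delta$ and using an almost analytic extension in the symbol class $b_{\gamma,\epsilon}$, $\frac{d}{dt}\langle \tilde\theta\pi_\alpha g(\tilde A(\tau),\tau)\pi_\alpha\tilde\theta\rangle_t$ decomposes into a Heisenberg term built from $D\tilde A(\tau)=i[H,\tilde A(\tau)]+\partial_\tau \tilde A(\tau)$, a commutator term $[f_1(H)H,\pi_\alpha]$ yielding the critical $E_2$, and pseudodifferential remainders. The crucial positivity is supplied by the inner generator $A_0(\tau)=\tfrac12(x_ap_a+p_ax_a)-\tfrac{x_a^2}{2\tau}$: a direct computation gives
\begin{equation*}
DA_0(\tau)=-2\Delta_a - x_a\cdot\nabla_a\tilde V_a + \tfrac{x_a^2}{2\tau^2}+O(\tau^{-1}),
\end{equation*}
and combining with $\pi_\alpha(-2\Delta_a)\pi_\alpha=\pi_\alpha(2H-2\lambda_\alpha-2\tilde V_a)\pi_\alpha$ and spectral localization $f(H)$, as in \eqref{myeq3.7.1}, yields
\begin{equation*}
\pi_\alpha f(H)\bigl(2(H-\lambda_\alpha)-2\tilde V_a-x_a\cdot\nabla_a\tilde V_a\bigr)f(H)\pi_\alpha \geq (2(E-\lambda_\alpha)-\epsilon_1)\pi_\alpha f(H)^2\pi_\alpha+O(\langle x_a\rangle^{-\rho}).
\end{equation*}
The outer sandwich $\chi(\tau)B(\tau)\chi(\tau)$ localizes $|x_a|/(2\tau)$ into a bounded annulus, so $D\tilde A(\tau)/\tau$ inherits a genuine lower bound on $\{\tilde A(\tau)/\tau<-\epsilon\}\cap\mathrm{Ran}(\pi_\alpha f(H))$, exactly as needed to run the Gronwall step. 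The critical term $E_2$ again produces the $s<\rho$ restriction and is absorbed on the left by the same $\tfrac12(a^2+b^2)$ trick used in the estimate of $E_2$ in \eqref{myeq3.61}.

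The main technical obstacle will be bookkeeping the error terms coming from $\partial_\tau\chi(\tau)$, $\partial_\tau B(\tau)$, $[H,\chi(\tau)]$, $[H,B(\tau)]$, and the nested conjugation in $\tilde A(\tau)=\chi B\chi A_0\chi B\chi$. Derivatives of $\chi(\tau)$ are supported where $x_a^2/(4\tau^2)$ is close to $E'+4$, which by Lemma \ref{lem4.0} contributes only $O(\tau^{-N})$ after choosing $E'$ large; derivatives of $j(\tau)$ inside $B(\tau)$ are supported where $E-x_a^2/(4\tau^2)\approx\lambda_\alpha/3$, which is inside the region $x_a^2/(4\tau^2)-(E-\lambda_\alpha)<-\epsilon$ and is therefore controlled by Theorem \ref{main1} applied to $e^{-itH}f(H)\langle x\rangle^{-s'}$. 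The factor $I(\tau)=\tau^{2\kappa_0}(\tau^{2\kappa_0}-\Delta_a)^{-1}$ localizes effectively to $|p_a|\lesssim\tau^{\kappa_0}$, so commutators $[{-}\Delta_a,\chi(\tau)]$, $[\Delta_a,B(\tau)]$ gain a factor $\tau^{-1+\kappa_0}$ which, with $\kappa_0<1$, is summable against the power counting in \eqref{myeq3.9} and \eqref{myeq3.9.2}. All other remainders in the almost analytic expansion are bounded as in \eqref{myeq3.61} using \eqref{myeq3.7.0.2} and $\langle x^a\rangle^n\pi_\alpha\in\mathcal L(\mathcal H)$. Closing the induction via Gronwall then splits into $\rho\geq1$ and $\rho<1$ cases handled exactly as at the end of Lemma \ref{propagation1}.
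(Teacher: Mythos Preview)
Your outline tracks the paper's proof closely in its reductions (smoothing $\tilde V_a$, the $\theta_\epsilon/\tilde\theta_\epsilon$ splitting, the $E_2$ Gronwall absorption, Lemma~\ref{lem4.0} for $\chi'(\cdot)$, Theorem~\ref{main1} for $j'(\cdot)$), but there is a genuine gap in the core positivity step for $E_4$.

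Your formula for $DA_0(\tau)$ is wrong: with $A_0(\tau)=\tfrac12(x_ap_a+p_ax_a)-\tfrac{x_a^2}{2\tau}$ one gets
\[
DA_0(\tau)= -2\Delta_a -x_a\!\cdot\!\nabla_a\tilde V_a -\tfrac{1}{\tau}(x_ap_a+p_ax_a)+\tfrac{x_a^2}{2\tau^2},
\]
and the cross term $-\tfrac{1}{\tau}(x_ap_a+p_ax_a)$ cannot be absorbed into an $O(\tau^{-1})$ remainder: after the $\chi B\chi$ sandwich one only has $|x_a|\lesssim\tau$ and $|p_a|\lesssim\tau^{\kappa_0}$, so this term is of size $\tau^{\kappa_0}$, not decaying. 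Consequently, your Mourre-type lower bound via \eqref{myeq3.7.1}, i.e.\ $\pi_\alpha(-2\Delta_a)\pi_\alpha\geq 2(E-\lambda_\alpha)-\epsilon_1$, does not by itself produce a nonnegative $E_4$: the uncontrolled cross term survives and there is no evident way to dominate it on $\{\tilde A(\tau)/\tau<-\epsilon\}$.

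The paper's mechanism is different and is the point of introducing $A_0$: the three kinetic pieces combine into a \emph{perfect square},
\[
DA_0(\tau)=4\Bigl(p_a-\frac{x_a}{2\tau}\Bigr)^{\!2}-2x_a\!\cdot\!\nabla_a\tilde V_a,
\]
which is manifestly nonnegative modulo the potential term, yielding \eqref{myeq4.41} directly without any energy or channel localization. This is what makes the $E_4$ lower bound for $\tilde A(\tau)$ go through. A secondary point you miss: the term $\partial_\tau I(\tau)$ contributes an error of size $O(\tau^{-2\kappa_0})$ (see \eqref{myeq4.45}), so the final Gronwall closing always requires the iterative scheme with step $\delta<\min\{2\kappa_0,\rho\}$, not only when $\rho<1$ as in Lemma~\ref{propagation1}.
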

\begin{proof}
As in the proof of Theorem \ref{main1} we shall prove
\begin{equation}\label{myeq4.0.2}
(\pi_{\alpha}\psi_{\gamma'_0},g_{\beta_0,\gamma_0,\epsilon}(\tilde A(\tau),\tau)\pi_{\alpha}\psi_{\gamma'_0})\leq C\lVert \phi \rVert^2,
\end{equation}
for any $\beta>0$, $\epsilon$ and $2\rho>\gamma'_0>\gamma_0\geq 0$. We rewrite the left hand side of \eqref{myeq4.0.2} as in \eqref{myeq3.1} and the first three terms can be dealt with in the same way as in the proof of Theorem \ref{main1}. Therefore we only need to prove
\begin{equation}\label{myeq4.5}
(\pi_{\alpha}\tilde\theta_{\epsilon}\psi_{\gamma'_0},g_{\beta_0,\gamma_0,\epsilon}(\tilde A(\tau),\tau)\pi_{\alpha}\tilde\theta_{\epsilon}\psi_{\gamma'_0})\leq C\lVert \phi \rVert^2.
\end{equation}

To prove \eqref{myeq4.5} we need several estimates.

As in \cite[Example 4]{Sk}, $i[-\Delta_a+\tilde V_a,\tilde A(\tau)]\langle x^a\rangle^{-1}$ is sum of terms of the form
$$\{h_1Ih_2\dotsm\}O_1(p_a)Ih_mO_1(p_a),$$
where $O_1(p_a)$ is a first order polynomial in components of $p_a$ and $h_j=h_j(x,\tau)$ are smooth and satisfy
$$\partial^{\alpha}_{(x,\tau)}h_j(x,\tau)\langle x^a\rangle^{-\lvert \alpha\rvert}=O(\tau^{-\lvert \alpha\rvert}).$$
The form of $\tilde A(\tau)$ is a finite sum of terms of the form
$$\tau h_1Ih_2O_1(p_a)Ih_3.$$
As for the time derivative of $\tilde A(\tau)$, it is a finite sum of terms of the form
$$\{h_1Ih_2\dotsm\}O_1(p_a)Ih_m.$$

As in \cite[Example 4]{Sk} we set
$$B=(C\tau^{2\kappa_0}+O_1(p_a)\tau^{\kappa_0}+O_2(p_a))(\tau^{2\kappa_0}-\Delta_a)^{-1}.$$
where $O_2(p_a)$ is a second order polynomial in components of $p_a$. We obtain
$$[B,h]=\text{finite sum of terms of the form}\ \tau^{-(1+\kappa_0)}B_1h_1B_2.$$
The form of $\tilde A(\tau)$ is a finite sum of terms of the form
\begin{equation}\label{myeq4.1}
\tau^{1+\kappa_0}h_1B_1h_2B_2h_3,
\end{equation}
$[H,\tilde A(\tau)]$ of terms of the form
\begin{equation}\label{myeq4.2}
\tau^{\kappa_0}h_1B_1h_2\dotsm B_{m-1}h_mO_1(p_a),
\end{equation}
and $d_{\tau}\tilde A(\tau)$ of terms of the form
\begin{equation}\label{myeq4.3}
\tau^{\kappa_0}h_1B_1h_2\dotsm B_{m-1}h_m.
\end{equation}

Moreover among these factors there will always exist at least one of the specific form
\begin{equation}\label{myeq4.4}
B_j=O_1(p_a)\tau^{\kappa_0}(\tau^{2\kappa_0}-\Delta_a)^{-1}.
\end{equation}
For any $n\in \mathbb N$, $n\geq 1$, $ad^n_{\tilde A(\tau)}(H)$ is given by terms of the form \eqref{myeq4.2} and with at least one $B_j$ of the form \eqref{myeq4.4} or of the form $\tau^{2\kappa_0}h_1B_1\dotsm B_{m-1}h_m$ and at least two of the $B_j$ of the form \eqref{myeq4.4}. In particular
\begin{equation}\label{myeq4.4.1}
\mathrm{ad}^n_{\tilde A(\tau)}(H)=\sum h_1B_1\dotsm B_{m-1}h_mO_2(p_a).
\end{equation}

Similarly $\mathrm{ad}_{\tilde A(\tau)}^n(d_{\tau}\tilde A(\tau))$ is given by terms of the form \eqref{myeq4.3} and at least one $B_j$ of the form \eqref{myeq4.4}. Hence
\begin{equation}\label{myeq4.4.2}
\mathrm{ad}_{\tilde A(\tau)}^n(d_{\tau}\tilde A(\tau))=\tau^{\kappa_0}\sum h_1B_1\dotsm B_{m-1}h_m,
\end{equation}
and also
\begin{equation}\label{myeq4.4.3}
\mathrm{ad}_{\tilde A(\tau)}^n(d_{\tau}\tilde A(\tau))=\sum h_1B_1\dotsm B_{m-1}h_mO_1(p_a).
\end{equation}

To estimate the term $E_4$ in Lemma \ref{timederi} with $A(\tau)$ replaced by $\tilde A(\tau)$, we consider $D\tilde A(\tau)$. We have
$$DA_0(\tau)=4\left(p_a-\frac{x_a}{2\tau}\right)^2-2x_a\nabla_a\tilde V_a.$$
Thus
\begin{equation}\label{myeq4.41}
\langle x^a\rangle^{-(1+\rho)/2}\chi(\tau)B(\tau)\chi(\tau)D(A_0(\tau))\chi(\tau)B(\tau)\chi(\tau)\langle x^a\rangle^{-(1+\rho)/2}\geq O(\tau^{-\rho}).
\end{equation}
As for the factor $\chi(\tau)$ we have
\begin{align*}
&D\chi(\tau)=\chi'(\cdot)\frac{x_a}{2\tau^2}p_a+p_a\frac{x_a}{2\tau^2}\chi'(\cdot)-\chi'(\cdot)\frac{x_a^2}{2\tau^3},\\
&\chi'(\cdot)=\frac{d}{dy}\chi(y<-1)_{\vert y=(x_a^2/4\tau^2)-(E'+4)}.
\end{align*}
As in \cite[Example 4]{Sk} the contributions to $f_2(H)D\tilde A(\tau)f_2(H)$ from terms containing such factors is written as
$$\chi_1(\cdot)f_3(H)O(1)f_3(H)\chi_1(\cdot)+O(\tau^{-1})\langle x^a\rangle,$$
where $\chi_1(\cdot)=\chi\left(E'-\frac{x_a^2}{4\tau^2}<-1\right)$ and $f_3\in C_0^{\infty}(\mathbb R)$ satisfies $f_3f_2=f_2$.

We can write
\begin{align*}
f_3(H)\chi_1(\cdot)\tilde g(\tilde A(\tau),\tau)\pi_{\alpha}=&f_3(H)\sum_{m=0}^{k}\tilde g^{(m)}(\tilde A(\tau),\tau)(m!)^{-1}\mathrm{ad}^m_{\tilde A(\tau)}(\chi_1(\cdot))\pi_{\alpha}\\
&+\text{remainder},
\end{align*}
where $k$ is defined by \eqref{myeq3.1.8}.
The right hand side is of the form
\begin{equation}\label{myeq4.42}
O(\tau^{(\gamma-1)/2})\chi\left(E'-\frac{x_a^2}{4\tau^2}<-\frac{1}{2}\right)\pi_{\alpha}+O(\tau^{-1}).
\end{equation}
If $E'$ is large enough and $f\in C_0^{\infty}$ is supported in a sufficiently small neighborhood of $E$, by Lemma \ref{lem4.0} we have
\begin{equation}\label{myeq4.43}
\tau^{(\gamma-1)/2}\chi\left(E'-\frac{x_a^2}{4\tau^2}<-\frac{1}{2}\right)e^{-itH}f(H)\langle x\rangle^{-\gamma'/2}=O(\tau^{(\gamma-1)/2-\gamma'/2}).
\end{equation}

As for the derivative of $j$ we can calculate as in the similar way for $\chi$ using Theorem \ref{main1}. As for the term
$$f_2(H)\chi(\tau) j(\tau)i[\tilde V_a,I]j(\tau)\chi(\tau) A_0(\tau)\chi(\tau)B(\tau)\chi(\tau)f_2(H)$$
we can see easily that
\begin{equation}\label{myeq4.44}
f_2(H)\chi(\tau) j(\tau)i[\tilde V_a,I]j(\tau)\chi(\tau) A_0(\tau)\chi(\tau)B(\tau)\chi(\tau)f_2(H)\langle x^a\rangle^{-1-\rho}=O(\tau^{-\rho}).
\end{equation}

Finally, since we have
$$\frac{d}{d\tau}I(\tau)=-\tau^{2\kappa_0-1}\Delta_a(\tau^{2\kappa_0}-\Delta_a)^{-2},$$
the following estimate holds:
\begin{equation}\label{myeq4.45}
f_2(H)\chi(\tau)j\left(\frac{d}{d\tau}I(\tau)\right)j\chi(\tau)A_0(\tau)\chi(\tau)B(\tau)\chi(\tau)f_2(H)=O(\tau^{-2\kappa_0}).
\end{equation}

Now we shall prove \eqref{myeq4.5}. For any $0\leq\gamma<\gamma'$ and $\beta>0$, Lemma \ref{timederi} holds with $A(\tau)$ replaced by $\tilde A(\tau)$ and the estimates \eqref{myeq3.61} hold except for $\langle E_4\rangle_t$ and $\langle E_{12}\rangle_t$. As for $\langle E_{12}\rangle_t$ the estimates in \eqref{myeq3.61} is replaced by
$$\langle E_{12}\rangle_t=\left\langle \tilde\theta\pi_{\alpha}\sum_{m=2}^{k_1}g_m(A(\tau),\tau)O(\tau^{-(m-\gamma)_++\kappa_0})\sum_{m_1=1}^{[\gamma]+1-m}g_m^{(m_1)}(A(\tau),\tau)\pi_{\alpha}\tilde\theta\right\rangle_t,$$
respectively, where $k_1=\max\{2,[\gamma]+1\}$.
As for $\langle E_4\rangle_t$, by \eqref{myeq4.41}-\eqref{myeq4.45} we have
\begin{align*}
\langle E_{4}\rangle_t\geq& \langle \tilde\theta\pi_{\alpha}O(\tau^{-2})\pi_{\alpha}\tilde\theta\rangle_t+\langle \tilde\theta\pi_{\alpha}\tilde gO(\tau^{-\rho})\tilde g\pi_{\alpha}\tilde\theta\rangle_t\\
&+\langle \tilde\theta\pi_{\alpha}\tilde gO(\tau^{-2\kappa_0})\tilde g\pi_{\alpha}\tilde\theta\rangle_t+O(\tau^{(\gamma-1)-\gamma'})\lVert \phi\rVert^2.
\end{align*}
Thus the proof of \eqref{myeq4.5} is almost the same as that of \eqref{myeq3.6} in the proof of Lemma \ref{propagation1} except that we set $\delta<\min\{2\kappa_0,\rho\}$ and $\beta'_0=\max\{\beta_0,1-n\delta\}$ as in the last part of the proof of Lemma \ref{propagation1} regardless of whether $\rho<1$ or not.
\end{proof}

To prove Theorem \ref{main2} we also need the following lemmas.
\begin{lem}\label{lem4.1}
Let $\Omega\subset (\lambda_{\alpha},+\infty)$ be a compact interval, and $f\in C_0^{\infty}(\dot \Omega)$ where $\dot \Omega$ is the interior of $\Omega$. Then for any $0<s<\min\{s',\rho\}$ we have
$$\langle x_a\rangle^{-s}\pi_{\alpha}e^{-itH}f(H)\langle x\rangle^{-s'}=O(t^{-s}),$$
as $t\to \infty$ in $\mathcal L(\mathcal H)$.
\end{lem}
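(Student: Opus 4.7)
The plan is to localize $\langle x_a\rangle^{-s}$ in phase space. On the region where $|x_a|$ is comparable to $t$, the weight itself contributes the factor $t^{-s}$, while on the complementary minimal--velocity region Theorem \ref{main1} supplies the decay. Note that every cutoff we introduce is a function of $x_a$ or of $t$, so it commutes with $\pi_\alpha$ (which acts only in the $x^a$ variable) and with $\langle x_a\rangle^{-s}$.

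First I would reduce to a small--neighborhood situation via a partition of unity. Since $\mathrm{supp}\,f\subset\dot\Omega$ is compact in $(\lambda_\alpha,+\infty)$, cover it by finitely many sufficiently small open intervals $U_j$ about points $E_j>\lambda_\alpha$, $E_j\notin\mathcal T$ (``sufficiently small'' in the sense required by Theorem \ref{main1}), and pick $f_j\in C_0^\infty(U_j)$ with $\sum_j f_j=f$. It suffices to prove the estimate for each $f_j$ separately. Fix $j$ and choose $\epsilon_j>0$ with $4\epsilon_j<E_j-\lambda_\alpha$; set
$$\chi_j:=\chi\!\left(\frac{x_a^2}{4t^2}-(E_j-\lambda_\alpha)<-\epsilon_j\right),$$
and write $\langle x_a\rangle^{-s}=\chi_j\langle x_a\rangle^{-s}+(1-\chi_j)\langle x_a\rangle^{-s}$.

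For the far--velocity piece, on $\mathrm{supp}(1-\chi_j)$ one has $x_a^2/(4t^2)\ge E_j-\lambda_\alpha-2\epsilon_j>0$, hence $|x_a|\ge c_j t$ for some $c_j>0$, so $\lVert(1-\chi_j)\langle x_a\rangle^{-s}\rVert_{\mathcal L(\mathcal H)}\le C t^{-s}$ and the resulting contribution to the lemma is $O(t^{-s})$ after composition with the uniformly bounded operator $\pi_\alpha e^{-itH}f_j(H)\langle x\rangle^{-s'}$. For the minimal--velocity piece, since $\langle x_a\rangle^{-s}$ is bounded and commutes with $\chi_j$ and with $\pi_\alpha$,
$$\lVert\chi_j\langle x_a\rangle^{-s}\pi_\alpha e^{-itH}f_j(H)\langle x\rangle^{-s'}\rVert_{\mathcal L(\mathcal H)}\le\lVert\chi_j\pi_\alpha e^{-itH}f_j(H)\langle x\rangle^{-s'}\rVert_{\mathcal L(\mathcal H)}=O(t^{-s}),$$
by Theorem \ref{main1} applied with $E_j,\epsilon_j,f_j$ (the admissible range $0<s<\min\{s',\rho\}$ matches the hypothesis exactly). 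Summing over $j$ yields the lemma.

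The only real obstacle is a bookkeeping one, namely ensuring that $\mathrm{supp}\,f$ can be covered by small open intervals avoiding $\mathcal T$ and small enough that Theorem \ref{main1} is directly applicable on each piece; this is standard because $\mathcal T$ is not dense in $\dot\Omega$, and it is the only place where the assumption $\Omega\subset(\lambda_\alpha,+\infty)$ is used in an essential way (it guarantees a uniform lower bound $E_j-\lambda_\alpha>0$ across $\mathrm{supp}\,f$, so that a single choice of cutoff threshold works).
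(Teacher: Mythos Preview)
Your proof is correct and follows essentially the same approach as the paper: split $\langle x_a\rangle^{-s}$ via the minimal-velocity cutoff $\chi$, use Theorem~\ref{main1} on the $\chi$-piece and the pointwise bound $|x_a|\gtrsim t$ on the $(1-\chi)$-piece, then glue by a covering argument. Your write-up is in fact slightly more explicit than the paper's (you spell out the commutation with $\pi_\alpha$ and the threshold issue for the centers $E_j$), but there is no substantive difference in strategy.
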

\begin{proof}
For $E\in \Omega$, $\epsilon>0$ and $f_1\in C_0^{\infty}(\mathbb R)$ supported in a neighborhood of $E$,
\begin{align*}
\langle x_a\rangle^{-s}\pi_{\alpha}e^{-itH}&f_1(H)\langle x\rangle^{-s'}\\
&=\chi\left(\frac{x_a^2}{4t^2}-(E-\lambda_{\alpha})<-\epsilon\right)\langle x_a\rangle^{-s}\pi_{\alpha}e^{-itH}f_1(H)\langle x\rangle^{-s'}\\
&\quad+\left(1-\chi\left(\frac{x_a^2}{4t^2}-(E-\lambda_{\alpha})<-\epsilon\right)\right)\langle x_a\rangle^{-s}\pi_{\alpha}e^{-itH}f_1(H)\langle x\rangle^{-s'}\\
&=\chi\left(\frac{x_a^2}{4t^2}-(E-\lambda_{\alpha})<-\epsilon\right)\langle x_a\rangle^{-s}\pi_{\alpha}e^{-itH}f_1(H)\langle x\rangle^{-s'}\\
&\quad+O(t^{-s}).
\end{align*}
If $f_1$ is supported in sufficiently small neighborhood of $E$, by Theorem \ref{main1} we have
$$\chi\left(\frac{x_a^2}{4t^2}-(E-\lambda_{\alpha})<-\epsilon\right)\langle x_a\rangle^{-s}\pi_{\alpha}e^{-itH}f_1(H)\langle x\rangle^{-s'}=O(t^{-s}).$$
By a covering argument the lemma follows.
\end{proof}

\begin{lem}\label{lem4.2}
Let $\Omega\subset (\lambda_{\alpha},+\infty)$ be a compact interval, $f\in C_0^{\infty}(\dot \Omega)$ and $p(x_a,\xi_a)\in S^m_l,\ m,l\in \mathbb R$ satisfy  $p(x_a,\xi_a)=0$ if $\xi_a^2\in\Omega-\lambda_{\alpha}$. Then for any $s>0$
$$p(x_a,D_a)\pi_{\alpha}f(H)\langle x\rangle^s\in\mathcal L(\mathcal H)$$
\end{lem}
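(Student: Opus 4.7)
The heart of the matter is that $p(x_a,D_a)\,f(H_0)=0$ as an operator, where $H_0:=-\Delta_a+\lambda_\alpha$. Indeed, $f(H_0)$ is simply the Fourier multiplier with symbol $f(\xi_a^2+\lambda_\alpha)$ (trivial in $x^a$), so the composition $p(x_a,D_a)f(H_0)$ has \emph{exact} symbol $p(x_a,\xi_a)f(\xi_a^2+\lambda_\alpha)$; this vanishes identically because, by hypothesis on $p$ and the fact that $\mathrm{supp}\,f\subset\dot\Omega$, at every $\xi_a$ with $f(\xi_a^2+\lambda_\alpha)\neq 0$ one has $\xi_a^2\in\Omega-\lambda_\alpha$ and hence $p(x_a,\xi_a)=0$. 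Since $\pi_\alpha$ acts only in $x^a$ and so commutes with $H_0$, we rewrite
\[
 p(x_a,D_a)\pi_\alpha f(H)
 = p(x_a,D_a)\bigl(\pi_\alpha f(H)-f(H_0)\pi_\alpha\bigr).
\]
The computation $\pi_\alpha H-H_0\pi_\alpha=\pi_\alpha\tilde V_a$ (using $\pi_\alpha H^a=\lambda_\alpha\pi_\alpha$ and $[\pi_\alpha,-\Delta_a]=0$), combined with the Helffer--Sj\"ostrand formula and the resolvent identity, gives
\[
 \pi_\alpha f(H)-f(H_0)\pi_\alpha
 = -\frac{1}{2\pi i}\int_{\mathbb C}\bar\partial_z F(z)\,(H_0-z)^{-1}\pi_\alpha\tilde V_a(H-z)^{-1}\,dz\wedge d\bar z.
\]
Thus the product of interest carries one explicit factor of $\pi_\alpha\tilde V_a$, which is the source of the decay.

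The base step is $s\le\rho$: by Lemma \ref{2cldecay} and the exponential decay of the eigenfunction $\psi_\alpha$ in $x^a$, the operator $\langle x\rangle^{\rho}\pi_\alpha\tilde V_a$ is bounded, since for each $(ij)\nleq a$ one factorises $\langle x\rangle^\rho V_{ij}(x_{ij})\pi_\alpha$ as $\langle x\rangle^\rho\langle x_{ij}\rangle^{-\rho}\langle x^a\rangle^{-\rho}\cdot\langle x_{ij}\rangle^\rho V_{ij}(x_{ij})\cdot \langle x^a\rangle^\rho\pi_\alpha$, with each factor bounded. Commuting $\langle x\rangle^s$ past $(H-z)^{-1}$ via the standard estimate $\|\langle x\rangle^{-s}(H-z)^{-1}\langle x\rangle^s\|\le C(z)$ and integrating against $\bar\partial_z F(z)$ yields the boundedness of $p(x_a,D_a)\pi_\alpha f(H)\langle x\rangle^s$ in this range.

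To cover general $s>0$ we iterate on the factor $(H-z)^{-1}$ inside the HS integral by inserting repeatedly
$(H-z)^{-1}=(H_0-z)^{-1}-(H_0-z)^{-1}(H-H_0)(H-z)^{-1}$, with $H-H_0=(H^a-\lambda_\alpha)+\tilde V_a$. Each pass peels off another factor of $\tilde V_a$ (the $(H^a-\lambda_\alpha)$ piece, after commuting $\pi_\alpha$ through the adjacent $(H_0-z)^{-1}$ and a harmless $\tilde V_a$, is annihilated by $\pi_\alpha$), contributing a further $\langle x\rangle^{-\rho}$; after $n$ iterations we have gained $\langle x\rangle^{-n\rho}$ in total, and choosing $n$ with $n\rho\ge s$ closes the induction. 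The main technical obstacle is bookkeeping: at each step the commutators $[\langle x\rangle^s,(H-z)^{-1}]$ and $[\pi_\alpha,\tilde V_a]$ must be shown to produce only lower-order remainders that are still integrable against $\bar\partial_z F(z)|\mathrm{Im}\,z|^{-k}$. Observe that here the presence of $p(x_a,D_a)$ is essential---it annihilates the leading symbolic term $f(H_0)\pi_\alpha$ and so sidesteps the failure, noted in Remark \ref{critical term}, of any direct decay expansion of $[\tilde V_a,\pi_\alpha]$.
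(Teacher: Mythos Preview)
Your approach is essentially the same as the paper's: both recognise that $p(x_a,D_a)$ annihilates the free part $f(H_0)\pi_\alpha$, rewrite $p\pi_\alpha f(H)$ via a resolvent identity as an integral carrying one factor of $\pi_\alpha\tilde V_a$ (hence $\langle x\rangle^{-\rho}$ decay), and then iterate. The paper uses a Cauchy contour around $\mathrm{supp}\,f$ rather than the Helffer--Sj\"ostrand formula, and its iteration commutes $p$ to the right to produce a new symbol $\tilde p\in S^m_l$ with the same vanishing property, then re-applies the first step to $\tilde p(x_a,D_a)f(H)$; your iteration instead expands $(H-z)^{-1}$ repeatedly. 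These are equivalent in effect.

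One imprecision worth tightening: the $(H^a-\lambda_\alpha)$ piece is not literally annihilated. After commuting $\pi_\alpha$ past $(H_0-z)^{-1}$ you face $\pi_\alpha\tilde V_a(H^a-\lambda_\alpha)=\pi_\alpha(H^a-\lambda_\alpha)\tilde V_a+\pi_\alpha[\tilde V_a,H^a]$; the first term vanishes, but the commutator $[\tilde V_a,H^a]=[\tilde V_a,-\Delta^a]$ survives. This is harmless since $\nabla^a\tilde V_a$ carries an extra $\langle x\rangle^{-1}$, so the error has $\langle x\rangle^{-(1+\rho)}$ decay and fits the inductive scheme. Similarly, the ``leading'' term $p(H_0-z)^{-1}\pi_\alpha\tilde V_a(H_0-z)^{-1}$ produced at each pass (which you do not mention explicitly) integrates to $p\pi_\alpha\tilde V_a f'(H_0)$ modulo commutators with extra decay; commuting $p$ past $\tilde V_a$ and using $p\,f'(H_0)=0$ disposes of it. With these two points made explicit your argument is complete.
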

\begin{proof}
For a closed curve $\Gamma$ in $\mathbb C$ around $\mathrm{supp}\ f$ and intersecting $\mathbb R$ in $\dot\Omega$ we have
\begin{align*}
p(x_a,D_a)\pi_{\alpha}f(H)&=p(x_a,D_a)\frac{1}{2\pi i}\oint_{\Gamma-\lambda_{\alpha}}(-\Delta_a-z)^{-1}dz\pi_{\alpha}f(H)\\
&\quad-p(x_a,D_a)\pi_{\alpha}\frac{1}{2\pi i}\oint_{\Gamma}(H-z)^{-1}dzf(H)\\
&=\frac{1}{2\pi i}\oint_{\Gamma}p(x_a,D_a)((-\Delta_a-z+\lambda_{\alpha})^{-1}\pi_{\alpha}\\
&\quad -\pi_{\alpha}(H-z)^{-1})dzf(H)\\
&=\frac{1}{2\pi i}\oint_{\Gamma}p(x_a,D_a)(-\Delta_a-z+\lambda_{\alpha})^{-1}\pi_{\alpha}\tilde V_a(H-z)^{-1}dzf(H).
\end{align*}
Thus there is a $x$-decay by a factor $\langle x\rangle^{-\rho}$.  We have
\begin{align*}
p(x_a,D_a)&(-\Delta_a-z+\lambda_{\alpha})^{-1}\pi_{\alpha}\tilde V_a(H-z)^{-1}f(H)\\
&=(-\Delta_a-z+\lambda_{\alpha})^{-1}\pi_{\alpha}R\tilde p(x_a,D_a)f(H)(H-z)^{-1}+\tilde R(H-z)^{-1}f(H),
\end{align*}
where $R,\tilde R \in \mathcal L(\mathcal H)$,$\tilde p$ satisfies the same condition as $p$, $\pi_{\alpha}R\langle x\rangle^{\rho}\in \mathcal L(\mathcal H)$ and $\tilde R\langle x\rangle^n\in L(\mathcal H),\ \forall n\in \mathbb N$.  Repeating the same arguments as above for $\tilde p(x_a,D_a)f(H)$ we can see that $p(x_a,D_a)\pi_{\alpha}f(H)\langle x\rangle^s$ is bounded.
\end{proof}

We need the following lemma of Skibsted \cite{Sk} which holds under Assumption \ref{potentialas}.
\begin{lem}[Skibsted \cite{Sk}]\label{lem4.3}
With $f$ as in Lemma \ref{lem4.2} for any $0<s<s'$,
$$\langle x\rangle^{-s}e^{-itH}f(H)\langle x\rangle^{-s'}=O(t^{-s}),$$
as $t\to \infty$ in $\mathcal L(\mathcal H)$.
\end{lem}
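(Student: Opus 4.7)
The plan is to deduce Lemma \ref{lem4.3} from Skibsted's minimum velocity estimate Lemma \ref{Skibstedlem} by a geometric splitting in phase space. Since no commutators with the potentials enter the argument (this is a purely kinematic statement, without $\pi_\alpha$), the condition $s<s'$ rather than $s<\min\{s',\rho\}$ is natural.

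First I would reduce to the case where $f$ is supported in an arbitrarily small neighbourhood of a single energy $E\in\dot\Omega\setminus\mathcal T$. By standard Mourre theory for the subsystem Hamiltonians $H^a$ the set $\mathcal T\cap\overline{\mathrm{supp}\,f}$ is finite, and for each of its points one can use $1-\zeta$ type cutoffs to discard arbitrarily small neighbourhoods while absorbing the error into $O(t^{-s})$ via a Putnam/Kato type smoothness argument; away from $\mathcal T$ a finite smooth partition of unity localises $f$ into pieces $f_j$ each supported close to some $E_j\notin\mathcal T$, to which Lemma \ref{Skibstedlem} applies. Once this is done it suffices to prove the bound for one such $f$.

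Fix $\epsilon>0$ with $4\epsilon<d(E)$, and let $\chi$ be the cutoff in Lemma \ref{Skibstedlem}. Writing $\chi(\cdot):=\chi\!\left(\tfrac{x^2}{4t^2}-d(E)<-\epsilon\right)$, split
$$\langle x\rangle^{-s}e^{-itH}f(H)\langle x\rangle^{-s'}=\langle x\rangle^{-s}\chi(\cdot)e^{-itH}f(H)\langle x\rangle^{-s'}+\langle x\rangle^{-s}(1-\chi(\cdot))e^{-itH}f(H)\langle x\rangle^{-s'}.$$
The first summand is $O(t^{-s})$ in $\mathcal L(\mathcal H)$ because $\lVert\langle x\rangle^{-s}\rVert\le 1$ and $\chi(\cdot)e^{-itH}f(H)\langle x\rangle^{-s'}=O(t^{-s})$ by Lemma \ref{Skibstedlem}. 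For the second summand, $1-\chi(\cdot)$ is supported in $\{|x|\ge 2t\sqrt{d(E)-2\epsilon}\}$, so $\langle x\rangle^{-s}(1-\chi(\cdot))=O(t^{-s})$ pointwise and therefore in operator norm, while the remaining factor $e^{-itH}f(H)\langle x\rangle^{-s'}$ is uniformly bounded.

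The main obstacle is neither of these steps in isolation but the covering reduction: one must confirm that $\mathcal T$ does not spoil the argument inside $\dot\Omega$. As remarked above, Mourre theory guarantees that $\mathcal T$ meets $\overline{\mathrm{supp}\,f}$ in only finitely many points, and the rest is the routine decomposition sketched here.
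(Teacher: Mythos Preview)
The paper does not prove this lemma at all; it simply attributes the result to Skibsted~\cite{Sk}. Your deduction from Lemma~\ref{Skibstedlem} via the splitting $\chi(\cdot)+(1-\chi(\cdot))$ is exactly the standard argument (and is the same mechanism the paper uses in Lemma~\ref{lem4.1} to pass from Theorem~\ref{main1} to the $\langle x_a\rangle^{-s}$ estimate). For $f$ supported in a sufficiently small neighbourhood of a fixed $E\notin\mathcal T$ your proof is correct as written.

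The gap is in your covering reduction at thresholds. Two points: first, $\mathcal T=\bigcup_{\#a\ge 2}\sigma_{\mathrm{pp}}(H^a)$ is a closed countable set but can accumulate at its own points, so $\mathcal T\cap\overline{\mathrm{supp}\,f}$ need not be finite in general. Second, and more seriously, a Putnam/Kato smoothness argument yields an $L^2$-in-time bound of the form $\int_0^\infty\lVert B\,e^{-itH}u\rVert^2\,dt<\infty$, not a pointwise $O(t^{-s})$ decay; moreover Kato smoothness via a positive commutator requires a Mourre estimate, which is precisely what fails at points of $\mathcal T$. So the sketch you give does not close the gap.

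In practice this reflects an imprecision in the statement rather than a defect in your approach: Skibsted's original estimate, Lemma~\ref{Skibstedlem}, and every place Lemma~\ref{lem4.3} is actually invoked in this paper (via Lemma~\ref{lem4.4} inside the proof of Theorem~\ref{main2}) concern $f$ localised near a non-threshold energy, where your argument goes through cleanly. If you want a self-contained statement valid for arbitrary $f\in C_0^\infty(\dot\Omega)$, you would need a genuinely different ingredient at threshold energies, and that is not supplied by either the paper or your sketch.
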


The next lemma is an easy consequence of Lemma \ref{lem4.2} and \ref{lem4.3}.
\begin{lem}\label{lem4.4}
With the condition of Lemma \ref{lem4.2} for any $s,s'$ such that $\min\{s',\rho\}>s>0$ and $\rho-l>s$,
$$p(x_a,D_a)\pi_{\alpha}e^{-itH}f(H)\langle x\rangle^{-s'}=O(t^{-s}),$$
as $t\to +\infty$ in $\mathcal L(\mathcal H)$.
\end{lem}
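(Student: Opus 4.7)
The plan is to reduce the statement to Lemmas \ref{lem4.2} and \ref{lem4.3} by a short factorization argument. I first introduce an intermediate cutoff $f_1 \in C_0^\infty(\dot\Omega)$ with $f_1 f = f$, chosen so that the hypothesis of Lemma \ref{lem4.2} is satisfied with $f_1$ in place of $f$ (this is possible by slightly enlarging the support of $f_1$ while keeping it in $\dot\Omega$). Since $f_1(H)$ commutes with both $e^{-itH}$ and $f(H)$, the expression may be rewritten as
\[
p(x_a, D_a)\pi_\alpha e^{-itH}f(H)\langle x\rangle^{-s'} = p(x_a, D_a)\pi_\alpha f_1(H) e^{-itH}f(H)\langle x\rangle^{-s'}.
\]

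Next I insert $I = \langle x\rangle^s \langle x\rangle^{-s}$ between $f_1(H)$ and $e^{-itH}$ and split the operator into the product
\[
\bigl[p(x_a, D_a)\pi_\alpha f_1(H)\langle x\rangle^s\bigr] \cdot \bigl[\langle x\rangle^{-s} e^{-itH}f(H)\langle x\rangle^{-s'}\bigr].
\]
By Lemma \ref{lem4.2}, applied with $f_1$ and exponent $s$, the first bracketed factor is uniformly bounded in $t$; the restrictions $s < \rho$ and $\rho - l > s$ are what is needed for the iterated commutator/Helffer--Sj\"ostrand argument in that lemma to absorb the $\langle x_a\rangle^l$ growth of $p$ against the $\langle x\rangle^{-\rho}$ decay extracted from the factor $\pi_\alpha \tilde V_a$ at each iteration. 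By Lemma \ref{lem4.3}, since $0 < s < s'$, the second bracketed factor has norm $O(t^{-s})$. Multiplying the two gives the desired estimate.

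The only delicate point — not really an obstacle — is justifying the identity $\langle x\rangle^s \langle x\rangle^{-s} = I$ at the operator level: the right-hand side of the factorization is a composition of two bounded operators on $\mathcal H$, and it agrees with the left-hand side on the dense subspace $\langle x\rangle^{s'}\mathcal S$ where the unbounded weights act unambiguously, so the two coincide as bounded operators everywhere. Thus no new analytic input beyond the two preceding lemmas is required.
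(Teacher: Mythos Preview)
Your proof is correct and is precisely what the paper intends: no proof is given there beyond the remark that the lemma is ``an easy consequence of Lemma \ref{lem4.2} and \ref{lem4.3}'', and your factorization through an auxiliary $f_1(H)$ with $f_1f=f$ together with the insertion $\langle x\rangle^{s}\langle x\rangle^{-s}$ is the natural way to combine them. One minor point: your attribution of the restrictions $s<\rho$ and $\rho-l>s$ to the iteration inside Lemma \ref{lem4.2} is slightly off, since that lemma is stated for \emph{all} $s>0$, but this does not affect the logical validity of your argument.
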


\begin{proof}[Proof of Theorem \ref{main2}]
We may suppose $s'<\rho$. Let $E>0$ and $\xi_0\in X_a$ with $\lvert \xi_0\rvert^2=E-\lambda_{\alpha}$ be given. By Lemma \ref{lem4.1}, Lemma \ref{lem4.4} and covering arguments it suffices to find neighborhoods $N_E$ of $E$ and $N_{\xi_0}$ of $\xi_0$, respectively such that the estimate holds for any $f\in C_0^{\infty}(\mathbb R^+)$ and $p(x_a,D_a)\in S^0_0$ with the properties: $\mathrm{supp}\ f\in N_E,\ \mathrm{supp}\ p\subset\{(x_a,\xi_a)\vert x_a\cdot\xi_a<(1-\epsilon_1)\lvert x_a\rvert \lvert \xi_a\rvert\}$ and $p(x_a,\xi_a)=0$ for $\xi_a\notin N_{\xi_0}$.

We write
\begin{align*}
\langle x_a\rangle^lp(x_a,D_a)&\pi_{\alpha}e^{-itH}f(H)\langle x\rangle^{-s'}\\
=&\langle x_a\rangle^l\chi\left(E'-\frac{x_a^2}{4t^2}<-2\epsilon\right)p(x_a,D_a)\pi_{\alpha}e^{-itH}f(H)\langle x\rangle^{-s'}\\
&+\langle x_a\rangle^l\left(1-\chi\left(E'-\frac{x_a^2}{4t^2}<-2\epsilon\right)\right)p(x_a,D_a)\pi_{\alpha}e^{-itH}f(H)\langle x\rangle^{-s'}.
\end{align*}
with $E'$ and $\epsilon$ as in Lemma \ref{lem4.0}.
Since $E'-\frac{x^2}{4t^2}\leq E'-\frac{x_a^2}{4t^2}<-2\epsilon$ holds on $\mathrm{supp}\ \chi\left(E'-\frac{x_a^2}{4t^2}<-2\epsilon\right)$, we have 
\begin{equation}\label{myeq4.8.1}
\chi\left(E'-\frac{x_a^2}{4t^2}<-\epsilon\right)=\chi\left(E'-\frac{x_a^2}{4t^2}<-2\epsilon\right)\chi\left(E'-\frac{x^2}{4t^2}<-\epsilon\right).
\end{equation}
By the pseudodifferential calculus, \eqref{myeq4.8.1} and Lemma \ref{lem4.0} the first term is estimated as $O(t^{-s+l})$. As for the second term we have
\begin{align*}
\langle x_a\rangle^l\left(1-\chi\left(E'-\frac{x^2}{4t^2}<-2\epsilon\right)\right)&p(x_a,D_a)\pi_{\alpha}e^{-itH}f(H)\langle x\rangle^{-s'}\\
&=O(t^l)p(x_a,D_a)\pi_{\alpha}e^{-itH}f(H)\langle x\rangle^{-s'}.
\end{align*}
Therefore, we only need to prove the estimate with $l=0$.

Let $\chi_1(x_a)\in C^{\infty}(X_a)$ be homogeneous of degree zero outside the unitsphere in $X_a$ and satisfy
\begin{equation}\label{myeq4.9}
\mathrm{supp}\ \chi_1(\cdot)\subset\left\{x_a\vert x_a\cdot\xi_0<\left(1-\frac{\epsilon_1}{2}\right)\lvert x_a\rvert \lvert\xi_0\rvert\right\}.
\end{equation}
Corresponding to $E$ given above and $\epsilon=\frac{1}{3}$ in Lemma \ref{lem4.0} we can find a neighborhood $N^1_E$ of $E$ and $E'\geq E$, such that the estimates of the Lemma hold for $E'$ and for any $f$ with $\mathrm{supp}\ f\subset N^1_E$. With this $E'$ we define $\tilde A(\tau)$ by \eqref{myeq4.0.1}. 

We choose $0<\epsilon<\min\left\{\frac{(-\lambda_{\alpha})}{18},\frac{\epsilon_1(E-\lambda_{\alpha})}{2}\right\}$ so small that 
\begin{equation}\label{myeq4.10}
(E-\lambda_{\alpha}-6\epsilon)^{1/2}\left((E-\lambda_{\alpha}-6\epsilon)^{1/2}\left(1-\frac{\epsilon_1}{4}\right)-\left(1-\frac{\epsilon_1}{2}\right)(E-\lambda_{\alpha})^{1/2}\right)>\frac{\epsilon_1}{8}(E-\lambda_{\alpha}),
\end{equation}
and with $d=\frac{\epsilon_1}{4}(E-\lambda_{\alpha})\left(1-\frac{\epsilon_1}{4}\right)$
\begin{equation}\label{myeq4.11}
2(E'+2)^{1/2}3\epsilon<\frac{d}{6}.
\end{equation}

Corresponding to $E$ given above and $\epsilon$ above in Theorem \ref{main1} we can find a neighborhood $N^2_E$ of $E$, such that the estimate of the theorem holds for any $f$ with $\mathrm{supp}\ f\subset N^2_E$. Put $N_E=N^1_E\cap N^2_E$. Clearly Lemma \ref{propagation2} holds for any $f$ with $\mathrm{supp}\ f\subset N_E$.

Let $\chi_2(\xi_a)$ be a smooth function such that
$$\mathrm{supp}\ \chi_2(\cdot)\subset B_{\epsilon}(\xi_0)=\{\xi_a\vert \lvert\xi_a-\xi_0\rvert\leq \epsilon\},$$
with $\epsilon$ above.
By Lemma \ref{lem4.1} and the calculus of pseudodifferential operators it is sufficient to prove for $\phi\in \mathcal H$ and $\mathrm{supp}\ f\subset N_E$ with $\psi(t)=e^{-itH}f(H)\langle x\rangle^{-s'}\phi,\ \chi_2=\chi_2(D_a)$ and $\psi_1(t)=\chi_1\chi_2\pi_{\alpha}\psi(t)$ that
\begin{equation}\label{myeq4.11.1}
\lVert \psi_1(t)\rVert = O(t^{-s})\lVert \phi\rVert.
\end{equation}

Set
$$B(t)=\left(I-\chi\left(\frac{x_a^2}{4t^2}-(E-\lambda_{\alpha})<-3\epsilon\right)\right)\left(I-\chi\left(E'-\frac{x_a^2}{4t^2}<-1\right)\right).$$

As in Skibsted \cite{Sk} we have
$$\lVert \psi_1(t)\rVert \leq 2\lVert B(t)\psi_1(t)\rVert^2+Ct^{-2s}\lVert \phi\rVert.$$
Due to \eqref{myeq4.9} and \eqref{myeq4.10} we have for $x_a\in \mathrm{supp}\ B(t)$ and $\tau=t+t_0,\ t_0>0,\ t>t_0\left(\frac{4}{\epsilon_1}-1\right)$,
\begin{equation*}
\frac{x_a}{\tau}\left(\frac{x_a}{2\tau}-\xi_0\right)\geq\left(1-\frac{\epsilon_1}{4}\right)\frac{\epsilon_1}{4}(E-\lambda_{\alpha})=d.
\end{equation*}
Thus we have
$$\lVert \psi_1\rVert^2\leq \frac{2}{d}\left(B(t)\psi_1(t),\frac{x_a}{\tau}\left(\frac{x_a}{2\tau}-\xi_0\right)B(t)\psi_1(t)\right)+Ct^{-2s}\lVert \phi\rVert^2.$$

Pick $\chi_3(\xi_a)\in C^{\infty}(X_a')$ such that $\mathrm{supp}\ \chi_3\subset B_{3\epsilon}(\xi_0)$ and $\chi_3(\xi_a)=1$ on $B_{2\epsilon}(\xi_0)$. Then
\begin{equation}\label{myeq4.12}
(I-\chi_3(D_a))B(t)\chi_1\chi_2=O(t^{-2s})
\end{equation}
and therefore,
\begin{equation}\label{myeq4.13}
\begin{split}
\lVert \psi_1\rVert^2\leq& -\frac{2}{d}\left(B(t)\psi_1(t),\frac{A_0(\tau)}{\tau}B(t)\psi_1(\tau)\right)\\
&+\frac{2}{d}\left \lVert B(t)\frac{x_a}{\tau}(p_a-\xi_0)\chi_3(D_a)B(t)\right \rVert\lVert \psi_1\rVert^2+Ct^{-2s}\lVert \phi\rVert^2
\end{split}
\end{equation}
By \eqref{myeq4.11} we have
\begin{equation}\label{myeq4.14}
\left \lVert B(t)\frac{x_a}{\tau}(p_a-\xi_0)\chi_3(D_a)B(t)\right \rVert\leq \frac{d}{6}.
\end{equation}

As for the first term of the right hand side of \eqref{myeq4.13}, we notice that
\begin{equation}\label{myeq4.15}
\begin{split}
-\left(B(t)\psi_1(t),\frac{A_0(\tau)}{\tau}B(t)\psi_1(\tau)\right)\leq& -\left(B(t)\psi_1(t),\frac{\tilde A(\tau)}{\tau}B(t)\psi_1(\tau)\right)\\
&+Ct^{-2s}\lVert \phi\rVert^2+C_1t^{-2\kappa_0}\lVert \psi_1\rVert^2
\end{split}
\end{equation}

We choose $t'>0$ such that
\begin{equation}\label{myeq4.16}
C_1t^{-2\kappa_0}<\frac{d}{6}
\end{equation}

We obtain by \eqref{myeq4.12}-\eqref{myeq4.16} and transposition that for
$$t>\max\left\{t_0\left(\frac{4}{\epsilon_1}-1\right),t'\right\}$$
with $\epsilon'=d/12$
\begin{equation}\label{myeq4.16.1}
\frac{1}{6}\lVert \psi_1\rVert^2\leq-\frac{2}{d}\left(B(t)\psi_1(t),\frac{\tilde A(\tau)}{\tau}\chi^2\left(\frac{\tilde A(\tau)}{\tau}<-\epsilon'\right)B(t)\psi_1(\tau)\right)+Ct^{-2s}\lVert \phi\rVert^2.
\end{equation}
For $s'>1/2$, commuting the factor $\left(-\frac{\tilde A(\tau)}{\tau}\right)^{1/2}\chi\left(\frac{\tilde A(\tau)}{\tau}<-\epsilon'\right)$ in front of $\pi_{\alpha}\psi(t)$, by Lemma \ref{propagation2} we have \eqref{myeq4.11.1}.

When  $s'\leq 1/2$, using that $\lVert\langle \tilde A(\tau)\rangle^r\langle p_a\rangle^{-r}\rVert_{\mathcal L(\mathcal H)}=O(\tau^r)$ for any $r>0$ and $\langle p_a\rangle^r\chi_2(D_a)\in \mathcal L(\mathcal H)$,  commuting the factor $\left(-\frac{\tilde A(\tau)}{\tau}\right)^{\tilde s}\chi\left(\frac{\tilde A(\tau)}{\tau}<-\epsilon'\right)$ with $\tilde s<s'$ in front of $\pi_{\alpha}\psi(t)$ we obtain \eqref{myeq4.11.1} which completes the proof of Theorem \ref{main2}.
\end{proof}

\begin{proof}[Idea of the proof of Theorem \ref{main4}]
The proof is very similar to that of Theorem \ref{main2} (see also Skibsted \cite[Theorem 4.5]{Sk}). By Lemma \ref{lem4.0} it can be assumed that $l=0$. We introduce cutoff functions $\chi_1(x_a)$ and $\chi_2(\xi_a)$ with similar properties except for \eqref{myeq4.9} and the operator $\tilde A(\tau)$ in \eqref{myeq4.0.1}. Then since we have
$$\frac{x_a}{\tau}\left(\frac{x_a}{\tau}-\xi_0\right)\geq 2\sqrt{E-\lambda_{\alpha}+\epsilon}(\sqrt{E-\lambda_{\alpha}+\epsilon}-\sqrt{E-\lambda_{\alpha}})>0,$$
we derive an estimate of the type \eqref{myeq4.16.1} but with 
$$B(t)=\left(I-\chi\left(E'-\frac{x_a^2}{4t^2}<-1\right)\right),$$
and 
$$\psi_1(t)=\chi\left(E-\lambda_{\alpha}-\frac{x_a^2}{4t^2}<-\epsilon\right)\chi_1\chi_2\pi_{\alpha}\psi(t).$$
and apply it with Lemma \ref{propagation2}.
\end{proof}

\begin{proof}[Proof of Theorem \ref{main3}]
We may suppose $s'<\rho$. Let $f$ be a smooth function such that $\mathrm{supp}\ f\subset \mathbb R^+$ and $f=1$ on some compact interval $I_1\subset \mathbb R^+$ such that $I\subset \dot I_1$.
We can easily see that there exists a constant $C>0$ such that
$$\sup_{\substack{\lambda\in I\\ \mu>0}}\lVert\langle x_a\rangle^sp(x_a,D_a)\pi_{\alpha}(R(\lambda+i\mu))^m(1-f(H))\langle x\rangle^{-s'}\rVert<C,$$
and the limit
$$s-\lim_{\substack{\mu\in \mathbb R^+\\\mu\to +0}}\langle x_a\rangle^s p(x_a,D_a)\pi_{\alpha}(R(\lambda+i\mu))^m(1-f(H))\langle x\rangle^{-s'},$$
exists.

We have for $z\in \mathbb C,\ \mathrm{Im}\ z>0$,
\begin{equation}\label{myeq4.18}
\begin{split}
&\langle x_a\rangle^sp(x_a,D_a)\pi_{\alpha}(R(z))^mf(H)\langle x\rangle^{-s'}\\
&\quad=i^m\int_0^{\infty}\int_{t_m}^{\infty}\dotsm\int_{t_2}^{\infty}\langle x_a\rangle^sp(x_a,D_a)\pi_{\alpha}e^{-it_1(H-z)}f(H)\langle x\rangle^{-s'}dt_1\dotsm dt_m,
\end{split}
\end{equation}
where $\int_{t_m}^{\infty}\dotsm\int_{t_2}^{\infty}$ and $\dotsm dt_m$ are removed when $m=1$.
By Theorem \ref{main3} the right-hand side of \eqref{myeq4.18} is bounded uniformly with respect to $z\in \mathbb C$ such that $\mathrm{Re}\ z\in I$ and $\mathrm{Im} z>0$ which proves the first part of Theorem \ref{main3}.

As for the second part, for $z_1,\ z_2\in \mathbb C$ such that $\mathrm{Re}\ z_j\in I$ and $\mathrm{Im} z_j>0$, $j=1,2$ we have
\begin{equation}\label{myeq4.19}
\begin{split}
&\langle x_a\rangle^sp(x_a,D_a)\pi_{\alpha}(R(z_1))^mf(H)\langle x\rangle^{-s'}\\
&\qquad-\langle x_a\rangle^sp(x_a,D_a)\pi_{\alpha}(R(z_2))^mf(H)\langle x\rangle^{-s'}\\
&\quad =i^m\int_0^{\infty}\int_{t_m}^{\infty}\dotsm\int_{t_2}^{\infty}\langle x_a\rangle^sp(x_a,D_a)\pi_{\alpha}e^{-it_1(H-z_1)}f(H)\langle x\rangle^{-s'}dt_1\dotsm dt_m\\
&\qquad-i^m\int_0^{\infty}\int_{t_m}^{\infty}\dotsm\int_{t_2}^{\infty}\langle x_a\rangle^sp(x_a,D_a)\pi_{\alpha}e^{-it_1(H-z_2)}f(H)\langle x\rangle^{-s'}dt_1\dotsm dt_m\\
&\quad =i^m\int_0^{\infty}\int_{t_m}^{\infty}\dotsm\int_{t_2}^{\infty}\langle x_a\rangle^sp(x_a,D_a)\pi_{\alpha}\\
&\qquad \cdot\int_0^1it_1(z_1-z_2)e^{-it_1(H-z_2+r(z_2-z_1))}drf(H)\langle x\rangle^{-s'}dt_1\dotsm dt_m.
\end{split}
\end{equation}
Setting $\mu_j=\mathrm{Im}z_j,\ j=1,2$ and $0<2s_1<s'-s-m$ and assuming $\mathrm{Re} z_1=\mathrm{Re} z_2$ and $\mu_1>\mu_2>0$ the norm of the right-hand side of \eqref{myeq4.19} is estimated as
\begin{align*}
C(\mu_1-\mu_2)&\int_0^{\infty}\int_{t_m}^{\infty}\dotsm\int_{t_2}^{\infty}\int_0^1t_1^{-m-2s_1+1}e^{-(\mu_2+r(\mu_1-\mu_2))t_1}drdt_1\dotsm dt_m\\
&\leq C(\mu_1-\mu_2)\int_0^{\infty}\int_{t_m}^{\infty}\dotsm\int_{t_3}^{\infty}t_2^{-m-s_1+1}dt_2\dotsm dt_m\\
&\quad\cdot\int_0^1\int_{0}^{\infty}t_1^{-s_1}e^{-(\mu_2+r(\mu_1-\mu_2))t_1}dt_1dr\\
&\leq C(\mu_1-\mu_2)\int_0^1(\mu_2+r(\mu_1-\mu_2))^{-1+s_1}dr\leq C(\mu_1-\mu_2)^{s_1},
\end{align*}
where $\int_0^{\infty}\int_{t_m}^{\infty}\dotsm\int_{t_3}^{\infty}t_2^{-m-s_1}dt_2\dotsm dt_m$ is removed when $m=1$ and $\int_{t_m}^{\infty}\dotsm\int_{t_3}^{\infty}$ and $\dotsm dt_m$ are removed when $m=2$.
This proves the existence of the limit. The continuity can be seen in a similar way.
\end{proof}

\section{Proof of Theorem \ref{projection}, Theorem \ref{ais1}, Theorem \ref{three} and Theorem \ref{ais2}}\label{sec5}
\begin{proof}[Idea of the proof of Theorem \ref{projection}]
The construction of $\Pi$ is similar to the construction of the projection onto the almost invariant subspaces in the Born--Oppenheimer approximation for atoms and molecules (see Ashida \cite{As} and Martinez and Sordoni \cite{MS1,MS2}).

In the Born--Oppenheimer approximation we considered the smallness of operators with respect to the powers of the semiclassical parameter $h$ which is the square root of the ratio of electronic and nuclear mass. In propagation estimates we measure the smallness of operators with respect to the decay in intercluster coordinates $x_a$. In the construction of $\Pi$ we need pseudodifferential operators with operator valued symbols (see Ashida \cite[Appendix]{As} and Martinez and Sordoni \cite{MS2}). Let $\mathcal H_1$ and $\mathcal H_2$ be Hilbert spaces. A function $a(x_a,\xi_a)\in C^{\infty}(X_a \times X_a';\mathcal L(\mathcal H_1,\mathcal H_2))$ is said to be in $S^m_l(\mathcal L(\mathcal H_1,\mathcal H_2))$ if for any $\alpha,\beta\in \mathbb N^{\mathrm{dim}X_a}$ one has
$$\lVert \partial_{x_a}^{\alpha}\partial_{\xi_a}^{\beta}p(x_a,\xi_a)\rVert_{\mathcal L(\mathcal H_1,\mathcal H_2)}\leq C_{\alpha,\beta}\langle x_a\rangle^{l-\lvert \alpha\rvert}\langle \xi_a\rangle^m,\ \forall (x_a,\xi_a)\in X_a\times X_a',$$
For $a(x_a,\xi_a)\in S^m_l(\mathcal L(\mathcal H_1,\mathcal H_2))$ and $u\in \mathcal S(X_a; \mathcal H_1)$ we set
$$(a(x_a,D_a)u)(x_a):=(2\pi)^{-\nu \mathrm{dim}X_a}\int\int e^{i(x_a-y_a)\xi_a}a(x_a,\xi_a)u(y_a)dy_ad\xi_a.$$
Since $a(x_a,D_a)$ is continuous $\mathcal S(X_a;\mathcal H_1)\to \mathcal S(X_a;\mathcal H_2)$, we can extend it uniquely to a linear continuous operator $\mathcal S'(X_a;\mathcal H_1)\to \mathcal S'(X_a;\mathcal H_2)$.

We denote by $h(x_a,\xi_a):=\xi_a^2+\tilde V_a+H^a$ the operator valued symbol of $H$@and by $h_0(x_a,\xi_a):=\xi_a^2+H^a$ its principal symbol with respect to the decay in intercluster coordinates $x_a$ (note that $\tilde V_a\in S_{-\rho}^0(\mathcal L(e^{-c\lvert x^a\rvert}L^{2}(X^a),L^{2}(X^a))$ for any $c>0$). Let $\gamma$ be a loop enclosing $\sigma$ and satisfying $\mathrm{dist}(\gamma,\sigma)>0$. Set $\Omega:=\{(x_a,\xi_a,z)\in X_a\times X_a'\times \mathbb C; z-\xi_a^2\in \gamma\}$. For$(x_a,\xi_a,z)\in \Omega$, $h_0(x_a,\xi_a)-z$ is invertible and $q_0(x_a,\xi_a;z):=(h_0(x_a,\xi_a)-z)^{-1}$ is smooth and bounded. We define a symbol $r(x_a,\xi_a;z)$ as in Ashida \cite{As} by 
$$r(x_a,\xi_a;z):=1-(h(x_a\xi_a)-z)\# q_0(x_a,\xi_a;z).$$
Then we can easily see that $r\in S_{-1-\rho}^0(\mathcal L(e^{-c\lvert x^a\rvert}L^{2}(X^a),L^{2}(X^a))$ for any $c>0$. We define $q(x_a,\xi_a;z)$ by
$$q(x_a,\xi_a;z):=q_0+q_0\#\sum_{j\geq 1}r^{\# j}.$$
Then $q(x_a,\xi_a;z)$ can be rewritten as
$$q(x_a,\xi_a;z)=\sum_{j \geq 1} q_j(x_a,\xi_a;z),$$
where $q_j(x_a,\xi_a;z)\in S_{-j-\rho}^j(\mathcal L(e^{-c\lvert x^a\rvert}L^{2}(X^a),L^{2}(X^a))$ for any $c>0$. Moreover, $q_j(x_a,\xi_a;z)$ is given by the sum of terms of the following form
$$Cq_0\prod_{i=1}^m((\partial_{x_a}^{\alpha_i}\partial_{\xi_a}^{\beta_i}s_i)q_0),$$
where $C$ is a constant $1\leq m\leq 2j$, $\lvert \sum_{i=1}^m\alpha_i\rvert=\lvert \sum_{i=1}^m\beta_i\rvert\leq j$ and $s_i$ is one of the following
$$\xi_a^{\alpha}, \lvert \alpha\rvert=1,\ \partial_{x_a}^{\beta} \tilde V_a, \lvert \beta\rvert=1.$$
Let us set $\Gamma(\xi_a):=\{z\in \mathbb C;z-\xi_a^2\in \gamma\}$, and define
$$\hat \pi_j(x_a,\xi_a):=\frac{i}{2\pi}\oint_{\Gamma(\xi_a)}q_j(x_a,\xi_a;z)dz.$$
Then we have $\hat \pi_0=\pi$. Moreover, as in Ashida \cite[Lemma 3.2]{As} we can insert $\pi$ in the terms in $\hat \pi_j$ and prove that $q_j(x_a,\xi_a;z)\in S_{-j-\rho}^j(\mathcal L(L^{2}(X^a))$ and
$$\Phi(H)\hat \pi_j(x_a,D_a),\Phi(H)\hat \pi_j(x_a,D_a)H\in\mathcal L(\mathcal H),$$

Let $\delta\in C_0^{\infty}(\mathbb R)$ be a function such that $0\leq \delta\leq 1$, $\mathrm{supp}\, \delta\subset [-2,2]$ and $\delta=1$ on $[-1,1]$. Then it is easy to see that there exists a decreasing sequence of positive numbers $(\epsilon_j)_{j\in \mathbb N}$ converging to zero  such that for any $j\in \mathbb N$ 
\begin{align*}
\lVert&\langle D\rangle^u\langle x\rangle^r(1-\delta(\epsilon_j\lvert x_a\rvert))\hat \pi_j(x_a,D_a)\Phi(H)\langle x \rangle^{r'}\langle D\rangle^{u'}\rVert_{\mathcal L(\mathcal H)}\\
&+\lVert\langle D\rangle^u\langle x\rangle^r(1-\delta(\epsilon_j\lvert x_a\rvert))\hat \Phi(H)\pi_j(x_a,D_a)\langle x \rangle^{r'}\langle D\rangle^{u'}\rVert_{\mathcal L(\mathcal H)}\\
&+\lVert\langle D\rangle^u\langle x\rangle^r(1-\delta(\epsilon_j\lvert x_a\rvert))H\hat \pi_j(x_a,D_a)\Phi(H)\langle x \rangle^{r'}\langle D\rangle^{u'}\rVert_{\mathcal L(\mathcal H)}\\
&+\lVert\langle D\rangle^u\langle x\rangle^r(1-\delta(\epsilon_j\lvert x_a\rvert))\hat \Phi(H)\pi_j(x_a,D_a)H\langle x \rangle^{r'}\langle D\rangle^{u'}\rVert_{\mathcal L(\mathcal H)}\leq 2^{-j-2},
\end{align*}
for any $r,r',u,u'\in \mathbb R$ such that $\lvert r\rvert,\lvert r'\rvert,\lvert u\rvert,\lvert u'\rvert\leq j$ and $r+r'\leq j$.
Let us define
\begin{align*}
\hat \Pi\Phi(H)&:=\pi\Phi(H)+\sum_{j\geq 1}(1-\delta(\epsilon_j\lvert x_a\rvert))\hat\pi_j(x_a,D_a)\Phi(H),\\
\Phi(H)\hat \Pi&:=\Phi(H)\pi+\sum_{j\geq 1}(1-\delta(\epsilon_j\lvert x_a\rvert))\Phi(H)\hat\pi_j(x_a,D_a),
\end{align*}
and
$$\hat \Pi_{\Phi}:=\Phi(H)\hat \Pi+(1-\phi(H))\hat \Pi\Phi(H)+(1-\phi(H))\pi(1-\phi(H)).$$
Then we have \eqref{myeq2.1} with $\Pi$ replaced by $\hat \Pi_{\Phi}$ and $\lVert\hat\Pi_{\Phi}-\pi\rVert_{\mathcal L(\mathcal H)}<1/2$. We define
$$\Pi:=\oint_{\lvert z-1\rvert=1/2}(\hat\Pi_{\Phi}-z)^{-1}dz.$$
Then as in Ashida \cite{As} we have \eqref{myeq2.0}, \eqref{myeq2.1} and \eqref{myeq2.1.1}.
\end{proof}

The proofs of Theorem \ref{ais1} and Theorem \ref{ais2} are similar to the ones of Theorem \ref{main1}, Theorem \ref{main2} and Theorem \ref{main3}. The difference is that we have $\lvert E_2\rvert=O(\tau^{-\infty})$ when $\pi_{\alpha}$ is replaced by $\Pi$, so that we can repeat the inductive argument without the restriction $\gamma_0<\gamma'_0<2\rho$.

\begin{proof}[Idea of the proof of Theorem \ref{three}]
Using \cite[Theorem 2.3]{Wa2} we can insert $\pi$ in front of $e^{-itH}$ as in the proof of \cite[Theorem 2.5]{Wa2}. Since $\langle x\rangle^{1+\rho}(\pi-\Pi)\in \mathcal L(\mathcal H)$, using Theorem \ref{ais1} we only need to prove \eqref{myeq2.1.2} for $p\in S_{-1-\rho}^0$. Repeating the same procedure we can assume $p\in S_{-n(1+\rho)}^0$ for any $n\in \mathbb N$. Choosing $n$ such that $n(1+\rho)\geq s$, by Lemma \ref{Skibstedlem} we obtain \eqref{myeq2.1.2}.
\end{proof}

\section{Concluding remarks}\label{concluding}
Here we discuss about interpretations on the conditions for the indices in propagation estimates.
In general the condition $s<\rho$ does not seem to be removable. The reason why the restriction $s<\rho$ occur seems to be different from that for the restriction $s<s'$.

(i) The restriction $s<s'$ in our Theorem 2.2 or Theorem 3.3 in Skibsted \cite{Sk} seems to be needed because if $x\in X$ is large in an initial state at time $t=0$, $x$ may decrease as time passes. In that case, the state starts scattering only after $x$ become small. If $\lvert x\rvert =r$ initially and the speed of the particles is $v$ which is determined from the kinetic energy, the time needed for particles to start scattering may be $t=rv^{-1}$. Therefore, if the distribution of the initial state is as $\langle x\rangle^{-s'}$, the amount (norm) of the part of the wave function which have not started to scatter at time $t$ may be estimated as $\langle vt\rangle^{-s'}$, so that this part does not affect the propagation estimates.

(ii) The bound $t^{-\rho}$ seems to come from the interaction between different scattering channels. The projected wave function $\pi_{\alpha}\psi(t)$ where $$\psi(t)=e^{-itH}f(H)\langle x\rangle^{-s'}\psi$$
for $\psi\in \mathcal H$ can be written as $\phi_{\alpha}(x^a)\otimes \psi_{\alpha}(x_a,t)$ where $\phi_{\alpha}(x^a)$ is the eigenfunction corresponding to the channel $\alpha$ and $\psi_{\alpha}(x_a,t)\in L^2(X_a)$. If the part $\phi_{\alpha}(x^a)\otimes \psi_{\alpha}(x_a,t)$ did not interact with the other parts corresponding to the other scattering channels such as $\phi_{\beta}(x^b)\otimes \psi_{\beta}(x_b,t)$, $x_a$ would behave as $\lvert x_a \rvert\sim 2(E-\lambda_{\alpha})^{1/2}t$ since the influence of $\tilde V_a(x_a)$ decays as $\psi_{\alpha}(x_a,t)$ scatters. However in reality, the part $\phi_{\alpha}(x^a)\otimes \psi_{\alpha}(x_a,t)$ interacts with the other scattering channels through $\tilde V_a(x_a)$ which is the only term in $H$ that does not commute with $\pi_{\alpha}$ (note that if $[\pi_{\alpha},H]=0$ then $\mathrm{Ran} \, \pi_{\alpha}$ is conserved).

A state in channels $\beta$ with $\lambda_{\beta}>\lambda_{\alpha}$ scatters at the speed $2(E-\lambda_{\beta})^{1/2}$ which is smaller than $2(E-\lambda_{\alpha})^{1/2}$. Therefore if this state is converted into a state in $\mathrm{Ran}\, \pi_{\alpha}$, it can not be cut off by $\chi\left(\frac{x_a^2}{4t^2}-(E-\lambda_{\alpha})<-\epsilon\right)$ for $\epsilon$ small enough.

We obtain a rough estimate on the rate of the conversion as follows. When the state $\phi_{\beta}(x^b)\otimes \psi_{\beta}(x_b,t)$ is converted into $\mathrm{Ran}\, \pi_{\alpha}$, the particles in the same cluster $C_0$ of $a$ may be close to each other. Moreover, since the kinetic energy of every channel is larger than or equal to $E$, we have $\lvert x\rvert >2Et$. Thus $x_{ij}$ for $(ij)\nleq a$ necessarily satisfies $C^{-1}t<\lvert x_{ij}\rvert<Ct$ for some constant $C>0$, since $\lvert x_{ij}\rvert\geq C_1\lvert x\rvert-C_2\lvert x^a\rvert$ for some $C_1,C_2>0$. Therefore we have $\lvert\tilde V_a(x_a)\rvert\sim t^{-\rho}$. Thus the part of the state $\phi_{\beta}(x^b)\otimes \psi_{\beta}(x_b,t)$ converted into $\mathrm{Ran} \, \pi_{\alpha}$ during the time interval $[t,t+\Delta t]$ is estimated as $\sim \tilde \lvert V_a\rvert\Delta t\sim t^{-\rho}\Delta t$ which implies the bound $t^{-\rho}$ in our propagation estimates. 

In more detail we can explain the estimate above as follows. When $\phi_{\beta}(x^b)\otimes \psi_{\beta}(x_b,t)$ is regarded as an element in $L^2(X^a)\otimes L^2(X_a)=L^2(X)$ it may include the state $\phi(x^a)\otimes\psi_1(x_a,t)$ with $\phi(x^a)\in \mathcal H_{ac}(H^a)$  or $\phi(x^a)\in\mathrm{Ran}\, \pi_{\alpha'}$ and $\psi_1(x_a,t)\in L^2(X_a)$ where $\mathcal H_{ac}(H^a)$ is the absolutely continuous subspace of $H^a$ and $\alpha'\neq \alpha$ is a channel. If $\tilde V_a=0$, $\phi(x^a)\otimes\psi_1(x_a,t)$ cannot be converted into $\phi_{\alpha}(x^a)\otimes \psi_2(x_a,t)$ with $\psi_2(x_a,t)\in L^2(X_a)$, since $\phi(x^a)$ and $\phi_{\alpha}(x^a)$ are in the subspaces corresponding to the different spectrum of $H^a$. The transition between different states $u$ and $v$ of unperturbed Hamiltonian ($H^a$ in the case above) by a perturbation $V$ ($\tilde V_a$ in the case above) during $[t,t+\Delta t]$ is estimated by $(v,Vu)\Delta t$ (see, e.g., Schiff \cite{Sc}).

However, for the process as above to happen effectively, it seems that $H^a$ need to have negative thresholds besides the negative eigenvalues of $H^a$, that is, for some $b\lneqq a$, $H^b$ need to have negative eigenvalues, because otherwise the estimates for $s>\rho$ hold (cf. G\'erard \cite[Theorem 2.2 (ii)]{Ge} and Wang \cite[Theorem 2.5]{Wa2}).

\subsection*{Acknowledgment}
This work was supported by JSPS KAKENHI Grant Number JP16J05967.

\appendix
\renewcommand{\theequation}{A.\arabic{equation}}
\setcounter{equation}{0}
\renewcommand{\thethm}{A.\arabic{thm}}
\setcounter{thm}{0}
\section{Lemmas concerned with domains of operators}
We assume throughout the appendix that $\mathcal H=L^2(X)$, $H$ is a N-body Hamiltonian and $a$ is a cluster decomposition with $\#a=2$.
\begin{as}\label{commutatoras}
Let $A$ be a selfadjoint operator on $\mathcal H$ such that $\forall n\in \mathbb N,\ \mathcal S=\mathcal S(X)\subset\mathcal D(A)\cap\mathcal D(H)\cap \mathcal D(\lvert x^a\rvert^n)$. Assume moreover\\
(1) With $\mathrm{ad}^0_A(H)=H$ and for any $n\in \mathbb N$ the form (defined iteratively)  $i^n\mathrm{ad}_A^n(H)=i[i^{n-1}\mathrm{ad}^{n-1}_A(H),A]$ on $\mathcal S$ extends to a symmetric operator with domain $\mathcal D(H)\cap \mathcal D(\lvert x^a\rvert^n)$.\\
(2) For any $t\in \mathbb R$ such that $\lvert t\rvert <1$ and any $n\in \mathbb N,\ e^{itA}(\mathcal D(H)\cap \mathcal D(\lvert x^a\rvert^n))\subset \mathcal D(H)\cap \mathcal D(\lvert x^a\rvert^n),\ e^{itA}\mathcal S\subset \mathcal S$ and there exists $C>0$ such that for any $u\in\mathcal D(H)\cap \mathcal D(\lvert x^a\rvert^n),\ \lVert \mathrm{ad}^n_A(H)e^{itA}u\rVert<C$.
\end{as}

\begin{lem}\label{commutatorlem}
Suppose Assumption \ref{commutatoras}. Then for any $u,v \in \mathcal D(H)\cap\mathcal D(A)\cap\mathcal D(\lvert x^a\rvert^{n+1})$
$$i(i^n\mathrm{ad}^n_A(H)u,Av)-i(Au,i^n\mathrm{ad}^n_A(H)v)=(u,i^{n+1}\mathrm{ad}^{n+1}_A(H)v)$$
\end{lem}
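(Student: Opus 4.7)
Set $B_k := i^k\mathrm{ad}_A^k(H)$ for brevity, so that the recursive definition in Assumption~\ref{commutatoras}(1) reads $B_{n+1} = i[B_n, A]$ as a form on $\mathcal{S}$, and each $B_k$ is a symmetric operator on $\mathcal{D}(H)\cap \mathcal{D}(|x^a|^k)$. The identity to prove is
$$i(B_n u, Av) - i(Au, B_n v) = (u, B_{n+1} v), \qquad u,v \in \mathcal{D}_{n+1}:=\mathcal{D}(H)\cap \mathcal{D}(A)\cap \mathcal{D}(|x^a|^{n+1}).$$
On $\mathcal{S}$ the identity is essentially the definition of $B_{n+1}$: since $\mathcal{S}\subset \mathcal{D}(A)\cap \mathcal{D}(B_n)$, the symmetry of $A$ and $B_n$ gives directly
$$(u, B_{n+1}v) = i(u, B_n A v) - i(u, A B_n v) = i(B_n u, A v) - i(Au, B_n v).$$

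To extend to $\mathcal{D}_{n+1}$ I would identify both sides with $F'(0)$, where $F(s):=(e^{isA}u, B_n e^{isA} v)$ for $|s|<1$. By Assumption~\ref{commutatoras}(2), $e^{isA}$ preserves the domain of $B_n$ and $\|B_n e^{isA}v\|$ is uniformly bounded, so $F$ is well-defined and bounded on $(-1,1)$.

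For the first identification, decompose $F(s) - F(0) = (e^{isA}u - u, B_n e^{isA}v) + (u, B_n e^{isA}v - B_n v)$. Stone's theorem gives $s^{-1}(e^{isA}w - w) \to iAw$ strongly for $w\in \mathcal{D}(A)$. The uniform bound on $\|B_n e^{isA}v\|$ together with symmetry of $B_n$, applied in the pairing $(z, B_n e^{isA}v) = (B_n z, e^{isA}v) \to (B_n z, v)$ for $z$ in the dense subspace $\mathcal{D}(B_n)$, yields $B_n e^{isA}v \rightharpoonup B_n v$ weakly; symmetry then rewrites the second summand as $(B_n u, e^{isA}v - v)$. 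Combining strong and weak limits,
$$F'(0) = i(B_n u, Av) - i(Au, B_n v),$$
which is the LHS of the lemma. For the second identification, $e^{isA}\mathcal{S}\subset \mathcal{S}$ by (2), so the calculation on $\mathcal{S}$ applies at every $s$ to $e^{isA}u, e^{isA}v\in \mathcal{S}$, giving $F'(s) = (e^{isA}u, B_{n+1} e^{isA}v)$. Integrating and sending $t\to 0$ yields $F'(0) = (u, B_{n+1}v)$ when $u,v\in \mathcal{S}$; since both sides of the integrated identity
$$F(t) - F(0) = \int_0^t (e^{isA}u, B_{n+1} e^{isA}v)\, ds$$
are sesquilinear and continuous in the graph norms of $B_n$ and $B_{n+1}$ (uniformly in $s$, by A.1(2)), the identity extends to $u,v\in \mathcal{D}_{n+1}$ by density of $\mathcal{S}$ in these graph norms. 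Equating the two expressions for $F'(0)$ completes the proof.

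The main obstacle is the last density step: verifying that $\mathcal{S}$ is a simultaneous core for $A$, $B_n$ and $B_{n+1}$. The invariance $e^{itA}\mathcal{S}\subset \mathcal{S}$ together with the uniform boundedness clause of Assumption~\ref{commutatoras}(2) is precisely what enables a Friedrichs-type mollification $w\mapsto \int \rho_\epsilon(s) e^{isA}w\, ds$ (followed by standard $\mathcal{S}$-smoothing) that produces the required $\mathcal{S}$-approximations converging in all three graph norms simultaneously.
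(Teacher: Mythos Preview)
Your approach is essentially the paper's: both compute the derivative at $0$ of the same function $F(s)=(e^{isA}u,B_n e^{isA}v)$ in two ways---first directly (your ``first identification'', which is exactly the paper's opening computation for general $u,v\in\mathcal D_{n+1}$ via weak continuity of $s\mapsto B_n e^{isA}v$), and then via the integrated form of $F'(s)=(e^{isA}u,B_{n+1}e^{isA}v)$ established on $\mathcal S$.

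The difference is the density step, and here you make life harder than necessary. You ask for $\mathcal S$ to be a simultaneous core for $A$, $B_n$ and $B_{n+1}$, flag this as the main obstacle, and sketch a mollification. The paper instead takes sequences $u_m,v_k\in\mathcal S$ with $u_m\to u$ and $v_k\to v$ \emph{in $\mathcal H$ only}, and passes to the limit one variable at a time. With $v_k\in\mathcal S$ fixed, symmetry puts $B_n$ and $B_{n+1}$ onto $e^{isA}v_k$, so the $m\to\infty$ limit is immediate; then with $u\in\mathcal D_{n+1}$ fixed, the operators move back onto $e^{isA}u$ (which lies in $\mathcal D(B_{n+1})$ by Assumption~\ref{commutatoras}(2)), and the $k\to\infty$ limit follows the same way. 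No core property is needed, only density of $\mathcal S$ in $\mathcal H$. Your mollification sketch, by contrast, is not complete: the $A$-mollified vector $\int\rho_\epsilon(s)e^{isA}w\,ds$ lands in $\mathcal D(A^\infty)$ but not in $\mathcal S$, and the subsequent ``standard $\mathcal S$-smoothing'' you invoke has no reason to respect the graph norms of $A$ and $B_n$ simultaneously.
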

\begin{proof}
We define $(ad_A^n(H))(t):=e^{-itA}ad_A^n(H)e^{itA}$ which has the domain including $\mathcal D(H)\cap\mathcal D(\lvert x^a\rvert^n)$ by the Assumption \ref{commutatoras}(2). Then, since
$$i^n\mathrm{ad}_A^n(H)e^{i(t+s)A}u,$$
is bounded uniformly with respect to $t$ and weekly continuous, we have for any $s,t\in \mathbb R$ and $u,v \in \mathcal D(H)\cap\mathcal D(A)\cap\mathcal D(\lvert x^a\rvert^{n+1})$,
\begin{equation}\label{myeq ap0}
\begin{split}
\lim_{t\to 0}\frac{1}{t}((i^n(\mathrm{ad}_A^n(H))&(t+s)-i^n(\mathrm{ad}_A^n(H))(s))u,v)\\
&=\lim_{t\to 0}\frac{1}{t}((i^n\mathrm{ad}_A^n(H)e^{i(t+s)A}u,(e^{i(t+s)A}-e^{isA})v)\\
&\quad +(i^n\mathrm{ad}_A^n(H)(e^{i(t+s)A}-e^{isA})u,e^{isA}v)).\\
&=i(i^n\mathrm{ad}^{n+1}_A(H)e^{isA}u,Ae^{isA}v)\\
&\quad -i(Ae^{isA}u,i^n\mathrm{ad}^n_A(H)e^{isA}v).
\end{split}
\end{equation}

Therefore for any $u,v\in\mathcal S$ we have
\begin{equation*}
\frac{d}{dt}(i^n\mathrm{ad}_A^n(H)(t)u,v)=(e^{itA}u,i^{n+1}\mathrm{ad}^{n+1}_A(H)e^{itA}v).
\end{equation*}
Integrating from $0$ to $t$ we have
\begin{equation}\label{myeq ap1}
\frac{1}{t}((i^n\mathrm{ad}_A^n(H)(t)-i^n\mathrm{ad}_A^n(H))u,v)=\frac{1}{t}\int_0^t(e^{isA}u,i^{n+1}\mathrm{ad}^{n+1}_A(H)e^{isA}v)ds.
\end{equation}
For $u,v \in\mathcal D(H)\cap\mathcal D(A)\cap\mathcal D(\lvert x^a\rvert^{n+1})$ we have a sequence $u_m,v_k\in \mathcal S$ such that $u_m\to u,\ \mathrm {as}\ m\to \infty,\ v_k\to v,\ \mathrm {as}\ k\to \infty$. Letting $m\to\infty,\ k\to \infty$ in order in the equation \eqref{myeq ap1} for $u_m,v_k$, we obtain \eqref{myeq ap1} for $u,v \in\mathcal D(H)\cap\mathcal D(A)\cap\mathcal D(\lvert x^a\rvert^{n+1})$. Letting $t\to 0$, by \eqref{myeq ap0} the statement follows.
\end{proof}

\begin{as}\label{commutatoras2}
Let $t_0>0$ and $A(\tau)$ be a selfadjoint operator satisfying Assumption \ref{commutatoras} for all $\tau>t_0$. Assume moreover\\
(1)If $m\geq n$,\ then for any $u\in\mathcal D(H),\ \mathrm{ad}_{A(\tau)}^n(H)\langle x^a\rangle^{-m}\mathcal D(H)\subset \mathcal D(\lvert x^a\rvert^{m-n})$ and $\langle x^a\rangle^{m-n}\mathrm{ad}_{A(\tau)}^n(H)\langle x^a\rangle^{-m}(H+i)^{-1}$ is a continuous $\mathcal L(\mathcal H)$-valued function of $\tau$. Moreover,
$$\langle x^a\rangle^{m-n}\mathrm{ad}_{A(\tau)}^n(H)\langle x^a\rangle^{-m}(H+i)^{-1}=O(1),\ \mathrm {as}\ \tau\to \infty.$$
(2)For any $n\geq 0,\ (A(\tau)+i\lambda)^{-1}\mathcal D(\langle x^a\rangle^n)\subset\mathcal D(\langle x^a\rangle^n)$ and $(A(\tau)+i\lambda)^{-1}\mathcal D(H)\subset\mathcal D(H)$ for sufficiently large real $\lambda$ and for any $u\in\mathcal D(\langle x^a\rangle^n)$,
$$\mathrm{ad}_{A(\tau)}^n(H)i\lambda(A(\tau)+i\lambda)^{-1}(H+i)^{-1}u\to\mathrm{ad}_{A(\tau)}^n(H)(H+i)^{-1}  u,\ \mathrm{as}\ \lvert\lambda\rvert \to \infty.$$
\end{as}

\begin{lem}\label{commutatorlem2}
Suppose Assumption \ref{commutatoras2}.
Then for all $z$ such that $\mathrm{Im}z\neq 0$, the form
$i^n\mathrm{ad}^n_{A(\tau)}((H+z)^{-1})$ (resp., $i^n\mathrm{ad}^n_{A(\tau)}(f(H))$) which is defined iteratively
with $f\in C_0^{\infty}(\mathbb R)$) on $D({A(\tau)})\cap D(\lvert x^a\rvert^n)$ extends to a symmetric operator on $D(\lvert x^a\rvert^n)$ such that for all $u\in D(\lvert x^a\rvert^n)$, $\mathrm{ad}_{A(\tau)}^n((H+z)^{-1})u$ ($\mathrm{resp}., \mathrm{ad}_{A(\tau)}^n(f(H))u$) is continuous with respect to $\tau$ and
$\mathrm{ad}_{A(\tau)}^n((H+z)^{-1})u=O(1)$ (resp., $\mathrm{ad}_{A(\tau)}^n(f(H))u=O(1)$), $\ \mathrm{as}\ \tau\to \infty$.
\end{lem}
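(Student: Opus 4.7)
My plan is to prove the lemma by induction on $n$, the main tool being an iterated resolvent commutator identity. Writing $R(z):=(H+z)^{-1}$ for $\mathrm{Im}\, z\neq 0$, I would first establish by induction the combinatorial expansion
\begin{equation*}
i^n \mathrm{ad}^n_{A(\tau)}(R(z)) = \sum_{\substack{k_1+\cdots+k_m=n\\k_j\geq 1}} c_{k_1,\dots,k_m}\, R(z)\, i^{k_1}\mathrm{ad}^{k_1}_{A(\tau)}(H)\, R(z)\cdots i^{k_m}\mathrm{ad}^{k_m}_{A(\tau)}(H)\, R(z),
\end{equation*}
with integer coefficients $c_{k_1,\dots,k_m}$ produced by the Leibniz rule. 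Since $A(\tau)$ is unbounded, I would verify the base case $n=1$ by replacing $A(\tau)$ with the bounded approximant $A_\lambda(\tau):=i\lambda A(\tau)(A(\tau)+i\lambda)^{-1}$, checking the identity $[A_\lambda(\tau),R(z)] = -R(z)[A_\lambda(\tau),H]R(z)$ strictly, and passing to the limit $\lambda\to\infty$ on vectors in $D(\langle x^a\rangle)\cap D(H)$ using Assumption \ref{commutatoras2}(2); the domain invariance there is what guarantees that the intermediate vectors stay where they are needed. The inductive step is carried out by the same approximation argument applied to the formula already established at level $n-1$.

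Next I would show that, read from right to left, the right-hand side of the expansion defines an operator on $D(\langle x^a\rangle^n)$ with the desired continuity and uniform boundedness in $\tau$. For each summand I would insert intermediate weights $\langle x^a\rangle^{\pm\ell_j}$ to isolate the building blocks $\langle x^a\rangle^{\ell_{j-1}-k_j}\mathrm{ad}^{k_j}_{A(\tau)}(H)\langle x^a\rangle^{-\ell_{j-1}}(H+i)^{-1}$, which by Assumption \ref{commutatoras2}(1) are continuous and uniformly bounded in $\mathcal L(\mathcal H)$. Choosing $\ell_0=n$ and $\ell_j=\ell_{j-1}-k_j$, absorbing the auxiliary $(H+i)^{-1}$ factors into the adjacent $R(z)$ via $(H+i)R(z) = I+(i-z)R(z)$, and commuting the remaining weights $\langle x^a\rangle^{\ell_j}$ through $R(z)$ modulo controlled lower order contributions that can be treated inductively, the entire product is seen to map $D(\langle x^a\rangle^n)$ into $\mathcal H$ boundedly, continuously, and uniformly in $\tau$. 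Symmetry follows from the symmetry of each $i^{k_j}\mathrm{ad}^{k_j}_{A(\tau)}(H)$ provided by Assumption \ref{commutatoras}(1), together with $R(z)^*=R(\bar z)$ and the iterative definition.

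For the $f(H)$ statement I would invoke the Helffer--Sj\"ostrand formula, expressing $f(H)$ as a contour integral of resolvents $(H-w)^{-1}$ weighted by $\bar\partial_w\tilde f(w)$ with $\tilde f$ an almost analytic extension of $f$, and commute the iterated commutator through the integral. Absolute convergence after commutation is justified because $\bar\partial_w\tilde f(w)=O(|\mathrm{Im}\, w|^N)$ for arbitrarily large $N$, whereas the expansion for $i^n\mathrm{ad}^n_{A(\tau)}((H-w)^{-1})$ contains at most $n+1$ resolvent factors and therefore grows at most polynomially in $|\mathrm{Im}\, w|^{-1}$. Strong operator continuity and uniform boundedness in $\tau$ then transfer from the resolvent case by dominated convergence.

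The main obstacle is the rigorous justification of the single commutator identity and its iterates for the unbounded $A(\tau)$. The formal manipulation $[A(\tau),R(z)]= -R(z)[A(\tau),H]R(z)$ requires passing to the bounded approximants $A_\lambda(\tau)$, verifying that $(A(\tau)+i\lambda)^{-1}$ preserves $D(\langle x^a\rangle^n)$ and $D(H)$ simultaneously, and taking $\lambda\to\infty$ in the strong sense, exactly as provided by Assumption \ref{commutatoras2}(2). Once this step is settled, everything else reduces to combinatorial bookkeeping driven by the weight structure of Assumption \ref{commutatoras2}(1).
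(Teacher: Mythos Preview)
Your proposal is correct and follows essentially the same route as the paper: both establish by induction the expansion of $i^n\mathrm{ad}^n_{A(\tau)}((H+z)^{-1})$ into products of resolvents and iterated commutators $\mathrm{ad}^{k_j}_{A(\tau)}(H)$, justify the unbounded commutator step via the bounded approximants $i\lambda A(\tau)(A(\tau)+i\lambda)^{-1}$ and Assumption~\ref{commutatoras2}(2) (the paper invokes Lemma~\ref{commutatorlem} at this point), read off the domain, continuity and $O(1)$ bounds from Assumption~\ref{commutatoras2}(1), and then derive the $f(H)$ statement from the Helffer--Sj\"ostrand formula. The only cosmetic difference is that the paper records the domain control directly via $(H+i)^{-1}\mathcal D(|x^a|^n)\subset\mathcal D(|x^a|^n)$ rather than through your explicit weight-insertion scheme, but the content is the same.
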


\begin{proof}
We prove by the induction in $n$ that $\mathrm{ad}^n_{A(\tau)}((H+z)^{-1})$ is the sum of terms of the form
\begin{equation}\label{myeqap.1}
C(H+z)^{-1}\mathrm{ad}_{A(\tau)}^{n_1}(H)(H+z)^{-1}\dotsm\mathrm{ad}_{A(\tau)}^{n_k}(H)(H+z)^{-1},
\end{equation}
where $\sum_{j=1}^kn_j=n$ and $C$ is a constant. Note that by Assumption \ref{commutatoras2} and that for any $n\geq 0$ we have $(H+i)^{-1}D(\lvert x^a\rvert^n)\subset \mathcal D(\lvert x^a\rvert^n)$, the domain of \eqref{myeqap.1} includes $\mathcal D(\lvert x^a\rvert^n),\ n=\sum_{j=1}^kn_j$, and moreover,
$$\langle x^a\rangle^{m}(H+z)^{-1}\mathrm{ad}_{A(\tau)}^{n_1}(H)\dotsm(H+z)^{-1}\mathrm{ad}_{A(\tau)}^{n_k}(H)(H+z)^{-1}u=O(1),$$
$\mathrm{as}\ \tau\to \infty$ for any $m\leq n$.

For $n=0$ this statement holds. Suppose the statement for $n$. We set $B_j:=\mathrm{ad}_{A(\tau)}^{n_j}(H)(H+z)^{-1}$. Then as a form on $\mathcal D(A(\tau))\cap\mathcal D(\lvert x^a\rvert^n)$,
\begin{align*}
[(H+z)^{-1}\prod_{j=1}^mB_j,A(\tau)]&=\lim_{\lambda\to \infty}\left[(H+z)^{-1}\prod_{j=1}^mB_j,\frac{i\lambda A(\tau)}{A(\tau)+i\lambda}\right]\\
&=\lim_{\lambda\to \infty}\left[(H+z)^{-1}\prod_{j=1}^mB_j,\frac{\lambda^2}{A(\tau)+i\lambda}\right]\\
&=\lim_{\lambda\to \infty}\left[(H+z)^{-1},\frac{\lambda^2}{A(\tau)+i\lambda}\right]\prod_{j=1}^mB_j\\
&\quad+\sum_{j=1}^m\lim_{\lambda\to \infty}(H+z)^{-1}B_1\dotsm B_{j-1}\\
&\qquad\cdot\left[B_j,\frac{\lambda^2}{A(\tau)+i\lambda}\right]B_{j+1}\dotsm B_m.
\end{align*}
Here we note that by Assumption \ref{commutatoras2}
$$B_{j+1}\dotsm B_m\mathcal D(\lvert x^a\rvert^{n+1})\subset\mathcal D(\lvert x^a\rvert^{n+1-\sum_{\ell=j+1}^{j+1}n_{\ell}})\subset\mathcal D(\lvert x^a\rvert^{n_j+1}).$$
$$(B_1\dotsm B_{j-1})^*(H-\bar z)^{-1}\mathcal D(\lvert x^a\rvert^{n+1})\subset\mathcal D(\lvert x^a\rvert^{n_j+1}).$$
Thus by Lemma \ref{commutatorlem} we have
\begin{align*}
&\lim_{\lambda\to \infty}(H+z)^{-1}B_1\dotsm B_{j-1}\left[B_j,\frac{\lambda^2}{A(\tau)+i\lambda}\right]B_{j+1}\dotsm B_m\\
&\quad=\lim_{\lambda\to \infty}\Big(-(H-z)^{-1}B_1\dotsm B_{j-1}\frac{\lambda}{A(\tau)+i\lambda}\mathrm{ad}_{A(\tau)}^{n_j+1}(H)\frac{\lambda}{A(\tau)+i\lambda}(H+z)^{-1}\\
&\qquad\cdot B_{j+1}\dotsm B_m+(H-z)^{-1}B_1\dotsm B_{j-1}\mathrm{ad}_{A(\tau)}^{n_j}(H)(H+z)^{-1}\frac{\lambda}{A(\tau)+i\lambda}\\
&\qquad\quad\cdot\mathrm{ad}_{A(\tau)}^1(H)\frac{\lambda}{A(\tau)+i\lambda}(H+z)^{-1}B_{j+1}\dotsm B_m\Big)\\
&\quad=(H-z)^{-1}B_1\dotsm B_{j-1}\mathrm{ad}_{A(\tau)}^{n_j+1}(H)(H+z)^{-1}B_{j+1}\dotsm B_m\\
&\qquad-(H-z)^{-1}B_1\dotsm B_{j-1}\mathrm{ad}_{A(\tau)}^{n_j}(H)(H+z)^{-1}\mathrm{ad}_{A(\tau)}^1(H)(H+z)^{-1}B_{j+1}\dotsm B_m.
\end{align*}
Thus the statement holds for $n+1$.

By Assumption \ref{commutatoras2} and the form \eqref{myeqap.1}, for all $u\in D(\lvert x^a\rvert^n)$, $\mathrm{ad}_{A(\tau)}^n((H+z)^{-1})u$ is continuous with respect to $\tau$ and $\mathrm{ad}_{A(\tau)}^n((H+z)^{-1})u=O(1),\ \mathrm{as}\ \tau\to \infty$.

The result for $\mathrm{ad}_{A(\tau)}^n(f(H))$ follows from that of $\mathrm{ad}_{A(\tau)}^n((H+z)^{-1})$ and the formula
\begin{equation}\label{HS formula}
f(H)=\frac{1}{2\pi i}\int_{\mathbb C}\bar \partial_zF(z)(z-H)^{-1}dz\wedge d\bar z
\end{equation}
where $F$ is an almost analytic extension of $f$ (see Helffer-Sj\"ostrand \cite{HeSj}).
\end{proof}

\begin{lem}\label{commHeSj}
Let $P\in \mathcal L(\mathcal H)$ and $A$ be a selfadjoint operator on $\mathcal H$. Assume that for any $n\in \mathbb N$ the form (defined iteratively) $i^n\mathrm{ad}^n_{A}(P)$ on $D(A)\cap D(\lvert x^a\rvert^n)$ extends to a symmetric operator on $D(\lvert x^a\rvert^n)$ and for any $z,\ \mathrm{Im}\ z\neq0$, $[\langle x^a\rangle^{-1},(A-z)^{-1}]=0$. Then we have for any $g\in C_0^{\infty}(\mathbb R)$, $k\in \mathbb N$,
\begin{align*}
i[P,g(A)]=&i\sum_{m=1}^{k}\frac{(-1)^{m+1}}{m!}\mathrm{ad}^m_A(P)g^{(m)}(A)\\
&+\frac{1}{2\pi}\int_{\mathbb C}\bar\partial_zG(z)(A-z)^{-1}\mathrm{ad}^{k+1}_A(P)(A-z)^{-(k+1)}dz\wedge d\bar z,
\end{align*}
on $\mathcal D(\langle x^a\rangle^{k+1})$ where $G$ is an almost analytic extension of $g$.
\end{lem}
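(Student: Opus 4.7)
The plan is to derive the identity by iterating the elementary commutator formula $[P,(A-z)^{-1}]=-(A-z)^{-1}\mathrm{ad}_A(P)(A-z)^{-1}$ inside the Helffer--Sjöstrand representation of $g(A)$. First I would record the auxiliary formulas: from \eqref{HS formula} (applied to $A$ in place of $H$ and $g$ in place of $f$) one has $g(A)=\frac{1}{2\pi i}\int_{\mathbb{C}}\bar\partial_z G(z)(z-A)^{-1}dz\wedge d\bar z$, and differentiating under the integral sign,
\begin{equation*}
g^{(m)}(A)=\frac{(-1)^{m+1}m!}{2\pi i}\int_{\mathbb{C}}\bar\partial_z G(z)(A-z)^{-(m+1)}dz\wedge d\bar z,\quad m\geq 1,
\end{equation*}
so that $\frac{1}{2\pi i}\int_{\mathbb{C}}\bar\partial_z G(z)(A-z)^{-(m+1)}dz\wedge d\bar z=\frac{(-1)^{m+1}}{m!}g^{(m)}(A)$. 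Because $\bar\partial_z G$ vanishes to infinite order on $\mathbb{R}$ and $\lVert (A-z)^{-1}\rVert\leq|\mathrm{Im}\,z|^{-1}$, these integrals converge absolutely in $\mathcal{L}(\mathcal{H})$.

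Next I would establish, on $\mathcal{D}(\langle x^a\rangle^{k+1})$ and for each $m\leq k$, the form identity
\begin{equation*}
[P,(A-z)^{-1}]=-\sum_{m=1}^{k}\mathrm{ad}_A^{m}(P)(A-z)^{-(m+1)}-(A-z)^{-1}\mathrm{ad}_A^{k+1}(P)(A-z)^{-(k+1)}.
\end{equation*}
The base case $k=0$ is the classical identity, valid on $\mathcal{D}(\langle x^a\rangle)$ since $\mathrm{ad}_A(P)$ extends there by hypothesis. For the inductive step one applies the same identity to the tail $-(A-z)^{-1}\mathrm{ad}_A^{m}(P)(A-z)^{-(m+1)}$, writing $(A-z)^{-1}\mathrm{ad}_A^{m}(P)=\mathrm{ad}_A^{m}(P)(A-z)^{-1}+[(A-z)^{-1},\mathrm{ad}_A^{m}(P)]$ and using that $[\mathrm{ad}_A^{m}(P),(A-z)^{-1}]=-(A-z)^{-1}\mathrm{ad}_A^{m+1}(P)(A-z)^{-1}$ in the form sense. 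The crucial point here is domain preservation: the hypothesis $[\langle x^a\rangle^{-1},(A-z)^{-1}]=0$ implies $[(A-z)^{-1},\langle x^a\rangle^{j}]=0$ for every $j\in\mathbb{N}$, so $(A-z)^{-1}$ preserves each $\mathcal{D}(\langle x^a\rangle^{j})$; this lets each $\mathrm{ad}_A^{m}(P)$ act on vectors produced by the resolvents on the right.

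Having established the commutator expansion, I would then apply $\frac{1}{2\pi i}\int_{\mathbb{C}}\bar\partial_z G(z)(\cdot)\,dz\wedge d\bar z$ to both sides and use $(z-A)^{-1}=-(A-z)^{-1}$ together with the differentiated Helffer--Sjöstrand formula above. The sum terms yield
\begin{equation*}
\sum_{m=1}^{k}\mathrm{ad}_A^{m}(P)\frac{(-1)^{m+1}}{m!}g^{(m)}(A),
\end{equation*}
while the remainder stays as a convergent integral. Multiplying by $i$ produces the claimed formula.

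The main obstacle is the bookkeeping of domains during the iteration: since $\mathrm{ad}_A^{m}(P)$ is only a form on $\mathcal{D}(\langle x^a\rangle^{m})$ (not a bounded operator), one has to verify at each step that the operator composition on the right-hand side makes sense when applied to an element of $\mathcal{D}(\langle x^a\rangle^{k+1})$. This is exactly where the commutation assumption $[\langle x^a\rangle^{-1},(A-z)^{-1}]=0$ does the work, and also what allows one to justify Fubini and the interchange of $\int\bar\partial_z G(z)\,dz\wedge d\bar z$ with the algebraic commutator manipulations on any fixed vector in $\mathcal{D}(\langle x^a\rangle^{k+1})$. Once the operator identity is pinned down on this dense subspace, nothing further is needed since the formula is stated there.
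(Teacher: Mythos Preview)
Your proposal is correct and follows essentially the same route as the paper: establish the iterated resolvent identity
\[
[P,(A-z)^{-1}]=-\sum_{m=1}^{k}\mathrm{ad}^m_A(P)(A-z)^{-(m+1)}-(A-z)^{-1}\mathrm{ad}^{k+1}_A(P)(A-z)^{-(k+1)}
\]
on $\mathcal D(\langle x^a\rangle^{k+1})$ and then integrate against the Helffer--Sj\"ostrand kernel. The paper's proof is simply a terser version of what you wrote; your additional remarks on the differentiated formula for $g^{(m)}(A)$ and on how $[\langle x^a\rangle^{-1},(A-z)^{-1}]=0$ guarantees domain preservation through the iteration are exactly the details the paper leaves implicit.
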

\begin{proof}
As a form on $\mathcal D(\langle x^a\rangle)$
$$[P,(A-z)^{-1}]=-(A-z)^{-1}\mathrm{ad}_A^1(P)(A-z)^{-1}.$$
This identity extends to an identity between operators on $\mathcal D(\langle x^a\rangle)$. Iterating the commutation we have
$$[P,(A-z)^{-1}]=-\sum_{m=1}^{k}\mathrm{ad}^m_A(P)(A-z)^{-(m+1)}-(A-z)^{-1}\mathrm{ad}^{k+1}_A(P)(A-z)^{-(k+1)},$$
on $\mathcal D(\langle x^a\rangle^{k+1})$. Thus by \eqref{HS formula} we obtain the result.
\end{proof}

\begin{lem}\label{communi}
Suppose $A$ satisfies Assumption \ref{commutatoras} and for all $n\in \mathbb N$, $\mathcal S$ is a core of $A^n$. Assume moreover, for any $m\geq n$ and $u\in\mathcal D(H),\ \mathrm{ad}_{A}^n(H)\langle x^a\rangle^{-m}\mathcal D(H)\subset \mathcal D(\lvert x^a\rvert^{m-n})$, and for any $r \geq 0,\ n\in\mathbb N,\ \langle x^a\rangle^rA^n(H+i)^{-n}\langle x\rangle^{-n-r}\in \mathcal L(\mathcal H)$. Then, for any $f\in C_0^{\infty}(\mathbb R)$, $e^{-itH}f(H)\langle x\rangle^{-s}\mathcal H\subset \mathcal D(\langle A\rangle^s)$ and $\langle A\rangle^se^{-itH}f(H)\langle x\rangle^{-s}$ is a continuous $\mathcal L(\mathcal H)$-valued function of $t$.
\end{lem}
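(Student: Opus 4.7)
The plan is to treat integer $s=n$ first, then interpolate to obtain the claim for all $s\geq 0$, and finally deduce norm continuity in $t$.

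For integer $n$, set $g(\lambda):=(\lambda+i)^n f(\lambda)\in C_0^\infty(\mathbb R)$, so that $(H+i)^n f(H)=g(H)\in\mathcal L(\mathcal H)$. Using that $(H+i)^{\pm n}$ commutes with $e^{-itH}$, one obtains the formal factorization
\begin{equation*}
A^n e^{-itH}f(H)\langle x\rangle^{-n}
= \bigl(A^n(H+i)^{-n}\langle x\rangle^{-n}\bigr)\,\bigl(\langle x\rangle^n e^{-itH}g(H)\langle x\rangle^{-n}\bigr).
\end{equation*}
The first parenthesis is bounded by the second hypothesis applied with $r=0$. The second parenthesis is a standard $N$-body propagation-of-moments bound, locally uniform in $t$: it follows from differentiating $\lVert\langle x\rangle^n e^{-itH}g(H)\varphi\rVert^2$, controlling $i[H,\langle x\rangle^{2n}]$ by $C\langle x\rangle^{2n-1}\langle p\rangle$, using $\langle p\rangle^k g(H)\in\mathcal L(\mathcal H)$ from the compact support of $g$, and applying Gronwall's inequality.

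To promote this formal bound into a genuine domain inclusion $e^{-itH}f(H)\langle x\rangle^{-n}\mathcal H\subset\mathcal D(A^n)$, I would regularize by $A^n_\lambda:=(i\lambda)^n A^n(A+i\lambda)^{-n}\in\mathcal L(\mathcal H)$, which converges strongly to $A^n$ on $\mathcal D(A^n)$ as $\lambda\to\infty$. The hypothesis that $\mathcal S$ is a core of $A^n$, together with Assumption \ref{commutatoras2}(2), allows the same factorization to be carried out with $A^n_\lambda$ in place of $A^n$ on $\mathcal S$ and then extended by density, producing a bound uniform in $\lambda$. Closedness of $A^n$ then yields both the domain inclusion and the $\mathcal L(\mathcal H)$-norm bound on $\langle A\rangle^n e^{-itH}f(H)\langle x\rangle^{-n}$, since $\mathcal D(\langle A\rangle^n)=\mathcal D(A^n)$ with equivalent graph norms.

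For general $s\geq 0$, write $s=n+\theta$ with $n\in\mathbb N$ and $\theta\in[0,1)$ and apply Hadamard three-lines to the analytic family $T_z:=\langle A\rangle^{z}e^{-itH}f(H)\langle x\rangle^{-z}$ on the strip $n\leq\mathrm{Re}\,z\leq n+1$. The boundary bounds follow from the integer case combined with the unitarity of the imaginary powers $\langle A\rangle^{iy}$ and $\langle x\rangle^{iy}$, and interpolation yields the bound at $\mathrm{Re}\,z=s$. Norm continuity in $t$ then follows from this locally uniform bound via the identity $(e^{-i(t'-t)H}-I)f(H)=-i(t'-t)\int_0^1 e^{-i(t'-t)\sigma H}\,Hf(H)\,d\sigma$, which gives $\lVert\langle A\rangle^s(e^{-it'H}-e^{-itH})f(H)\langle x\rangle^{-s}\rVert=O(\lvert t'-t\rvert)$ after inserting an auxiliary cutoff $f_1$ with $f_1 f=f$. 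The hardest step, I expect, is the rigorous domain justification in the integer case through the $A_\lambda$-regularization, since the commutator calculus among $A^n$, $(H+i)^{-n}$ and $\langle x\rangle^{\pm n}$ has to be kept under control using Assumption \ref{commutatoras2} and the core property of $\mathcal S$.
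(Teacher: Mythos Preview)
Your argument is essentially correct but follows a different route from the paper. Both proofs reduce to integer $s=n$ and interpolate, but the paper does \emph{not} use your factorization $\bigl(A^n(H+i)^{-n}\langle x\rangle^{-n}\bigr)\bigl(\langle x\rangle^n e^{-itH}g(H)\langle x\rangle^{-n}\bigr)$. Instead it writes $A^n e^{-itH}f(H)(H+i)^n=\sum_{m=0}^n c_m\,\mathrm{ad}_A^m\!\bigl(e^{-itH}f(H)(H+i)^n\bigr)A^{n-m}$ as a form on $\mathcal S$, controls each $\mathrm{ad}_A^m(\cdot)$ via Lemma~\ref{commutatorlem2} and the Helffer--Sj\"ostrand formula (applied to the $t$-dependent symbol $e^{-it\lambda}f(\lambda)(\lambda+i)^n$), and then uses the hypothesis $\langle x^a\rangle^m A^{n-m}(H+i)^{-n}\langle x\rangle^{-n}\in\mathcal L(\mathcal H)$ for \emph{each} $m\le n$, not just $m=0$. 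The domain inclusion is obtained by a duality/core argument: approximate $u=(H+i)^{-n}\langle x\rangle^{-n}w$ and $v\in\mathcal D(A^n)$ by sequences in $\mathcal S$, pass to the limit in the form identity, and invoke self-adjointness of $A^n$. Continuity in $t$ falls out directly from continuity of the almost analytic extension in $t$, with no separate difference estimate.

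What each approach buys: your factorization is more elementary and uses only the $r=0$ instance of the second hypothesis, but it imports the propagation-of-moments bound $\langle x\rangle^n e^{-itH}g(H)\langle x\rangle^{-n}\in\mathcal L(\mathcal H)$, which is standard but not proved in the paper; your Gronwall sketch also needs a regularization (e.g.\ $\langle x/R\rangle^n$) to justify differentiating before you know finiteness. The paper's commutator expansion stays entirely within the machinery of the appendix (Lemmas~\ref{commutatorlem} and~\ref{commutatorlem2}) and avoids any external propagation input, at the cost of using the full family of weighted bounds on $A^{n-m}$. One small slip: you invoke Assumption~\ref{commutatoras2}(2) for the resolvent regularization, but Lemma~\ref{communi} only assumes Assumption~\ref{commutatoras}; your domain argument actually goes through without that citation, since the hypothesis $A^n(H+i)^{-n}\langle x\rangle^{-n}\in\mathcal L(\mathcal H)$ together with closedness of $A^n$ already yields $(H+i)^{-n}\langle x\rangle^{-n}\mathcal H\subset\mathcal D(A^n)$ directly.
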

\begin{proof}
By an interpolation argument, it suffices to show for any $n\in \mathbb N$,
$$e^{-itH}f(H)\langle x\rangle^{-n}\mathcal H\subset D(\langle A\rangle^n),$$
and $\langle A\rangle^ne^{-itH}f(H)\langle x\rangle^{-n}$ is a continuous $\mathcal L(\mathcal H)$-valued function of $t$.
We have
$$A^n e^{-itH}f(H)\langle x\rangle^{-n}u=A^ne^{-itH}f(H)(H+i)^{n}(H+i)^{-n}\langle x\rangle^{-n}u,\ \forall u\in \mathcal H.$$
Since Lemma \ref{commutatorlem2} can be applied to $A$ as a constant function of $\tau$, by the continuity of the almost analytic extension of $e^{-itx}f(x)(x+i)^{n}$ with respect to $t$, we have as a form on $\mathcal S$
\begin{align*}
A^ne^{-itH}f(H)(H+i)^{n}&=\sum_{m=0}^nc_mad_A^m(e^{-itH}f(H)(H+i)^{n})A^{n-m}\\
&=\sum_{m=0}^nC_m(t)A^{n-m},
\end{align*}
where $c_m$ is a constant and $C_m(t)$ is an operator such that $C_m(t)\langle x^a\rangle^{-m}$ and $C_m(t)^*\langle x^a\rangle^{-m}$ are bounded and continuous $\mathcal L(\mathcal H)$-valued functions of $t$.

For any $v\in \mathcal D(A^n)$ there exists a sequence $v_l\in \mathcal S$ such that $(A^n+i)(v-v_l)\to 0$ and since by the assumption for any $w\in \mathcal H,\ u=(H+i)^{-n}\langle x\rangle^{-n}w\in \mathcal D(A^n)$, there exists a sequence $u_k\in \mathcal S$ such that $(A^n+i)(u-u_k)\to 0$. We have for this $v_l$ and $u_k$,
$$(A^nv_l,e^{-itH}f(H)(H+i)^{n}u_k)=(v_l,\sum_{m=0}^nC_m(t)A^{n-m}u_k).$$

Since by the assumption we have $\langle x^a\rangle^{m}A^{n-m}(H+i)^{-n}\langle x\rangle^{-n}\in \mathcal L(\mathcal H)$, letting $k\to \infty$, $l\to \infty$ in order we have 
$$(A^nv,e^{-itH}f(H)(H+i)^{n}u)=(v,\sum_{m=0}^nC_m(t)A^{n-m}u).$$

Since $A^n$ is self-adjoint, this means that $e^{-itH}f(H)(H+i)^{n}u\in \mathcal D(A^n)$ and
$$A^ne^{-itH}f(H)(H+i)^{n}u=\sum_{m=0}^nC_m(t)A^{n-m}u,$$
which is a continuous $\mathcal L(\mathcal H)$-valued function of $t$.
\end{proof}

\end{document}